\documentclass[12pt]{article}

\usepackage{amsfonts,amssymb,amsmath,amsthm,epsfig,euscript,bbm,amsthm}
\usepackage{algorithm}
\usepackage{algpseudocode}
\usepackage{amsthm}
\usepackage[margin=1in]{geometry}
\usepackage{setspace}
  
\usepackage{amsfonts,amssymb,amsmath,amsthm,epsfig,euscript,bbm,amsthm}
\usepackage[usenames,dvipsnames]{xcolor}
\usepackage{algorithm}
\usepackage[dvipsnames]{xcolor}
\usepackage{algpseudocode}
\usepackage{amsthm}
\usepackage{threeparttable,subcaption}
 \usepackage{booktabs}
\newcommand{\ra}[1]{\renewcommand{\arraystretch}{#1}}
 
\usepackage{rotating,multirow,makecell,array}
\usepackage{enumitem}
\usepackage{rotating}
\usepackage{lscape}
\usepackage{float}
\usepackage{graphicx}
\usepackage{subcaption}
\usepackage{tikz}
\usetikzlibrary{arrows,calc}
\usepackage{titletoc}
 
\usepackage{rotating,multirow,makecell,array}

\usepackage{tikz}
\usetikzlibrary{arrows}
\usepackage {graphicx}
\usepackage{enumerate}
\definecolor{darkbrown}{rgb}{0.4, 0.26, 0.13}

\usepackage{subcaption}
\usepackage{caption}
\usepackage[authoryear]{natbib}
\usepackage[titletoc,title]{appendix}
\usepackage[pdftex,colorlinks]{hyperref}
\usepackage{mathtools}
 \usepackage{xcolor}

\usepackage{algorithm,algpseudocode}

\newcounter{algsubstate}
\renewcommand{\thealgsubstate}{\alph{algsubstate}}
\newenvironment{algsubstates}
  {\setcounter{algsubstate}{0}%
   \renewcommand{\State}{%
     \stepcounter{algsubstate}%
     \Statex {\footnotesize\thealgsubstate:}\space}}
  {}

\usepackage{graphicx}
\usepackage{subcaption}
 \usetikzlibrary{graphs, graphs.standard}
 \definecolor{orange}{RGB}{230,170,120}
  \definecolor{green}{RGB}{120,200,120}

 \hypersetup{
 	colorlinks=true,
 	linkcolor=blue,
 	filecolor=blue,      
 	urlcolor=cyan,
 }
 \hypersetup{colorlinks,linkcolor={blue},citecolor={blue}} 
 \urlstyle{same}
 \usepackage{geometry}                
 \usepackage{multirow}
\usepackage[section]{placeins}
 \geometry{letterpaper}                   
 \usepackage{bbm}
 \usepackage{graphicx}
 \usepackage{amssymb}
 \usepackage{epstopdf}
 \usepackage{tikz} 
 \usepackage{amsmath} 
\usepackage[utf8]{inputenc}
\usepackage{algorithm}
\usetikzlibrary{calc}
\usepackage{setspace}
 \usepackage{epstopdf}
\usetikzlibrary{arrows, chains, calc, positioning, shapes.multipart}
 \theoremstyle{plain}
 \newtheorem{thm}{Theorem}[section]
 \newtheorem{lem}[thm]{Lemma}
 \newtheorem{prop}[thm]{Proposition}
 \newtheorem{cor}{Corollary}
 \usepackage{caption}
 \theoremstyle{definition}
 \newtheorem{defn}{Definition}[section]
 
 \newtheorem{exmp}{Example}[section]
  \newtheorem{ass}{Assumption}[section]
  \usepackage{multibib}
 \theoremstyle{definition}
 \newtheorem{rem}{Remark}
 
 \makeatletter
 \def\BState{\State\hskip-\ALG@thistlm}
 \makeatother
 
 \usepackage{authblk}
 \setlength{\bibsep}{0pt plus 0.5ex}

 \DeclareGraphicsRule{.tif}{png}{.png}{`convert #1 `dirname #1`/`basename #1 .tif`.png}

\def\spacingset#1{\renewcommand{\baselinestretch}%
{#1}\small\normalsize} \spacingset{1}
 \usepackage{xr}
\externaldocument{appendix_v2}

\definecolor{coquelicot}{rgb}{1.0, 0.22, 0.0}

\algnewcommand\algorithmicforeach{\textbf{for each}}
\algdef{S}[FOR]{ForEach}[1]{\algorithmicforeach\ #1\ \algorithmicdo}
 
 
 \captionsetup{font=small}

\title{ Policy Targeting under Network Interference\footnote{Version accepted for publication at \textit{The Review of Economic Studies}. First version of the paper: June, 2019. I am grateful to Graham Elliott, James Fowler, Paul Niehaus, Yixiao Sun, and Kaspar W\"uthrich for advice and support, and 
Isaiah Andrews, Brendan Beare, Jelena Bradic, Guido Imbens, Toru Kitagawa, Michal Kolesar, Craig Mcintosh, Karthik Muralidharan, James Rauch, Fredrick Savje, Jesse Shapiro, Elie Tamer, Alex Tetenov, Ye Wang, the editor and anonymous referees for comments and discussion. I particularly thank Vikram Jambulapati for invaluable discussions at the beginning of this project. I also thank participants at numerous seminars and conferences. Jake Carlson provided excellent research assistance. The method is implemeted in the R package {\tt NetworkTargeting} available on the author's website. All mistakes are my own.  
        }     
     }
\author{
  Davide Viviano \footnote{Department of Economics, Harvard University. Correspondence: dviviano@fas.harvard.edu.}
}
\date{ This Version: \today
}

\begin{document}
\maketitle

\begin{abstract}
\noindent 
This paper studies the problem of optimally allocating treatments in the presence of spillover effects, using information from a (quasi-)experiment. I introduce a method that maximizes the sample analog of average social welfare when spillovers occur. 
I construct semi-parametric welfare estimators with known and unknown propensity scores and cast the optimization problem into a mixed-integer linear program, which can be solved using off-the-shelf algorithms.  
I derive a strong set of guarantees on regret, i.e., the difference between the maximum attainable welfare and the welfare evaluated at the estimated policy. The proposed method presents attractive features for applications:  (i) it does not require network information of the target population; (ii) it exploits heterogeneity in treatment effects for targeting individuals; (iii) it does not rely on the correct specification of a particular structural model; and (iv) it accommodates constraints on the policy function. 
An application for targeting information on social networks illustrates the advantages of the method. 

\end{abstract}

\noindent%
{\it Keywords:} Causal Inference, Welfare Maximization, Spillovers, Social Interactions. \\
{\it JEL Codes:} C10, C14, C31, C54.
\vfill

\section{Introduction} 
\onehalfspacing

Consider a policymaker who must use a quasi-experiment, such as an existing experiment or observational study, to design a decision rule (policy) that assigns treatments based on observable characteristics. The main challenge is treating an individual may generate spillovers on her friends or neighbors. Spillovers may, in turn, affect the design of the optimal policy. This paper studies the problem of allocating treatments in the presence of spillover effects to maximize welfare, using information from a quasi-experiment. 
Applications include cash-transfer programs, education programs, and information campaigns, among others \citep[e.g.,][]{egger2019general, opper2016does, bond201261}.


A (large) population of $n$ individuals is connected in a \textit{single} network. Treatments generate spillovers to neighbors in the network (i.e., network interference). Researchers randomly sample $n_e \ll n$ units in a (quasi)experiment and randomize treatments among sampled individuals and their neighbors (the remaining units are not necessarily in the experiment). They then collect sampled individuals' covariates, treatment assignments, outcomes, neighbors' covariates, and assignments. The population network is not necessarily observed. The goal is to estimate a treatment rule to deploy on the entire population. 
Consider the example of targeting information to increase insurance take-up in a region subject to environmental disasters \citep{cai2015social}. Using variation from experiment participants sampled from a random subset of villages in this region, we estimate whom to target in the entire region.


The first challenge is that the population network may be unobserved due to the cost of collecting network data on large populations. Researchers may only observe neighbors' information about the experiment participants.
Collecting network information from the individuals in the entire population, such as a region or country, is often costly or infeasible \citep[see][for a discussion]{breza2017using}. Motivated by this, I develop a method that does not require we observe the population network. I allow for arbitrary constraints on the policy space, such as informational constraints.  
A second challenge is treatment effects heterogeneity. I leverage the assumption that spillovers occur through the number of treated neighbors, as is often documented in applications, and allow for treatment effects heterogeneity in arbitrary individual characteristics (e.g., covariates and number of neighbors).\footnote{Models consistent with this restriction are models of exogenous and anonymous spillover effects; see, e.g., \cite{manski2013identification}. For instance, \cite{cai2015social} leverage a two-stage experimental design to show ``the network effect is driven by the diffusion of insurance knowledge" (i.e., treatment) ``rather than purchase decisions" (i.e., outcome) \citep[][abstract]{cai2015social}, consistent with the model proposed in this paper.  Other examples of empirical applications using models consistent with our model include \cite{sinclair2012detecting, duflo2011peer, muralidharan2017general}, where for the second reference, networks can be considered groups of classrooms with units within each classroom being fully connected. }

The proposed method, which I call Network Empirical Welfare Maximization (NEWM), estimates the welfare as a function of the policy using arbitrary estimators (e.g., based on machine learning). It then solves an exact optimization procedure over the policy space. I interpret policy targeting as a treatment choice problem \citep{manski2004, KitagawaTetenov_EMCA2018, athey2017efficient}, here studied in the context of network interference.  
I evaluate the method's performance based on its maximum regret, that is, the difference between the largest achievable welfare and the welfare from deploying the estimated policy.

 From a theoretical perspective, this paper makes three contributions: (i) it derives the first set of guarantees on the regret for treatment rules with spillovers; (ii) it introduces an estimation procedure with fast convergence rates of regret with machine-learning (non-parametric) estimators and networked units; and (iii) it shows that for a large class of policy functions, the optimization problem can be written as a mixed-integer linear program, solved using off-the-shelf optimization routines.

The analysis proceeds as follows. First, I discuss the identification of social welfare under interference. Identification relies on the unconfoundedness of treatment assignments and of the sampling indicators. I then study semi-parametric estimators for the welfare and analyze the performance of the estimated policy. I show that under regularity conditions, the regret of the estimated policy scales at the rate $1/\sqrt{n_e}$, whenever the maximum degree (i.e., the number of neighbors) is uniformly bounded \citep[e.g.,][]{de2018identifying}. If the maximum degree grows with the population size, the rate depends on the degree, and converges to zero when the degree grows at an appropriate slower rate than $n$.  
Finally, I derive lower bounds that guarantee a maximin convergence rate of the regret with a bounded degree. Throughout the analysis, I do not impose assumptions on the (joint) distribution of characteristics used for targeting and on the network other than restrictions on the maximum degree.

A condition for these results to hold is that the optimization procedure achieves
the in-sample optimum. I guarantee it by showing that we can cast the problem in a mixed-integer
linear program.

The derivations present several challenges: (i) individuals depend on neighbors' assignments that I control through contraction inequalities; (ii) statistical dependence invalidates standard symmetrization arguments \citep{wainwright2019high}; and (iii) in the presence of observational studies with networks, machine-learning estimators may present non-vanishing bias even when using existing methods \citep[e.g.,][]{athey2017efficient}. For (iii), I introduce a novel cross-fitting algorithm for networked observations and characterize its properties.  



  I study the numerical properties of the method using 
data from \cite{cai2015social}. I design a policy that informs farmers about insurance benefits to increase insurance take-up. The NEWM method leads to (out-of-sample) improvements in insurance take-up up to thirty percentage points compared to methods that ignore network effects \citep{KitagawaTetenov_EMCA2018, athey2017efficient}. I obtain these improvements despite not using network information for the design of the policy. Finally, I present several extensions, including trimming when individuals present poor overlap due to a large maximum degree, different target, and sampled populations, and spillovers over non-compliance (in the Appendix).



This paper builds on the growing literature on statistical treatment choice \citep[][]{KitagawaTetenov_EMCA2018, kitagawa2017equality, athey2017efficient, mbakop2016model, armstrong2015inference, bhattacharya2012inferring, hirano2009asymptotics, stoye2009minimax, stoye2012minimax, tetenov2012statistical, zhou2018offline}, and classification \citep[][among others]{elliott2013predicting, boucheron2005theory}. 
Unlike previous references, I estimate the policy when treatments generate spillovers here. This paper is the first to study the properties of targeting on networks in the context of the empirical welfare maximization literature. 

A conceptual difference from the $i.i.d.$ setting with single and multi-valued treatments as in \cite{KitagawaTetenov_EMCA2018}, \cite{zhou2018offline} is that here individuals depend on neighbors' assignments, whereas treatments are individual-specific. This structure permits the population network to be unobserved. In addition, I can bound the complexity of the function class using properties of the maximum degree. The second difference is that individuals exhibit dependence and arguments based on $i.i.d.$ sampling, such as symmetrization, fail here.
Optimization differs because individuals depend on neighbors' treatments.


This paper connects the literature on treatment choice with the one on targeting and networks.  I provide an overview below and an extensive discussion in Section \ref{sec:connections}.  

The influence-maximization literature mostly focuses on detecting the most influential ``seeds" based on centrality measures. These measures are often motivated by a particular model. See \cite{bloch2017centrality} for a review. Recent advances include \cite{jackson2018behavioral}, \cite{akbarpour2018just}, \cite{banerjee2017using}, \cite{banerjee2014gossip}, \cite{galeotti2017targeting} in economics, and \cite{kempe2003maximizing}, \cite{eckles2019seeding}, among others in computer science. 
This paper differs in (i) its approach because I leverage experimental variation to construct policies that maximize the \textit{empirical} welfare (instead of policies justified by game theoretic structures); (ii) setup because I allow for constraints on the policy class and heterogeneity in treatment effects. These differences leverage the assumption that spillovers propagate locally in the network, which differs from some of the models in the influence maximization literature. 
\cite{su2019modelling} study first-best policies for linear models \textit{without} policy constraints. I do not impose such structural assumptions. The presence of constraints (and infeasibility of the first-best policy) justifies the regret analysis in the current paper. \cite{laber2018optimal} consider a Bayesian model whose estimation relies on Monte Carlo methods and the correct model specification.

This paper also connects to the literature on social interaction \citep{manski2013identification,manresa2013estimating, auerbach2019identification}, and causal inference under interference or dependence \citep{liu2019doubly, li2019randomization, hudgens2008toward, goldsmith2013social, sobel2006randomized,  savje2017average, aronow2017estimating, chiang2019multiway}. The exogenous and anonymous interference condition is closely related to \cite{leung2019treatment}. However, knowledge of treatment effects is insufficient to construct welfare-optimal treatment rules in the presence of either (or both) constraints on the policy functions or treatment effects heterogeneity.
Additional references include 
 \cite{bhattacharya2019demand} and \cite{wager2019experimenting}, who study pricing with social interactions through partial identification and sequential experiments, respectively. Here,  instead, I study empirical welfare maximization for individualized treatment rules. \cite{li2019randomization}, \cite{graham2010measuring}, and \cite{bhattacharya2009inferring} study optimal configurations of individuals into small groups, such as assigning students to classes, which differs from here where policies denote (constrained) treatment assignments. 
 See 
\cite{kline2020econometric} and \cite{graham2020econometric} for further references.

Finally, more recent works that study targeting in new directions include \cite{kitagawa2020should} in the context of a parametric model of disease diffusion, \cite{ananth} in settings with an observed network of the target population, and \cite{viviano2020policy} in the context of experimental design and sequential experiments.

The paper is organized as follows. Section \ref{sec:identification} presents the problem setup and main conditions. Estimation and theoretical analysis are contained in Section \ref{sec:estimation}. Section \ref{sec:extensions} and online Appendix \ref{app:ext2} present extensions. Section \ref{sec:application} contains an application. Section \ref{sec:conclusion} concludes. Appendix \ref{app:algorithm} (at the end of the main text) presents a practical guide to implement the algorithm, online Appendix \ref{app:numerics} a numerical study and online Appendix \ref{app:derivations_all} theoretical derivations.

\vspace{-2mm}

\section{Problem description} \label{sec:identification}


 
In this section, I introduce the notation and problem setup. I first introduce the outcome model in Section \ref{sec:outcome}. Section \ref{sec:experiment} formalizes the sampling and design in the experiment. The policy targeting exercise is discussed in Section \ref{sec:policy}, and restrictions on the network in Section \ref{sec:network}.  Algorithm \ref{alg:all} in Appendix \ref{app:algorithm} presents a user-friendly description of the procedure.

 \subsection{Outcome model with interference} \label{sec:outcome}
 
Consider a population of $n$ individuals connected under an adjacency matrix $A$. Each individual is associated with an arbitrary vector of characteristics
$
Z_i \in \mathcal{Z} 
$
and a binary indicator $D_i \in \{0,1\}$, with $D_i = 1$, indicating that individual $i$ was assigned the treatment in the experiment, and  $D_i = 0$ if no treatment was assigned. 
Define
$$
A \in \mathcal{A}_n \subseteq \{0,1\}^{n \times n}, \quad N_i = \Big \{j \in \{1, \cdots, n\} \setminus \{i\} : A_{i,j} = 1 \Big\},  \quad Z = (Z_i)_{i=1}^n, \quad D = (D_i)_{i=1}^n, 
$$ where $\mathcal{A}_n$ is the set of symmetric and unweighted adjacency matrices, $N_i$ denotes the friends of $i$, and $|N_i|$ the degree. Let $Y_i$ denote the $i$'s post-treatment outcome in the experiment. Here, $Z$ can be arbitrary and I impose no restriction on its (joint) distribution.

 With interference, unit $i$'s outcome depends on its own and other units' treatment. 
 In full generality, I can write 
 $
Y_i = \tilde{r}_n(i, D, A, Z, \varepsilon_i)
$ for some unobserved random variables $\varepsilon_i$ capturing uncertainty in potential outcomes, and unknown $\tilde{r}_n(\cdot)$.\footnote{We consider $\varepsilon_i$ as a random variable to capture uncertainty in the realization of the outcomes once the policy discussed in Section \ref{sec:policy} is implemented at scale. It is possible to extend our results if we condition on $\varepsilon_i$ as in \cite{leung2022causal} (and therefore without imposing assumptions on $\varepsilon_i$ other than uniformly bounded outcomes as in \cite{leung2022causal}) only in settings where the treatment probabilities are \textit{known} (see Remark \ref{rem:epsilon}).}

\begin{ass}[Interference] \label{ass:sutnva} For $i \in \{1, \cdots, n\}$, let 
\begin{equation} \label{eqn:Y_s}
Y_i = r\Big(D_i, T_i, Z_i, |N_i|, \varepsilon_i\Big), \quad T_i = g_n\Big(\sum_{k \in N_i} D_k, Z_i, |N_i|\Big), 
\end{equation}  
for some function $r(\cdot)$ unknown to the researcher, and function $g_n(\cdot): \mathbb{Z} \times \mathcal{Z} \times \mathbb{Z} \mapsto \mathcal{T}_n \subseteq \mathbb{Z}$, known to the researcher, with $g_n(0, Z_i, |N_i|) = 0$ almost surely, and unobservables $\varepsilon_i$. 
\end{ass} 

Under Assumption \ref{ass:sutnva}, outcomes depend on (i) the number of first-degree neighbors ($|N_i|$), (ii) the number of first-degree treated neighbors (or a function of this, $T_i$), and (iii) individual's treatment status ($D_i$), observables ($Z_i$), and unobservables ($\varepsilon_i$).
Assumption \ref{ass:sutnva} states that interactions are anonymous \citep{manski2013identification}, and spillovers occur within neighbors.  Heterogeneity occurs through the dependence with $Z_i$ and $|N_i|$.  The model relates to  \cite{leung2019treatment}, and \cite{athey2018exact} provide methods to test anonymous and local interference.

Here, $r(\cdot)$ is unknown and $g_n(\cdot)$ is known and characterizes how individuals depend on neighbors' treatments -- that is, the exposure mapping \citep{aronow2017estimating}; $g_n(0, \cdot) = 0$ is without loss of generality, because $r(\cdot)$ also depends on $(Z_i, |N_i|)$.  The function $g_n$ depends on $n$ because its support $\mathcal{T}_n$ can vary with $n$. For example, $g_n$ can be equal to the number of treated neighbors $T_i = \sum_{k \in N_i} D_k$, and the degree can grow with $n$. This scenario is the most agnostic one because $r$ is unknown and therefore equivalent to $g_n(\cdot)$ being unknown. Alternatively, $g_n(\cdot)$ can be equal to a step function of the share of treated neighbors \citep{sinclair2012detecting}. 
The size of $\mathcal{T}_n$ affects treatments' overlap discussed in Assumption \ref{ass:quasi}. 

\begin{ass}[Unobservables $\varepsilon_i$] \label{ass:ignorability}  For all $i \in \{1, \cdots, n\}$, 
\begin{itemize} 
\item[(A)] 
$\varepsilon_i \Big| A, Z \sim \mathcal{U}_{Z_i, |N_i|}
$  
for unknown distributions $\mathcal{U}_{z, l}, z \in \mathcal{Z}, l \in \mathbb{Z}$;
\item[(B)] $\varepsilon_i \perp (\varepsilon_j)_{j \not \in N_i \cup \{N_k, k \in N_i\}} \Big| A, Z$; 
\item[(C)] $
\mathbb{E}\Big[\sup_{d \in \{0,1\}, t \in \mathbb{Z}} |r(d, t, Z_i, |N_i|, \varepsilon_i)|^{3}\Big|A, Z\Big] \le \Gamma^2$, almost surely, for unknown $\Gamma < \infty$.  
\end{itemize} 
\end{ass}

Condition (A) states that unobservables are identically distributed, conditional on the same individual covariates and number of friends, and conditionally independent of $A$ and other units' characteristics. Condition (A) implies network exogeneity, attained if, for example, two individuals form a link based on observable characteristics and exogenous unobservables. 
Condition (A) guarantees that the individual conditional mean function in Equation \eqref{eqn:estimand} below is the same across units.   
Condition (B) states that unobservables are independent across individuals who do not share a common neighbor (see Example \ref{exmp:one_degree}). Condition (C) is a bounded moment assumption. 

Our method can accommodate scenarios where
(A) and (B) fail. I will \textit{not} assume
Condition (A) in settings where the individual treatment probabilities are either known or estimated parametrically (in Lemma \ref{prop:welfare}, and Theorems \ref{thm:thmmain}, \ref{thm:estimatedoutcome}).  I relax (B) in Section \ref{sec:jj}.

\begin{exmp}[Two-degree dependence] \label{exmp:one_degree} Suppose that each individual is associated with $i.i.d.$ unobservables $\eta_i$ and $Y_i = \tilde{r}\Big(D_i, T_i, Z_i, |N_i|, \eta_i,  \sum_{k \in N_i} \eta_k\Big)$ for some unknown function $\tilde{r}(\cdot)$. Then Assumptions \ref{ass:sutnva} and \ref{ass:ignorability} hold with $\varepsilon_i = \Big(\eta_i,  \sum_{k \in N_i} \eta_k\Big)$. 
 \end{exmp}

\subsection{Sampling and experiment} \label{sec:experiment}

Next, I formalize the sampling mechanism and experiment. 

In the spirit of \cite{abadie2020sampling}, I define $R_i \in \{0,1\}$ a random variable indicating whether individual $i$'s post-treatment outcome is observed by the researchers. Researchers do not necessarily observe the adjacency matrix $A$. However, researchers observe $i$'s relevant characteristics and treatment as well as $i$'s neighbors' characteristics and treatments if $R_i = 1$ (i.e., researchers only observe the friends of the sampled individuals but not necessarily $A$). 
In addition, sampled units and their neighbors (but not necessarily the other units in the population) are assigned treatments in the experiment ($D_i = 1$) with positive probability. 

I formalize these conditions below. Define $R_i^f = 1\Big\{\sum_{k \neq i} A_{i,k} R_k > 0\Big\}$ the indicator of whether individual $i$ has at least one neighbor who is sampled, and $n_e = \sum_{i=1}^n \mathbb{E}[R_i]$ the expected number of sampled individuals. I consider $n_e < n$, and assume that $n_e$ is proportional to $n$ for expositional convenience.\footnote{If $n_e = n^\rho, \rho < 1$ all our results hold if we replace the right-hand side in Assumption \ref{ass:a2} with $\mathcal{O}(n^{(1/2 - \xi)\rho})$.}

\begin{ass}[(Quasi)experiment] \label{ass:quasi} For $i \in \{1, \cdots, n\}$, the following holds:
\begin{itemize} 
\item[(i)] Researchers observe the vector 
\begin{equation} \label{eqn:first}
\Big[R_i \Big(Y_i, Z_i, D_i, N_i, Z_{k \in N_i}, D_{k \in N_i}\Big), R_i\Big]_{i=1}^{n}, \quad R_i \Big| A, Z, (\varepsilon_j)_{j=1}^n \sim_{i.i.d.} \text{Bern}(n_e/n),
\end{equation} 
with $n_e/n = \alpha \in (0,1)$. 
\item[(ii)] $
D_i = f_D\Big(Z_i, R_i, (1 - R_i) R_i^f, \varepsilon_{D_i}\Big)
,\text{ for } \varepsilon_{D_i} | A, Z, (\varepsilon_j)_{j=1}^n, (R_j)_{j=1}^n \sim_{i.i.d.} \mathcal{L}, 
$
for some $f_D(\cdot)$ and distribution $\mathcal{L}$ (known in an experiment and to be estimated in a quasi-experiment); 
\item[(iii)] $P(D_i = 1 | Z_i, R_i = 1), P(D_i = 1 | Z_i, R_i = 0, R_i^f = 1) \in (\gamma, 1 - \gamma)$ almost surely, for some $\gamma \in (0,1)$, and for all $t \in \mathcal{T}_n$, $P\Big(T_i = t | Z_{k \in N_i}, |N_i|, R_{k \in N_i}, R_i = 1\Big) \ge \delta_n$ almost surely, for some $\delta_n \in (0,1)$; 
\end{itemize} 
\end{ass}

Condition (i) states that researchers observe the post-treatment outcomes of sampled units, the covariates and treatment of sampled units, and the covariates and treatments of the friends of the sampled units. I do not assume that $A$ (the connections of the entire target population) is observed, while I assume that relevant information about the friends of the sampled individuals ($R_i = 1$) is observed. 
Condition (i) also postulates 
that the indicators $R_i$ are exogenous with respect to the network $A$, characteristics $Z$ and unobservables $\varepsilon_i$. 

Finally, Condition (i) states that the expected number of sampled individuals $n_e$ is proportional to $n$, which is assumed for expositional convenience. 
We can allow $R_i$ to depend on $Z_i$ (see Remark \ref{rem:ww2}) and $n_e$ not to be proportional to $n$.

 Condition (ii) states the treatment is randomized in the experiment on observables $Z_i$, which can be arbitrary and may also contain network information, and possibly also on the indicator $R_i$. If individuals are not sampled in the experiment ($R_i = 0$), $D_i$ can also depend on whether \textit{at least one} friend is sampled (e.g., researchers collect neighbors' information and then randomize treatments across participants and their neighbors). 

Condition (iii) imposes positive overlap for sampled units and their friends but not necessarily for the remaining units who are not sampled and are not friends of sampled units.
For example, the treatment of those units who do not participate in the experiment and whose friends do not participate in the experiment \textit{can} be equal to the baseline value $D_i = 0$ almost surely, whereas it is randomized with positive probability for the experiment participants and their friends. Here, $\delta_n$ denotes the overlap constant of the neighbors' treatments of the sampled individuals. It depends on $n$, because the support of the exposure mapping $T_i$ may vary with $n$. 
We defer to Section \ref{sec:network} restrictions on $\delta_n$ and on the network.

Figure \ref{fig:network} (left-hand-side panel) presents an illustration. 
In an experiment, Assumption \ref{ass:quasi} entails: randomizing participants $R_i$; collecting the covariates $Z_i$ and their neighbors' covariates $Z_{N_i}$; randomizing treatments among participants and their friends (observed by the researchers); observing the post-treatment outcomes $Y_i$ of the sampled units ($R_i = 1$).

Under Assumptions \ref{ass:ignorability}, and  \ref{ass:quasi} define 
\begin{equation} \label{eqn:estimand}
\small 
\begin{aligned} 
m(d, t, z, l) & = \mathbb{E}\Big[r(d, t, z , l, \varepsilon_i) \Big| Z_i = z, |N_i| = l, T_i = t, D_i = d \Big] \\ 
e(d, t, \mathbf{x}, \mathbf{u},  z, l) & =  P\Big(D_i = d, T_i = t \Big|  Z_{k \in N_i} = \mathbf{x}, R_{k \in N_i} = \mathbf{u}, Z_i = z, R_i = 1,  |N_i| = l \Big)
\end{aligned} 
\end{equation}
the conditional mean and propensity score for sampled units ($R_i = 1$), respectively, where we suppressed the dependence of $e$ with $n$ for expositional convenience.  Note that Assumption \ref{ass:ignorability} (A) guarantees that $m(\cdot)$ does not depend on the index $i$. When the propensity score is known, Assumption \ref{ass:ignorability} (A) is not necessary for our results to hold,  because we can use information about $e(\cdot)$ for identification and estimation.



 \begin{figure}
 \centering
    \begin{tikzpicture}

\coordinate (1) at (-10,1);
\coordinate (2) at (-2,1);
\coordinate (3) at (-2,5);
\coordinate (4) at (-10,5);


  \node[draw, fill = pink, ultra thick, circle] (aaa) at (0, 3.2) {};
  \node[draw, fill = pink, circle] (bbb) at (0, 4.1) {};
  \node[draw, fill = green, circle] (ccc) at (-1, 3.5) {};
 \node[draw, fill =pink, circle] (ddd) at (-0.8, 2.5) {};
  \node[draw, fill =pink,  circle] (eee) at (0.7, 2.5) {};
  \node[draw, fill = pink, circle] (fff) at (1, 3.5) {};
 \node[circle] (hhh) at (0.9, 4.4) {};
 \node[circle] (iii) at (1.7, 2.9) {};
  \node[circle] (lll) at (1.7, 4) {};
   \node[circle] (mmm) at (-0.9, 4.4) {};
   \node[circle] (nnn) at (-1.6, 3) {};
   \node[circle] (ooo) at (-1.4, 2) {};
   \node[circle] (ppp) at (-1.7, 3.9) {};
     \node[circle] (qqq) at (0, 4.8) {};
       \node[ circle] (rrr) at (0, 1.9) {};
       \node[circle] (sss) at (1.2, 1.9) {};

  \node[draw, fill = pink, circle] (aa) at (6, 3.2) {};
  \node[draw, fill = green, circle] (bb) at (6, 4.1) {};
  \node[draw, fill = green, circle] (cc) at (5, 3.5) {};
 \node[draw, fill = green,  circle] (dd) at (5.2, 2.5) {};
  \node[draw, fill = pink, circle] (ee) at (6.7, 2.5) {};
  \node[draw, fill = pink, circle] (ff) at (7, 3.5) {};
 \node[circle] (hh) at (6.9, 4.4) {};
 \node[circle] (ii) at (7.7, 2.9) {};
  \node[circle] (ll) at (7.7, 4) {};
   \node[circle] (mm) at (5.1, 4.4) {};
   \node[circle] (nn) at (4.4, 3) {};
   \node[circle] (oo) at (4.6, 2) {};
   \node[circle] (pp) at (4.3, 3.9) {};
     \node[circle] (qq) at (6, 4.8) {};
       \node[ circle] (rr) at (6, 1.9) {};
       \node[circle] (ss) at (7.2, 1.9) {};

    \draw[-, dashed] (aa) edge (bb) (aa) edge (ee);

    \draw[-, ultra thick] (aaa) edge (bbb) (aaa) edge (ccc)  (aaa) edge (eee);
 
 \draw[-] (bbb) edge (ccc)  (ddd) edge (eee)  (fff) edge (bbb); 
 \draw[-] (fff) edge (hhh) (fff) edge (iii) (eee) edge (iii) (bbb) edge (hhh) (fff) edge (lll); 
\draw[-] (mmm) edge (ccc) (nnn) edge (ccc) (ppp) edge (ccc) (nnn) edge (ddd) (ooo) edge (ddd) (rrr) edge (ddd) (rrr) edge (eee) (sss) edge (eee) (bbb) edge (qqq) (bbb) edge (mmm); 
   
  \draw[-] (aa) edge (bb) (aa) edge (cc)  (aa) edge (ee);
 
 \draw[-] (bb) edge (cc)  (dd) edge (ee)  (ff) edge (bb); 
 \draw[-] (ff) edge (hh) (ff) edge (ii) (ee) edge (ii) (bb) edge (hh) (ff) edge (ll); 
\draw[-] (mm) edge (cc) (nn) edge (cc) (pp) edge (cc) (nn) edge (dd) (oo) edge (dd) (rr) edge (dd) (rr) edge (ee) (ss) edge (ee) (bb) edge (qq) (bb) edge (mm); 
 
  \node (v) at (6, 5) {$\pi(X_i)$};
  \node (v) at (6, 1.5) {$(X_i)_{i=1}^n \subseteq Z$};
  \node (v) at (0, 5) {$D_i | Z_i, R_i, R_i^f \sim \mathcal{P}(Z_i, R_i, R_i^f)$};
  \node (v) at (0, 1.5) {$\Big[(Y_i, Z_i, Z_{\mathcal{N}_i}, D_i, D_{\mathcal{N}_i})R_i, R_i\Big]_{i=1}^n$};
  


    \end{tikzpicture}
 \caption{Example of the experiment (left-hand-side figure) and policy targeting exercise in Section \ref{sec:policy} (right-hand-side figure). Green dots denote treated units, and pink dots denote untreated ones. In the first step, researchers run (or observe data from) an experiment on a (small) subset of individuals, here the black-tick unit. The treatment of such a unit and her friends is randomized with some positive probability, whereas the treatment of the other units can have arbitrary distributions (e.g., equal to the baseline value $D_i = 0$ almost surely if such units are not in the experiment). Researchers observe the vector of outcome, treatment, neighbors, treatments, and covariates of sampled units ($(Y_i, Z_i, Z_{\mathcal{N}_i}, D_i, D_{\mathcal{N}_i})R_i$), as well as the the identity of whom they sample ($R_i$). Researchers then design a treatment allocation $\pi(X_i)$ for the entire population using information $X_i$, a subset of  $Z_i$. 
  }
\label{fig:network}
\end{figure}

\subsection{Policy targeting} \label{sec:policy}

Once the experiment is concluded, a policymaker will design a treatment mechanism with the goal of maximizing average social welfare in the \textit{entire} population $i \in \{1, \cdots, n\}$, with adjacency matrix and covariates $(A, Z)$ as in Figure \ref{fig:network}. Partition $Z_i = \Big[X_i, \tilde{X}_i\Big]$, for two vectors $(X_i, \tilde{X}_i)$, $X_i \in \mathcal{X} \subseteq \mathcal{Z}$. The \textit{policymaker} observes from the \textit{entire} population 
  $$
  X = (X_i)_{i=1}^n, \quad X_i \in \mathcal{X}, 
  $$
  a subset of individuals' characteristics.
  Here, $X_i$ denotes individual information observed by a policymaker for all $n$ units in the population. 
Information $X_i$ can be \textit{arbitrary}. Examples include census data or network statistics \textit{when} observed by the policymaker.\footnote{Although we write $Z_i, |N_i|$ separately for expositional convenience, $Z_i$ (and $X_i$) can also contain the degree and other network statistics if observed by the researcher, given that we impose no assumption on $Z$.}  
   Researchers observe an arbitrary function $b_n(X_1, \cdots, X_n)$ of $X$. For instance, $b_n(\cdot)$ can be a constant function if $X_i$ for \textit{all} $n$ units is only observed by the policymaker but not by the researchers, as in  \cite{KitagawaTetenov_EMCA2018}, or can denote the empirical distribution of $X$ if also observed by the researchers. 
   Researchers design a policy such that: 
\begin{itemize}
\item[(1)] Individuals may be treated differently, depending on observable characteristics; 
\item[(2)] The assignment mechanism must be easy to implement without requiring knowledge of the population network $A$; 
\item[(3)] The assignment mechanism can be subject to (economic or ethical) constraints. 
\end{itemize} 
I therefore consider an \textit{individualized} treatment assignment 
$
\pi : \mathcal{X} \mapsto \{0,1\}, \pi \in \Pi_n(b_n(X)) \subseteq \Pi, 
$
where $\Pi_n(b_n(X))$ denotes the set of constraints on $\pi$, a subset of a given function class $\Pi$. Here, the constraints may also depend on researchers' arbitrary information $b_n(X)$.\footnote{For example, $\Pi_n$ may require $\pi \in \Pi$, \textit{and} the capacity constraint $\frac{1}{n} \sum_{i=1}^n \pi(X_i) \le K$ for a constant $K$. } 
The policy $\pi \in \Pi_n$ satisfies (1), (2), and (3). The policy can be implemented in an online fashion, and it does not require observing the population network. However, because I impose no restrictions on $X_i$, individual covariates \textit{can} contain network statistics if available. 

Finally, note that the individualized treatment rules differs from global treatment rules that depend on the population adjacency matrix $A$. Global treatment rules are more flexible, but require observing the network data of the entire target population and therefore are applicable in contexts complementary to ours. See Remark \ref{rem:comparison} for a comprehensive discussion.


I define utilitarian welfare as the expected outcome once I assign treatments with policy  $\pi(X_i)$ in the \textit{entire} population of $n$ units. Under Assumption \ref{ass:sutnva}, welfare is defined as
\begin{equation} \label{eqn:welfare}
\small 
\begin{aligned}  
W_{A, Z}(\pi)  = \frac{1}{n} \sum_{i=1}^n \mathbb{E}\left[r\Big(\pi(X_i), T_i(\pi), Z_i, |N_i|, \varepsilon_i\Big)\Big|A, Z\right], \quad T_i(\pi) = g_n\Big(\sum_{k \in N_i} \pi(X_k), Z_i,  |N_i| \Big). 
\end{aligned}   
\end{equation}
 The definition of welfare implies no carryovers occur from the previous experimental intervention once we deploy policy $\pi$ on the population.\footnote{In practice, carryovers do not occur if either the policy $\pi$ is deployed sufficiently far in time from the experimental intervention or if the experiment run by researchers has a neglible effect on the entire population. See \cite{athey2018design} for a discussion on the no carryovers assumption.}
I collect the assumptions below.

\begin{ass}[Observable characteristics and targeting]
\label{ass:finite_vc} The researchers observe \\ $\Big[R_i\Big(Y_i, D_i, Z_i, D_{N_i}, Z_{N_i}\Big), R_i\Big]_{i=1}^n$ from an experiment as in Equation \eqref{eqn:first}, and $b_n(X_1, \cdots, X_n)$ from the entire population for some arbitrary function $b_n(\cdot)$, and arbitrary $X_i \in \mathcal{X} \subseteq \mathcal{Z}$. They then constructs a (data-dependent) policy $\hat{\pi}_n: \mathcal{X} \mapsto \{0,1\}, \hat{\pi}_n \in \Pi_n(b_n(X)) \subseteq \Pi$. The policymaker observe $X = (X_i)_{i=1}^n$ from the population, and deploy $\hat{\pi}_n$ on the entire population $i \in \{1, \cdots, n\}$.  Here, $\Pi$ is a class of pointwise measurable functions with finite VC dimension $\mathrm{VC}(\Pi)$.\footnote{The VC dimension denotes the cardinality of the largest set of points that the function $\pi$ can shatter. The VC dimension is a common measure of complexity \citep{devroye2013probabilistic}.} Each $\pi \in \Pi$, generates welfare $W_{A, Z}(\pi)$ in Equation \eqref{eqn:welfare}. 
\end{ass} 

I refer to $\Pi_n(b_n(X))$ as $\Pi_n$. 
Assumption \ref{ass:finite_vc} formalizes the policy targeting exercise and imposes restrictions on the complexity of the function class $\Pi$ as in previous literature \citep[e.g.,][]{KitagawaTetenov_EMCA2018, zhou2018offline}. 
Ideally, one would like to learn 
\begin{equation} \label{eqn:oracle} 
\pi_n^* \in \mathrm{arg} \max_{\pi \in \Pi_n} W_{A, Z}(\pi).
\end{equation}  
However, $\pi_n^*$ depends on $m(\cdot)$ and $A$, both unobserved. I replace the oracle problem in Equation \eqref{eqn:oracle} with its sample analog, and compare the estimated policy to $\pi_n^*$. I discuss identification below and defer estimation to the following section. Define (with $
T_i(\pi)
$ in \eqref{eqn:welfare})  
\begin{equation} 
\label{eqn:I_iii} I_i(\pi) =  1\Big\{T_i(\pi) = T_i, \pi(X_i) = D_i \Big\}, \quad e_i(\pi) = e\Big(\pi(X_i), T_i(\pi), Z_{k \in N_i}, R_{k \in N_i}, Z_i,  |N_i|\Big).
\end{equation} 


\begin{lem}[Identification] \label{prop:welfare} Let Assumptions \ref{ass:sutnva}, \ref{ass:quasi} hold. For any $\pi \in \Pi_n$ 
\begin{equation} \label{eqn:in_sample} 
\begin{aligned} 
W_{A, Z}(\pi) & = 
\frac{1}{n_e} \sum_{i=1}^n  \mathbb{E}\left[R_i Y_i \frac{I_i(\pi)}{e_i(\pi)} \Big| A, Z\right]. 
\end{aligned} 
\end{equation} 
\end{lem}

\begin{proof}[Proof of Lemma \ref{prop:welfare}] The proof is in Appendix \ref{app:ident}. 
\end{proof} 

Lemma \ref{prop:welfare} shows that we can identify welfare using information from the propensity score under exogeneity of $R_i$. It does not impose conditions on $(A, Z)$ or $\varepsilon_i$ (Assumption \ref{ass:ignorability} is not required), other than independence with $(R_i, D_i)$ (Assumption \ref{ass:quasi}). 

Lemma \ref{prop:welfare} identifies welfare effects on the \textit{entire} population of $n$ individuals, conditional on $A$ (and therefore also unconditional on $A$), \textit{without} requiring observing $A$. 
The key intuition is to leverage the randomization induced by the sampling indicators $R_i$ and use their independence with the adjacency matrix $A$ and unobservables $\varepsilon_i$. Incorporating sampling uncertainty for policy targeting (without imposing assumptions on the observables and unobservables) is a contribution of independent interest in the context of policy targeting. 

\begin{rem}[Identification of the propensity score] \label{rem:prop} Here, $e(\cdot)$  can be identified because
\begin{equation} \label{eqn:propest}
\small 
\begin{aligned} 
&P\Big(D_i = d, \sum_{k \in N_i} D_k = t \Big|  Z_{k \in N_i} = \mathbf{x}, R_{k \in N_i} = \mathbf{u}, Z_i = z, R_i = 1, |N_i| = l \Big) \\ &=  
P\Big(D_i = d | Z_i = z, R_i = 1\Big) \sum_{w_1, \cdots, w_{l}: \sum_v w_v = t} \prod_{k = 1}^{l}  P\Big(D_{N_i^{(k)}} = w_k \Big| Z_{N_i^{(k)}} = \mathbf{x}^{(k)}, R_{N_i^{(k)}} = \mathbf{u}^{(k)}, R_i = 1\Big).
\end{aligned} 
\end{equation} 
for  $d \in \{0,1\}, s \in \mathbb{Z}, t \le l$, where $\mathbf{x}^{(k)}$ indicates the $k^{th}$ entry of $\mathbf{x}$, and similarly for $\mathbf{u}^{(k)}$. The expression only depends on marginal treatment probabilities, identified from the experiment.  $e(\cdot)$ can then be written as a sum of probabilities in Equation \eqref{eqn:propest}, for any $g_n(\cdot)$ in Assumption \ref{ass:sutnva}. Also, if the treatments of the participants' neighbors is assigned differently than treatment to participants,  $P(D_i = 1 | Z_i, R_i = 0, R_i^f = 1)$ is identified from the neighbors' assignments. \qed 
\end{rem}

\begin{rem}[Non-reversible treatments] \label{rem:non_reversible}  The policy function class $\Pi_n$ does not depends on the treatments randomized in the experiment. Assumption \ref{ass:finite_vc} rules out policies that force policy-makers \textit{not} to change the treatment status of those units treated in the experiment. Appendix \ref{sec:constrain_pi} extends our results to non-reversible policies, i.e., of the form $\pi(X_i)(1 - D_i) + D_i, \pi \in \Pi_n$ (treatment is one if $D_i = 1$ and is $\pi(X_i)$ otherwise), where the policymaker cannot change the treatment status of individuals treated in the experiment. Our theoretical guarantees (and estimation strategies) also apply to non-reversible treatments.  \qed 
  \label{rem:ww} 
\end{rem}

 \begin{rem}[Different populations] \label{rem:ww2} An interesting scenario is when individuals treated by the policymakers are drawn from a population \textit{different} from the one eligible for the experiment (e.g., we sample individuals from a country to implement the policy in a \textit{different} country). We study this scenario in Section \ref{app:average_bound} and Appendix \ref{sec:different_sample}. 
 \qed 
 \end{rem}


\begin{rem}[Comparison with global treatment rules] \label{rem:comparison} Whenever the network from the entire population $A$ \textit{is observed}, policymakers may consider a global policy $\tilde{\pi}_i(X_i, A)$ that also depends on $A \in \mathcal{A}_n$. This differs from our case, where network statistics can only be included in $X_i$ when observed \citep[e.g., $X_i$ contains measures of centrality as in][]{bloch2017centrality}, and treatments are assigned with policies $\pi(X_i)$ instead of $\tilde{\pi}_i(X_i, A)$. In either case (global or individualized rules), optimization takes into account spillovers for policy design. 

These two approaches are complementary. Individualized assignments considered here do not require collecting network data from the entire population and accommodate settings where the target population is large (and larger than the sample size). However, estimation of individualized rules only use (local) network information available from the experiment.  

Global assignments can be more flexible: a global assignment rule uses information from the \textit{target} population adjacency matrix $A$ to optimize over a large policy space. However, global assignments require observing the population adjacency matrix $A$ and they require that the size of the target population is small (finite) to control the complexity of the policy function class.\footnote{For instance, for a global function class obtained via unions and the intersection of $k_{n}$ half-planes, the VC dimension of the function class is of order $k_{n} \log(k_{n})$ \citep{csikos2019tight}. For a global policy, $k_{n}$ can grow with $n$ requiring a finite target population. 
In the \textit{absence} of policy constraints, an alternative approach is to impose modeling assumptions as in \cite{kitagawa2020should}, different from here, where we allow for policy constraints and semi-parametric identification. } 
These distinctions highlight the complementarity of the two approaches. Global policy rules are best suited in settings where the adjacency matrix $A$ is observed, and the target population is constituted by networks of small (finite) size, as discussed in \cite{ananth}. Individualized rules instead are best suited in settings where network data can be difficult to collect from a (large) target population. 
 \qed 
\end{rem} 

\begin{rem}[Additional extensions] \label{rem:ident} Extending our framework to settings where $R_i$ depends on $Z_i$ is possible. Identification follows similarly, after dividing each summand in Lemma \ref{prop:welfare} by $P(R_i = 1|Z_i)$, assuming $P(R_i = 1 | Z_i) = \alpha(Z_i)n_e/n$, for $\alpha(z) \in (0,1)$. A different extension is when spillovers over compliance occur. This is discussed in Appendix \ref{sec:non_comp}. Finally, a third extension is when higher-order interference occurs. This follows similarly to what is discussed here once we control for (and observe) higher-order neighbors.  \qed 
\end{rem}

\subsection{Network topology and overlap} \label{sec:network}

I conclude the description of the setup with a set of assumptions on the network topology and overlap that control the degree of dependence. Define $\mathcal{N}_n = \max_{i \in \{1, \cdots, n\}} |N_i| + 2$.

\begin{ass}[Maximum degree] \label{ass:a2} Assume $\mathcal{N}_n^{3/2} \log(\mathcal{N}_n)/\delta_n = \mathcal{O}\Big(n^{1/2 - \xi}\Big)$, almost surely for some (unknown) $\xi \in (0, 1/2]$. 
\end{ass} 

Assumption \ref{ass:a2} bounds the ratio of the maximum degree and the overlap constant and trivially holds in networks with bounded degree described below.

\begin{exmp}[Bounded degree] \label{exmp:bounded} Suppose that $\mathcal{N}_n \le c_0$ almost surely for a constant $c_0$ independent of $n$. Then Assumption \ref{ass:a2} holds with $\xi = 1/2$ almost surely. 
\end{exmp}

Example \ref{exmp:bounded} holds for many economic models, for instance, the ones in \cite{de2018identifying}. Economic applications with a bounded degree include the Add Health Study, and \cite{jackson2012social} among others.\footnote{In the Add Health Study researchers elicited up to five names of friends of each sex. The number of reciprocated friends have median one and less than five percent of individuals have more than three of such links \citep[Footnote 7 in][]{de2018identifying}.  In \cite{jackson2012social} fewer than 1 per 1,000 respondents reached the caps of 5 or 8 nominations (Footnote 37, p. 1879).} Assumption \ref{ass:a2} allows for unbounded degree, in which case properties of the estimators in Section \ref{sec:estimation} will depend on $\mathcal{N}_n$ and $\delta_n$.

\begin{exmp}[Unbounded degree] \label{exmp:unbounded} Suppose $\mathcal{N}_n = \mathcal{O}(n^{1/3})$, and for any $n$, \\ $T_i = 1\Big\{\sum_{k \in N_i} D_k/|N_i| > 1/2\Big\}$, such that $P\Big(T_i = 1 | Z_{k \in N_i}, R_{k \in N_i}, |N_i|, R_i = 1\Big) \in (\iota, 1- \iota)$, for some $\iota \in (0,1)$. Then Assumption \ref{ass:a2} holds for $\xi < 1/2$. 
\end{exmp} 

Restrictions on the degree interact with the choice of the exposure mapping $g_n(\cdot)$ and the overlap constant $\delta_n$. I provide two examples below. 

\begin{exmp}[Overlap as a function of the number of treated units] \label{exmp:exposure} Suppose that for arbitrary $\lambda_n$ 
$$
g_n(t, z, l) = 
\begin{cases} 
t & \text{ if } t < \lambda_n \\ 
\lambda_n & \text{ otherwise}. 
\end{cases} 
$$
This specification states that if individuals have less than $\lambda_n$ treated neighbors, spillover effects exhibit arbitrary heterogeneity in the number of treated friends. Spillovers are constant if the number of treated neighbors exceed a certain threshold $\lambda_n$. In this example, the overlap constant is of order $\min\{\gamma^{\lambda_n}, (1 - \gamma)^{\lambda_n}\}$ with $\gamma$ as defined in Assumption \ref{ass:quasi}.  
\end{exmp}

\begin{exmp}[Improving overlap via model restrictions] \label{rem:j} 
Additional restrictions on $g_n(\cdot)$ (and $T_i$) can improve overlap. Suppose that for some ordered $\tau_1, \tau_2, \tau_3$,
\begin{equation} 
r(d, t, z, l, e) = \begin{cases} &\bar{r}_1(d,z,l,e) \text{ if } t/l \le \tau_1 \\ 
& \bar{r}_2(d,z,l,e) \text{ if } \tau_1 < t/l \le \tau_2 \\ 
&  \bar{r}_3(d,z,l,e) \text{ if } \tau_2 < t/l \le \tau_3 
\end{cases} 
\end{equation} 
for some possibly unknown functions $\bar{r}_1, \bar{r}_2, \bar{r}_3$. In this setting, the exposure mapping is a step-function in the share of treated neighbors with a finite support. 
\qed 
\end{exmp}

In summary, Assumption \ref{ass:a2} requires that the overlap constant $\delta_n \rightarrow 0$ at a slower rate than $1/\sqrt{n}$, that can hold under restrictions of either the exposure mapping or on the degree. Section \ref{sec:trimming} presents theoretical results when Assumption \ref{ass:a2} \textit{fails} -- that is, $\delta_n \rightarrow 0$ at a faster rate in $n$, using a trimming strategy.

\subsection{Spillovers in the related literature} \label{sec:connections}

I pause here to compare our framework and assumptions with existing models of spillovers. 

The framework I present most closely connects to the literature on causal inference under interference, including, among others,  \cite{hudgens2008toward}, \cite{manski1993identification}, \cite{aronow2017estimating} and the model in \cite{leung2019treatment} in particular. 
The model in this paper allows for arbitrary heterogeneity in the number of friends, $|N_i|$, observables $Z_i$, and the exposure mapping $T_i$ as a function of the number of treated friends. We can therefore achieve semi-parametric identification of policy effects in the spirit of the literature on (augmented) inverse probability weights \citep[e.g.,][]{tchetgen2012causal, aronow2017estimating}. 

I do not require restrictions on observables $Z_i$, which can be arbitrarily dependent, and on $A$, other than restrictions on the maximum degree. This approach is possible once I explicitly incorporate sampling uncertainty as in \cite{abadie2020sampling} for policy learning. Similar restrictions on the degree are often imposed to obtain concentration of the estimated causal effects \citep[e.g.,][]{savje2017average}. 
Here, the maximum degree restrictions together with the local interference assumption allow me also to control the complexity of the policy function class, characterized by the direct and spillover effects $\Big(\pi(X_i), \sum_{k \in N_i} \pi(X_k)\Big), \pi \in \Pi$. 







I draw connections to the literature on information diffusion and optimal seeding. This literature mostly studies models where informed individuals transmit information to neighbors sequentially over multiple periods \citep{banerjee2013diffusion, banerjee2014gossip, akbarpour2018just, kempe2003maximizing}. These references do not take into account heterogeneity as in this paper (e.g., through $Z_i$), and study centrality measures motivated by the diffusion model considered. This paper studies a static model with heterogeneity, with spillovers occurring through the number of treated friends.

In particular, as noted by \cite{banerjee2013diffusion}, models of information diffusion focus on either what \cite{banerjee2013diffusion} defines as ``information effects" (people become aware of certain opportunities or technologies) or ``endorsement effects" (people's behavior may affect others' behavior), but not necessarily both (similar to what \citealt{manski1993identification} defines exogenous and endogenous spillovers). Once we interpret the outcome $Y_i$ as technology adoption, this paper mostly focuses on information effects through the dependence of the outcome on neighbors' treatments (information). It can accommodate endorsement effects in those settings where the function $r(\cdot)$ captures endorsement effects in a reduced form.\footnote{An example is having two periods $t \in \{1, 2\}$, where the treatment consists of providing information at time $t = 1$ to some individuals. At $t = 1$, outcomes only depend on individual treatments $D_i$, whereas at $t = 2$ outcomes depend on the average number of friends who adopted the technology. Let $Y_{i,1} = D_i \tau + \varepsilon_{i,1}$ the outcome at time $t = 1$, and $Y_{i,2} = f(D_i, Y_{i,1}, \sum_{k \in N_i} Y_{k,1}, |N_i|, \varepsilon_{i,2})$, for some function $f(\cdot)$ and i.i.d. $\varepsilon_{i,1}, \varepsilon_{i,2}$.  This model satisfy our assumptions for $Y_{i,2}$, with $\varepsilon_i = (\sum_{k \in N_i} \varepsilon_{k, 1}, \varepsilon_{i,1}, \varepsilon_{i,2})$ in Assumption \ref{ass:ignorability}.} 

Finally, a further distinction from the literature on seeding \citep[][]{kempe2003maximizing, kitagawa2020should, galeotti2017targeting} is that the current paper focuses on constrained policies,  motivated by the cost of collecting network data, instead of first-best (unconstrained) policies which would require information on the population network.


\vspace{-2mm}

 \section{Network Empirical Welfare Maximization} \label{sec:estimation}

 Next, I introduce our procedure and its properties. I estimate a policy with guarantees valid for finite (possibly large) $n$ and characterize convergence rates as $n,n_e \rightarrow \infty$. Convergence rates are with respect to a sequence of data-generating processes indexed by $n$, each with a \textit{single} network $A \in \mathcal{A}_n$, where I explicitly condition on $A \in \mathcal{A}_n, Z \in \mathcal{Z}^n$ unless otherwise specified. Conditional statements that I provide below do not subsume that $(A, Z)$ are observed. Instead, they establish stronger guarantees than unconditional statements by leveraging the independence of the sampling $R_i$ with the network $A$ and the assumption that the sampled units are drawn from the (larger) target population (see Lemma \ref{lem:expected}).


 \subsection{Known propensity score} \label{sec:prop_score}

Suppose first researchers know the propensity score.  Consider the double robust estimator (AIPW):
\begin{equation} \label{eqn:welf}
\small 
\begin{aligned} 
W_n(\pi, m^c, e) = & \frac{1}{n_e} \sum_{i=1}^n  R_i\left\{ \frac{I_i(\pi)}{e_i(\pi)} \Big(Y_i - m_i^c(\pi)\Big) + m_i^c(\pi)\right\}, 
\end{aligned} 
\end{equation}
where 
$
 m_i^c(\pi) = m^c\Big(\pi(X_i), T_i(\pi),  Z_i, |N_i|\Big).
$
The function $m^c$ denotes an \textit{arbitrary} regression adjustment, possibly different from the population conditional mean function. Note that $m^c$ can be arbitrary. Therefore, it does not require that the conditional mean functions are identical across units (Assumption \ref{ass:ignorability} (A)). The estimated welfare inherits double-robust properties in the spirit of \cite{robins1994estimation}, and \cite{tchetgen2012causal}, \cite{aronow2017estimating}, \cite{liu2019doubly} with spillovers. For known propensity scores and any $m^c$, the estimator is unbiased for $W_{A, Z}(\pi)$  (see Appendix \ref{app:ident}).


\begin{ass}[Regression adjustment: oracle setup] \label{ass:general4} For each $d \in \{0,1\}, t \in \mathcal{T}_n$, let $|m^c(d, t,  Z_i, |N_i|)| < \Gamma,  
$ almost surely, for a finite constant $\Gamma < \infty$, and for $z \in \mathcal{Z}, l \in \mathbb{Z}$, $m^c(d,t,z,l) \perp \Big(Y_i, R_i, D_i\Big)_{i=1}^n \Big| A, Z$.  
\end{ass} 

Assumption \ref{ass:general4} states that the regression adjustment is (i) uniformly bounded and (ii) independent of experiment participants. An example is $m_i^c = 0$, or $m_i^c$ estimated on an independent population. 
The use of $m_i^c(\cdot)$ in this section is not necessary for our results to hold. However, even with a known propensity score, using a regression adjustment can improve the stability of the estimator when poor overlap occurs. Sections \ref{sec:alg1} and \ref{sec:jj} provide details where $m_i^c$ is estimated in-sample. With known propensity score and a parametric regression adjustment (ii) is not necessary, as shown in Section \ref{sec:jj}. Let
$$
\hat{\pi}_{m^c, e} \in \mathrm{arg}\max_{\pi \in \Pi_n} W_n(\pi, m^c, e).
$$

 \begin{thm}[Oracle Regret] \label{thm:thmmain} Let Assumptions \ref{ass:sutnva}, \ref{ass:quasi}, \ref{ass:finite_vc},  \ref{ass:general4}, and (B), (C) in \ref{ass:ignorability} hold. For a universal constant $\bar{C}<\infty$, the following holds almost surely:
    $$
\mathbb{E}\Big[\sup_{\pi \in \Pi_n} W_{A, Z}(\pi) - W_{A, Z}(\hat{\pi}_{m^c, e} ) \Big| A, Z\Big] \le \bar{C} \frac{\Gamma \mathcal{N}_n^{3/2}}{\gamma \delta_n} \sqrt{ \frac{\log(\mathcal{N}_n) \mathrm{VC}(\Pi)}{n_e}}.
    $$
\end{thm}

 \begin{proof}[Proof of Theorem \ref{thm:thmmain}]  
 The proof consists of three steps. First, I extend symmetrization arguments -- widely studied for independent observations \citep[e.g.,][]{devroye2013probabilistic} -- for network data. To obtain symmetrization, I group units into groups of conditionally independent observations. Within each group, I provide bounds in terms of the Rademacher complexity of the function class obtained from the composition of direct and spillover effects (see Definition \ref{defn:rademacher}). As a second step, I bound the Rademacher complexity in each group (i) by deriving an extension of \cite{ledoux2011probability}'s contraction inequality (Lemma \ref{lem:ledoux}), using (ii) Dudley's entropy integral bound \citep[][Theorem 5.22] {wainwright2019high}, and (iii) providing an upper bound on the covering number  of the product of the number of treated neighbors and individual treatment (Lemmas \ref{lem:boundnumber}, \ref{lem:finallemma}).\footnote{See \cite{wainwright2019high} for definitions of covering numbers.}  As the last step, I invoke \cite{brooks1941colouring}'s theorem to control the number of groups containing conditionally independent units.
 
Section \ref{sec:proof} presents a proof sketch, and Appendix \ref{app:regret} the complete proof.  
\end{proof}  

 Theorem \ref{thm:thmmain} provides a non-asymptotic upper bound on the regret, and it is the first result of this type under network interference.
 
 The regret bound depends on the network topology through the maximum degree $\mathcal{N}_n$, the overlap constant $\delta_n$, and the  (expected) sample size $n_e$. The degree affects the regret bound through two channels: (i) dependence between outcomes conditional on the network and covariates and (ii) the complexity of the function class obtained by the composition of direct and spillover effects. For (i), I leverage Assumptions \ref{ass:sutnva}, \ref{ass:quasi} (i, ii), and \ref{ass:ignorability} (B), to show each individual observation is dependent with at most $2 \mathcal{N}_n^2$ many other units. For (ii), I leverage instead Assumptions \ref{ass:sutnva} and \ref{ass:finite_vc}, to bound (ii) as a function of the VC dimension of $\Pi$ and $\mathcal{N}_n$. The bound also depends on $\delta_n$, which can vary with $n$. Intuitively, for larger networks (and larger degrees), the probability that individuals exhibit strict overlap may get smaller, depending on the exposure mapping considered. The bound is independent of $\alpha$ in Equation \eqref{eqn:first}. Theorem \ref{thm:thmmain} does not assume Assumption \ref{ass:ignorability} (A).

The bound shrinks to zero as $n_e$ increases, only if the maximum degree and the overlap constant grows at an appropriate slower rate than the sample size. We formalize this below.

  \begin{cor}[Convergence rate with a possibly unbounded degree] \label{cor:1_main} Let the Assumptions in Theorem \ref{thm:thmmain} hold. Suppose in addition that Assumption \ref{ass:a2} holds. Then 
 $$
\mathbb{E}\Big[\sup_{\pi \in \Pi_n} W_{A, Z}(\pi) - W_{A, Z}(\hat{\pi}_{m^c, e})\Big| A, Z \Big] = \mathcal{O}\Big(n_e^{-\xi} \Big)
 $$
almost surely, for $\xi \in (0, 1/2]$ as defined in Assumption \ref{ass:a2}. 
 \end{cor}

The corollary shows that the regret converges to zero at a rate that depends on the convergence rate of the maximum degree and the number of experiment participants.
 For bounded degree,  the regret scales at rate $1/\sqrt{n_e}$.

\begin{cor}[Example \ref{exmp:bounded} cont'd] Let the Assumptions in Theorem \ref{thm:thmmain} hold, and $\mathcal{N}_n < c_0'$ almost surely, for a constant $c_0'$ independent of $n$. Then almost surely, 
$$
\mathbb{E}\Big[\sup_{\pi \in \Pi_n} W_{A, Z}(\pi) - W_{A, Z}(\hat{\pi}_{m^c, e} )\Big| A, Z \Big] = \mathcal{O}\Big(n_e^{-1/2}\Big). 
$$  
\end{cor}

In the following theorem, I provide a lower bound for any data-dependent policy. Consistently with the previous theorems, I provide the lower bound conditional on $(A,Z)$.

 \begin{thm}[Minimax lower bound on the rescaled regret] \label{prop:minimax} Let $\Pi$ be the class of policies $\pi: \mathcal{X} \mapsto \{0,1\}$, with finite VC dimension $\mathrm{VC}(\Pi)$, $\mathcal{X} = \mathbb{R}^d \subseteq \mathcal{Z}$, for some finite $d < \infty$. Let $\mathcal{P}_n(A,Z)$ the set of conditional distributions $\mathcal{D}_n(A,Z)$ of $(Y_i, D_i, R_i)_{i=1}^n | A, Z$ satisfying Assumptions \ref{ass:sutnva}, \ref{ass:ignorability}, \ref{ass:quasi}. Then for any $g_n(\cdot)$ in Assumption \ref{ass:sutnva}, for any $n_e \ge 16 \mathrm{VC}(\Pi)$, and for any data-dependent $\hat{\pi}_n \in \Pi$, which depends on $\Big[R_i (Y_i, Z_i, Z_{k \in N_i}, D_i, D_{k \in N_i}, N_i), R_i\Big]_{i=1}^n$, 
\begin{equation} \label{eqn:rescaled}
 \small 
 \begin{aligned} 
 \sup_{A \in \mathcal{A}_n^o, Z \in \mathcal{Z}^n} \sup_{\mathcal{D}_n(A,Z) \in \mathcal{P}_n(A,Z)} \frac{\delta_n}{\mathcal{N}_n^{3/2} \log^{1/2}(\mathcal{N}_n) } \mathbb{E}_{\mathcal{D}_n(A,Z)}& \Big[ \Big(\sup_{\pi \in \Pi} W_{A, Z}(\pi) - W_{A, Z}(\hat{\pi}_n)\Big)\Big| A, Z \Big] \\ &\ge  \frac{\exp(-2 \sqrt{2})}{2^{5/2} \log^{1/2}(2)} \sqrt{ \frac{\mathrm{VC}(\Pi)}{n_e}}, 
\end{aligned} 
\end{equation} 
 where $\mathcal{A}_n^o \subset \mathcal{A}_n$ denotes the space of symmetric unweighted adjacency matrices satisfying Assumption \ref{ass:a2}, and $\mathbb{E}_{\mathcal{D}_n}[\cdot]$ denotes the expectation with respect to $\mathcal{D}_n$. 
 \end{thm}

 \begin{proof}[Proof of \ref{prop:minimax}]  The proof follows similar steps of \cite{devroye2013probabilistic, KitagawaTetenov_EMCA2018}, once I construct a sufficiently sparse adjacency matrix for the worst-case lower bound, with two distinctions that, to my knowledge, are novel in the literature: I condition on covariates and consider random sampling indicators.  See Appendix \ref{app:regret} for details. 
 \end{proof} 
 
 Theorem \ref{prop:minimax} provides a worst-case lower bound to any data-dependent policy, holding uniformly for any $n_e \ge 16 \mathrm{VC}(\Pi)$. Similar to lower bounds in the literature \citep{KitagawaTetenov_EMCA2018}, the bound is maximin over the data-generating process, including any adjacency matrix $A$ satisfying Assumption \ref{ass:a2}. However, different from \cite{KitagawaTetenov_EMCA2018}, 
Theorem \ref{prop:minimax} establishes the minimax convergence rate of $\hat{\pi}_{m^c, e}$ for the \textit{rescaled} regret 
\begin{equation} \label{eqn:rescaled}
\small 
\begin{aligned} 
\frac{\delta_n}{\mathcal{N}_n^{3/2} \log^{1/2}(\mathcal{N}_n) } \mathbb{E}_{\mathcal{D}_n(A,Z)} \Big[ \Big(\sup_{\pi \in \Pi} W_{A, Z}(\pi) - W_{A, Z}(\hat{\pi}_n)\Big)\Big| A, Z \Big]
\end{aligned}
\end{equation}  
after we divide by the factor $(\mathcal{N}_n^{3/2} \log(\mathcal{N}_n))/\delta_n$ appearing in Theorem \ref{thm:thmmain}. The rescaling factor differs from lower bounds on the (non-rescaled) regret in the literature, and it is motivated by the  dependence of $\mathcal{N}_n$ with the adjacency matrix and $\delta_n$ with the data-generating process. We discuss  implications for the regret \textit{without} rescaling below. 

\begin{cor} \label{cor:minimax} For any data dependent $\hat{\pi}_n \in \Pi$, satisfying the conditions in Theorem \ref{prop:minimax}, 
$$
\small 
\begin{aligned} 
\sup_{A \in \mathcal{A}_n^o, Z \in \mathcal{Z}^n} \sup_{\mathcal{D}_n(A,Z) \in \mathcal{P}_n(A,Z)} \mathbb{E}_{\mathcal{D}_n(A,Z)} \Big[ \Big(\sup_{\pi \in \Pi} W_{A, Z}(\pi) - W_{A, Z}(\hat{\pi}_n)\Big)\Big| A, Z \Big] \ge  \frac{\exp(-2 \sqrt{2})}{2^{5/2}} \sqrt{ \frac{\mathrm{VC}(\Pi)}{n_e}}.
\end{aligned} 
$$ 
\end{cor} 

Corollary \ref{cor:minimax} follows from the fact that $\delta_n/\mathcal{N}_n^{3/2} \log^{1/2}(\mathcal{N}_n) \le 1/\log^{1/2}(2)$. 
It states that the lower bound for the rescaled regret implies a lower bound for the regret. Therefore, Theorem \ref{prop:minimax} establishes a minimax rate of convergence of $\hat{\pi}$ for the regret \textit{without rescaling} under the additional assumption that $\mathcal{N}_n < c_0$ is uniformly bounded for a constant $c_0 < \infty$.



In summary, the bound in Theorem \ref{thm:thmmain} converges to zero as $n, n_e \rightarrow \infty$, in settings with a sufficiently small degree (see Corollary \ref{cor:1_main}). The bound in Theorem \ref{thm:thmmain} does not converge to zero if the degree $\mathcal{N}_n$ grows at an arbitrary rate with $n$. Therefore our bounds are informative (converge to zero), only in settings with a sufficiently sparse graph. These settings include bounded degree as a special case, but also allows for unbounded degree with rate satisfying Assumption \ref{ass:a2}.  
For example, with an exposure mapping such that $\delta_n \in (\delta, 1 - \delta)$ for a constant $\delta$ independent of $n$ (for instance, the exposure mapping is as in Example \ref{exmp:exposure} with $\lambda_n$ independent of $n$), the bound converge to zero only if $\mathcal{N}_n^{3} \log(\mathcal{N}_n)/n \rightarrow 0$. In addition, the bound in Theorem \ref{thm:thmmain} also provides a minimax rate of convergence of the regret (without rescaling) in settings where the degree is uniformly bounded (but not necessarily otherwise).


\begin{rem}[Expected regret] Theorem \ref{thm:thmmain} provides guarantees on the regret conditional on $(A, Z)$, assuming that the experiment participants are drawn from the target population. Section \ref{app:average_bound} shows that such guarantees are sufficient to also bound the regret with respect to the \textit{expected} welfare (expected over the distribution of $(A, Z)$) if the sample units are drawn from the target population. When sampled units are \textit{not} drawn from the (larger) target population, regret bounds depend on additional terms that characterize the ``cost" of drawing a sample from a population different from the target one (see Section \ref{app:average_bound}).
\qed 
\end{rem} 
 
\subsection{Estimated nuisance functions} \label{sec:alg1}

Next, I derive regret guarantees when estimating the conditional mean $m(\cdot)$ and/or propensity score $e(\cdot)$, as defined in Equation \eqref{eqn:estimand} under Assumptions \ref{ass:ignorability}, and \ref{ass:quasi}. Define $\hat{m}$, and $\hat{e}$ the estimated conditional mean and propensity score as in Algorithm \ref{alg:adaptive} (Appendix \ref{app:algorithm}), $W_n(\pi, \hat{m}, \hat{e})$ as the welfare with the estimated nuisance functions as in Equation \eqref{eqn:crossfit}, and 
\begin{equation} 
 \hat{\pi}_{\hat{m}, \hat{e}} \in \mathrm{arg}\max_{\pi \in \Pi_n} W_n(\pi, \hat{m}, \hat{e}). 
 \end{equation}

I propose a modification of the \textit{cross-fitting} algorithm -- see \cite{chernozhukov2018double}, and \cite{athey2017efficient} in particular -- here studied in the context of interference. I describe the algorithm in Algorithm \ref{alg:adaptive} and provide a sketch in Algorithm \ref{alg:sketch}.  

First, I find the smallest partition of sampled individuals such that two individuals assigned to the same group are neither friends nor share a common friend. This information is available under the sampling mechanism in Section \ref{sec:experiment}, because researchers observe the set of friends of each sampled individual. 
The solution to this problem is obtained by solving a sequence of mixed-integer linear programs. Each program fixes the number of groups (starting from one). For a given number of groups, it checks whether a feasible partition 
exists. If no feasible partition exists, it increases by one the number of groups and iterates. 

Once I obtain such groups, I estimate the conditional mean function using standard cross-fitting within each group of individuals as in \cite{athey2017efficient}. Specifically, I partition each group $g$ into $K$ equally sized folds; for individual $i$ in group $g$, fold $k$, I estimate her conditional mean function using information from all units in each fold in group $g$ except fold $k$. I repeat the same algorithm for the propensity score, where I first estimate the individual treatment probability and then aggregate such probabilities as in Remark \ref{rem:prop}. 
Algorithm \ref{alg:adaptive} presents the details and Algorithm \ref{alg:sketch} a summary.

As in \cite{athey2017efficient}, the regret bound is increasing in the number of folds, while the estimation error of the nuisance functions is decreasing in the number of folds (see Appendix \ref{app:estimated_1}). Therefore, we must choose  a sufficiently large $K$ to control the estimation error of the nuisance functions. However, the choice of $K$ must also guarantee that each fold contains a non-negligible proportion of observations. In practice, I recommend $K$ between five and ten.

  \begin{algorithm} [!h]   \caption{Sketch of Network Cross-Fitting (see Algorithm \ref{alg:adaptive} for details)}\label{alg:sketch}
    \begin{algorithmic}[1]
    \State Partition sampled individuals: 
    \begin{algsubstates}
     \State Fix $K = 1$
     \State Check whether a feasible partition of sampled individuals with $K$ groups exists. The partition must be such that two individuals in the same group are neither friends nor share a common friend. 
     \State If such a partition does not exist, set $K = K + 1$ and iterate. 
     \end{algsubstates}
    \State For each $i$, estimate the conditional mean function and propensity score for individual $i$, $\hat{m}^{(i)}, \hat{e}^{(i)}$ via cross-fitting using the units in $i$'s group returned by the partition in 1.
    Define 
  \begin{equation} \label{eqn:m_hati}
        \small 
        \begin{aligned} 
        \hat{m}_i(\pi) = \hat{m}^{(i)}\Big(\pi(X_i), T_i(\pi), Z_i, |N_i|\Big), \quad \hat{e}_i(\pi) = \hat{e}^{(i)}\Big(\pi(X_i), T_i(\pi), Z_{k \in N_i}, R_{k \in N_i}, Z_i, R_i, |N_i|\Big)
        \end{aligned} 
        \end{equation} 
     and 
\begin{equation} \label{eqn:crossfit}
W_n(\pi, \hat{m}, \hat{e}) = \frac{1}{n_e} \sum_{i=1}^n R_i \left\{\frac{I_i(\pi)}{\hat{e}_i(\pi)}\Big(Y_i - \hat{m}_i(\pi)\Big) - \hat{m}_i(\pi) \right\}.  
\end{equation}  
\Return $W_n(\pi, \hat{m}, \hat{e})$. 
         \end{algorithmic}
\end{algorithm}

 To my knowledge, Algorithm \ref{alg:adaptive} is novel to the literature on interference. Its main innovation with respect to existing cross-fitting methods is the partitioning approach (Part 1 in Algorithm \ref{alg:sketch}), here required due to interference. For settings where the network presents approximately independent components (e.g., regions), I also present a computational relaxation in Algorithm \ref{alg:adaptive2}. Algorithm \ref{alg:adaptive2} constructs subgraphs of the network \textit{recursively} to minimize the number of individuals with shared friends between different subgraphs. It estimates nuisance functions for unit $i$ using information from units in the subgraphs different from the one of unit $i$. With multiple disconnected regions, Algorithm \ref{alg:adaptive2} estimates the nuisance functions using information from all regions except the one containing $i$. See Appendix \ref{app:algorithm} for details.

To study properties of the algorithm, I assume that the estimated nuisance functions satisfy the same bounded and overlap conditions as their population counterparts \citep[this can be relaxed by assuming uniform convergence as in][]{athey2017efficient}.

\begin{ass}[Estimated nuisances] \label{ass:bounded_nuisance} Assume that for each $d \in \{0,1\}, t \in \mathcal{T}_n,  i \in \{1, \dots, n\}$, and $\hat{m}^{(i)}(\cdot), \hat{e}^{(i)}(\cdot)$ as in Algorithm \ref{alg:adaptive}, $|\hat{m}^{(i)}(d, t, Z_i, |N_i|)| < \Gamma$ almost surely, for a finite constant $\Gamma$ and $\hat{e}^{(i)}(d,t, Z_{k \in N_i}, R_{k \in N_i}, Z_i, R_i, |N_i|) \in (\gamma \delta_n, 1 - \gamma \delta_n)$, almost surely, for $\gamma, \delta_n$ as defined in Assumption \ref{ass:quasi}.    
\end{ass}


   

The rate of convergence here also depends on the product of the mean-squared error of the estimated conditional mean function and propensity score, averaged over the population covariates and number of neighbors: 
\begin{equation}
\small 
\begin{aligned} 
&\mathcal{R}_n(A, Z) =   \frac{1}{n} \sum_{i=1}^n \mathbb{E}\left[\sup_{d,t} \Big( \hat{m}^{(i)}(d, t, Z_i, |N_i|) - m(d, t, Z_i, |N_i|) \Big)^2\Big| A, Z, R_i = 1\right]\\
& \mathcal{B}_n(A, Z) =  
\frac{1}{n} \sum_{i=1}^n \mathbb{E}\left[\sup_{d,t}   \Big( \frac{1}{\hat{e}^{(i)}(d, t,  Z_{k \in N_i}, R_{k \in N_i},  Z_i, |N_i|)} - \frac{1}{e(d, t, Z_{k \in N_i},  R_{k \in N_i}, Z_i, |N_i|)} \Big)^2 \Big| A, Z, R_i = 1\right],  
\end{aligned} 
\end{equation} 
where 
$\hat{m}^{(i)}, \hat{e}^{(i)}$ are the estimated functions for unit $i$, as defined in Algorithms \ref{alg:sketch}, \ref{alg:adaptive}.

\begin{thm} \label{thm:dr} Let Assumptions \ref{ass:sutnva}, \ref{ass:ignorability}, \ref{ass:quasi}, \ref{ass:finite_vc}, \ref{ass:a2}, \ref{ass:bounded_nuisance} hold. Suppose that $\hat{m}, \hat{e}$ are estimated as in Algorithm \ref{alg:adaptive}. Then 
    $$
\mathbb{E}\Big[\sup_{\pi \in \Pi_n} W_{A, Z}(\pi) - W_{A, Z}(\hat{\pi}_{\hat{m}, \hat{e}})\Big| A, Z\Big]= \mathcal{O}\Big(n_e^{-\xi} + \sqrt{\mathcal{R}_n(A, Z) \times \mathcal{B}_n(A, Z)} \Big) .
    $$
almost surely,    for $\xi \in (0, \frac{1}{2}]$ as defined in Assumption \ref{ass:a2}. 
\end{thm} 

\begin{proof}[Proof of Theorem \ref{thm:dr}] 
The proof leverages the network cross-fitting argument (Algorithm \ref{alg:adaptive}) combined with similar techniques used to derive Theorem \ref{thm:thmmain}. The rate $n_e^{-\xi}$ follows from Assumption \ref{ass:a2}. See Appendix \ref{app:estimated_1} for the complete derivation. 
\end{proof}

Theorem \ref{thm:dr} states that the regret bound depends on two components. The first component depends on the convergence rate of the maximum degree, overlap constant, and experiment size, similar to what was discussed in the presence of a known propensity score (e.g., Corollary \ref{cor:1_main}). For a bounded degree as in Example \ref{exmp:bounded}, $\xi = 1/2$, and $\xi < 1/2$ otherwise. The second component depends on the estimation error of the nuisance functions, and in particular, it depends on the \textit{product} of their convergence rates, in the same spirit of standard conditions in the $i.i.d.$ setting \citep[e.g.,][]{farrell2015robust}.

\begin{rem}[Convergence rate of nuisance functions]
 \label{rem:jj} 
 Appendix \ref{sec:lasso} shows that using Algorithm \ref{alg:adaptive}, $\sqrt{\mathcal{R}_n(A, Z) \times \mathcal{B}_n(A, Z)} = \mathcal{O}(\mathcal{N}_n^2 n_e^{-(\zeta_m + \zeta_e)}/\delta_n)$, where $n_e^{-2\zeta_m}$, and $n_e^{-2\zeta_e}/\delta_n^2$ are the rate of convergence of the mean squared error of the conditional mean and propensity score, respectively, on a sample of \textit{independent observations}.  As a result, whenever $\mathcal{N}_n^{1/2} n_e^{-(\zeta_m + \zeta_e)} = n_e^{-1/2}$ (e.g., $n_e^{-\zeta_m} = n_e^{-\zeta_e} = \mathcal{N}_n^{-1/4} n_e^{-1/4}$), it follows that $\sqrt{\mathcal{R}_n(A, Z) \times \mathcal{B}_n(A, Z)} = \mathcal{O}(n_e^{-\xi})$. Convergence rates for the estimation error of order $\mathcal{N}_n^{1/2} n^{-(\zeta_m + \zeta_e)} = n_e^{-1/2}$ imply that the estimation error of the nuisance functions does \textit{not} affect the rate of the regret bound in Theorem \ref{thm:thmmain} in the absence of estimation error. Appendix \ref{sec:lasso} presents formal results. 
\qed 
\end{rem}

\subsection{Optimization} \label{sec:optimization} 

Next, I discuss the optimization procedure. For simplicity, consider the most agnostic case where $T_i = \sum_{k \in N_i} D_k$ denotes the sum of treated neighbors. Similar reasoning applies to $T_i$ being a known function of the sum of treated neighbors.
Define the estimated effect of assigning to unit $i$ treatment $d$, after treating $t$ neighbors: 
\begin{equation}
\small 
\begin{aligned} 
&q_i(d, t) =  \left\{ \frac{1\{\sum_{k \in N_i} D_k = t, D_i = d \}}{e\Big(d, t,   Z_{k \in N_i}, R_{k \in N_i}, Z_i, |N_i|\Big)}\Big(Y_i - m^c\Big(d, t, Z_i, |N_i|\Big)\Big) + m^c\Big(d, t,  Z_i, |N_i|\Big)\right\},
\end{aligned} 
\end{equation} 
where I omit the dependence of $q_i(\cdot)$ with $m^c$ and $e$ for the sake of brevity. Second, let 
$
B_i(\pi, h) = 1\Big\{\sum_{k \in N_i} \pi(X_k) = h \Big\} 
$
be the indicator of whether $h$ neighbors of individual $i$ have been treated under policy $\pi$. We have the following: 
\begin{equation}
\small  
\begin{aligned} 
\sum_{h = 0}^{|N_i|}\Big\{\Big(q_i(1, h) - q_i(0,h)\Big)\pi(X_i) B_i(\pi, h) + B_i(\pi, h) q_i(0,h)\Big\} = q_i\Big(\pi(X_i),\sum_{k \in N_i} \pi(X_k)\Big).
\end{aligned} 
\end{equation} 

Namely, each element in the sum is weighted by the indicator $B_i(\pi,h)$, and only one of these indicators is equal to one. 
I can then define variables $p_i, p_i = \pi(X_i), \pi \in \Pi_n$ that denote the treatment assignment of each unit $i$ either sampled $(R_i = 1)$ or friend of a sampled unit $(R_i^f = 1)$. 
For example, for 
$
\pi(X_i) = 1\{X_i^\top \beta \ge 0\}, \beta \in \mathcal{B}, 
$ \citep{florios2008exact}, 
$$
\small 
\begin{aligned} 
\frac{X_i^\top \beta}{|C_i|} < p_i \le \frac{X_i^\top \beta}{|C_i|} + 1, \quad C_i > \sup_{\beta \in \mathcal{B}} |X_i^\top \beta|, \quad  p_i \in \{0,1\}, 
\end{aligned} 
$$
where $p_i$ is equal to one if $X_i^\top \beta$ is positive, and zero otherwise.
The key intuition is to introduce additional variables to write $B_i(\pi, h)$ using mixed-integer linear constraints. Define
$$
t_{i,h,1} = 1\left\{\sum_{k \in N_i} p_k \ge h\right\}, \quad  
t_{i,h,2} = 1\left\{\sum_{k \in N_i} p_k \le h\right\}, \quad h \in  \{0, \cdots,|N_i|\}.
$$
It follows that $t_{i,h,1} + t_{i,h,2} - 1 = B_i(\pi, h)$, and that such variables admit a mixed-integer linear program characterization. Formally, the optimization program is  
\begin{equation} \label{eqn:myopt} 
\small 
\begin{aligned} 
&\max_{\{u_{i, h}\}, \{p_i\}, \{t_{i,1,h}, t_{i,2,h}\}}  \sum_{i = 1}^n \sum_{h = 0}^{|N_i|}R_i \Big\{\Big(q_i(1, h) - q_i(0,h)\Big)u_{i,h}  + q_i(0,h)(t_{i,h,1} + t_{i,h,2} - 1) \Big\}
\end{aligned} 
\end{equation} 
under the following constraints:
\begin{equation} \label{eqn:constraints} 
\small 
\begin{aligned}
&(A) \quad p_i = \pi(X_i), \quad \pi \in \Pi_n, \quad \forall i: R_i = 1 \text{ or } R_i^f = 1  \\
&(B) \quad \frac{p_i + t_{i,h,1} + t_{i,h,2}}{3} - 1 < u_{i,h} \le  \frac{p_i + t_{i,h,1} + t_{i,h,2}}{3}, u_{i,h} \in \{0,1\} \quad \forall h \in \{0, \cdots , |N_i|\}, \forall i: R_i = 1 \\ 
&(C) \quad  \frac{(\sum_k A_{i,k} p_k - h)}{ |N_i| + 1} < t_{i,h,1} \le  \frac{(\sum_k A_{i,k} p_k - h)}{ |N_i|+1} + 1,  t_{i,h,1} \in \{0,1\}, \quad \hspace{-2.6mm} \forall h \in \{0, \cdots , |N_i|\}, \forall i: R_i =1 \\
&(D) \quad  \frac{( h -\sum_k A_{i,k} p_k)}{  |N_i| + 1} < t_{i,h,2} \le \frac{(h - \sum_k A_{i,k} p_k)}{  |N_i| + 1} + 1,  t_{i,h,2} \in \{0,1\},\hspace{-2.6mm} \quad  \forall h \in  \{0, \cdots ,|N_i|\}, \forall i: R_i = 1.
\end{aligned} 
\end{equation} 
The first constraint can be replaced by methods discussed in previous literature, such as maximum scores \citep{florios2008exact}. By contrast, the additional constraints are due to interference. In practice, including additional (superfluous) constraints stabilizes the optimization problem. These are $\sum_h (t_{i,h,1} + t_{i,h,2} - 1) = 1$ for each $i$ and $\sum_i \sum_h u_{i,h} = \sum_i p_i$.
Whenever units have no neighbors, the objective function is proportional to the one discussed in \cite{KitagawaTetenov_EMCA2018} under no interference. Therefore, the formulation generalizes the MILP formulation to the case of interference. 

\begin{thm} \label{prop:optimization} Let $T_i = \sum_{k \in N_i} D_k$. Then
$
\hat{\pi} \in \text{argmax}_{\pi \in \Pi_n} W_n(\pi, m^c, e),
$
if and only if it maximizes Equation \eqref{eqn:myopt} with constraints in Equation \eqref{eqn:constraints}. 
\end{thm} 

The proof of Theorem \ref{prop:optimization} follows directly from the argument in the current section.


 \subsection{Derivation of Theorem \ref{thm:thmmain}: main steps} \label{sec:proof}

This section includes a sketch of the proof of Theorem \ref{thm:thmmain}, whereas Appendix \ref{app:regret} presents formal definitions and derivations. Readers not interested in the proof of Theorem \ref{thm:thmmain} can skip to Section \ref{sec:extensions} (or \ref{sec:application}). For brevity, in the argument below, I further assume $Y_i \in [-\Gamma', \Gamma']$ for a finite constant $\Gamma' < \infty$; that is, the outcome is uniformly bounded. Appendix \ref{app:regret} presents derivations for unbounded outcomes. 
Because $\Pi_n \subseteq \Pi$, it follows that 
 \begin{equation} \label{eqn:help3_main_text}
 \small 
 \begin{aligned} 
 \mathbb{E}\left[\sup_{\pi \in \Pi_n} W_{A, Z}(\pi) - W_{A, Z}(\hat{\pi}_{m^c, e} )\Big| A, Z \right] & \le 2 \mathbb{E}\left[\sup_{\pi \in \Pi_n}\Big|W_n(\pi, m^c, e) - W_{A, Z}(\pi)\Big| | A, Z\right] \\ 
 & \le 2 \mathbb{E}\left[\sup_{\pi \in \Pi}\Big|W_n(\pi, m^c, e) - W_{A, Z}(\pi)\Big| | A, Z\right],
 \end{aligned} 
 \end{equation}  
 our focus will be bounding the right-hand side of Equation \eqref{eqn:help3_main_text}. Define 
$$
\small 
\begin{aligned} 
Q_i(\pi, A, Z) = R_i \left[\frac{I_i(\pi)}{e_i(\pi)}\Big(Y_i - m_i^c(\pi)\Big) + m_i^c(\pi)\right],  
\end{aligned}
$$  
where the dependence with $e, m^c$ is suppressed for convenience. Define $\mathcal{Q}_n(\pi, A, Z)$ as the \textit{joint} distribution, of $Q_i$, namely  $\Big(Q_i(\pi, A, Z)\Big)_{i=1}^n \Big| A, Z \sim \mathcal{Q}_n(\pi, A, Z)$, for given $\pi, A, Z$. 

Define $(\sigma_i)_{i=1}^n$ $i.i.d.$ Rademacher random variables independent of observables and unobservables ($P(\sigma_i = 1) = P(\sigma_i = -1) = 1/2$) and $\mathbb{E}_\sigma[\cdot]$ denotes the expectation only with respect to $(\sigma_i)_{i=1}^n$, conditional on observables and unobservables. 
By Lemma \ref{prop:welfare} $\mathbb{E}[W_n(\pi)| A,Z] = W_{A, Z}(\pi)$ for all $\pi \in \Pi$.  


\paragraph{Symmetrization with network data} Next, I extend the symmetrization argument \citep[e.g., Lemma 6.4.2 in][]{vershynin2018high} to the context of this paper. 
Define \\ $\Big(Q_i'(\pi, A, Z)\Big)_{i=1}^n \Big| A, Z \sim \mathcal{Q}_n(\pi, A, Z)$, an independent copy of $\Big(Q_i(\pi, A, Z)\Big)_{i=1}^n$, conditional on $(A, Z)$. It  follows 
 \begin{equation} 
 \label{eqn:helper1_text}
 \small 
 \begin{aligned} 
\eqref{eqn:help3_main_text} & \le \mathbb{E}\left[\sup_{\pi \in \Pi}\Big|\frac{1}{n_e} \sum_{i=1}^n \Big[Q_i(\pi, A, Z) - Q_i'(\pi, A, Z)\Big]\Big| | A, Z \right] \quad (\because \text{Jensen's inequality}). 
 \end{aligned} 
  \end{equation} 
Ideally, using standard symmetrization arguments, I would like to bound the right-hand side in Equation \eqref{eqn:helper1_text}. Unfortunately, this is not possible because of dependence. I instead partition observations into groups of conditionally independent random variables. I then obtain bounds that depend on the number of such groups. 
Let $A^2$ be the adjacency matrix obtained by connecting neighbors and two-degree neighbors under $A$. Let $\chi_n(A^2)$ be the smallest number of groups such that each group does not contain two units that either are neighbors or share a common neighbor under $A$, and $\mathcal{C}_n^2 = \{\mathcal{C}_n^2(g)\}_{g=1}^{\chi_n(A^2)}, \mathcal{C}_n^2(g) \subseteq \{1, \cdots, n\}$, the smallest set of such groups.  Then
 \begin{equation} \label{eqn:helper5_text}
 \small 
 \begin{aligned} 
 & \mathbb{E}\left[\sup_{\pi \in \Pi}\Big|\frac{1}{n_e} \sum_{i=1}^n \Big[Q_i(\pi, A, Z) - Q_i'(\pi, A, Z)\Big]\Big|  | A, Z\right] \quad (\because \text{ triangular inequality}) \\ &\le 
 \sum_{g \in \{1, \cdots, \chi_n(A^2)\}} \underbrace{\mathbb{E}\left[\sup_{\pi \in \Pi}\Big|\frac{1}{n_e} \sum_{i \in \mathcal{C}_n^2(g)} \Big[Q_i(\pi, A, Z) - Q_i'(\pi, A, Z)\Big]\Big|  | A, Z\right]}_{(II)}. 
\end{aligned} 
 \end{equation}  
 Note that $Q_i$ equals zero if $R_i = 0$. Therefore, under Assumption \ref{ass:quasi} (ii), it follows that $Q_i$ can be written as a function of $\Big[R_i\Big(\varepsilon_i, R_i, \varepsilon_{D_i}, R_i^f, R_{j \in N_i}, R_{j \in N_i}^f, \varepsilon_{D_{j \in N_i}}, Z_i, |N_i|, Z_{k \in N_i}\Big)\Big]$, where $R_i^f = 1\{\sum_k A_{i,k} R_k > 0\}$. For each $j \in N_i$, $R_{j}^f$ equals one almost surely conditional on $R_i = 1$. $R_i^f$ is instead a deterministic function of $R_{j \in N_i}$. As a result, because $Q_i = 0$ if $R_i = 0$ almost surely, one can write $Q_i$ only as a function of $\Big[R_i\Big(\varepsilon_i, R_i, \varepsilon_{D_i}, R_{j \in N_i}, \varepsilon_{D_{j \in N_i}}, Z_i, |N_i|, Z_{k \in N_i}\Big)\Big]$, its dependence with $R_{j \in N_i}^f$ can be dropped.
 
 Under the distributional assumptions of each of these components, it follows that $Q_i$ are jointly independent if they are not neighbors and do not share a common neighbor conditional on $A, Z$.\footnote{In particular, we leverage here Assumption \ref{ass:sutnva} (interference is local); Assumption \ref{ass:quasi} (ii) (treatments are conditionally independent); Assumption \ref{ass:ignorability} (B) (unobservables are conditionally independent if two individuals do not share a common neighbor). I relax Assumption \ref{ass:ignorability} (B) in Section  \ref{sec:extensions}.} Because $Q_i, Q_i' | A, Z$ have the same marginal distribution by construction,  
 $$
 \small 
 \begin{aligned} 
 (II) \le 2 \mathbb{E}\Big[\underbrace{\mathbb{E}_\sigma\Big[ \sup_{\pi \in \Pi} \Big| \frac{1}{n_e} \sum_{i \in \mathcal{C}_n^2(g)} \sigma_i Q_i(\pi, A, Z)\Big| \Big]}_{(III)} \Big| A, Z \Big].
 \end{aligned}
 $$

  \paragraph{Bound on the function class complexity} I control $(III)$ with Lemma \ref{lem:finallemma}. The idea of the lemma is the following. First, note that here $Q_i(\pi, \cdot)$ depends on $\pi$ through $\Big(\pi(X_i), \sum_{k \in N_i} \pi(X_k)\Big)$. 
  I show that $Q_i(\pi, A, Z)$ is Lipschitz in $\Big(\sum_{k \in N_i} \pi(X_k)\Big)$ with the Lipschitz contant proportional to $\frac{\Gamma'}{\gamma \delta_n}$. I then leverage extensions of the Ledoux-Talagrand contraction inequality \citep[Lemma \ref{lem:ledoux}, which extends Theorem 4.12 in][]{ledoux2011probability} to show 
\begin{equation} \label{eqn:final_main}
\small 
\begin{aligned} 
 \mathbb{E}_\sigma\left[\sup_{\pi \in \Pi}\Big|\frac{1}{n_e} \sum_{i \in \mathcal{C}_n^2(g)} \sigma_i Q_i(\pi, A, Z)\Big| \right] \le \frac{\bar{C} \Gamma'}{\gamma \delta_n}  \mathbb{E}_\sigma\left[\sup_{\pi \in \Pi}\Big|\frac{1}{n_e} \sum_{i \in \mathcal{C}_n^2(g)} R_i \sigma_i \Big(\sum_{k \in N_i} \pi(X_k)\Big) \pi(X_i)\Big| \right]
 \end{aligned} 
\end{equation} 
for a universal constant $\bar{C} < \infty$. Using Theorem 5.22 in \cite{wainwright2019high}, I can bound the right-hand side in Equation \eqref{eqn:final_main}, by an integral of the covering number of a function class obtained from $ \Big(\sum_{k \in N_i} \pi(x_k)\Big) \pi(x_i), \pi \in \Pi$ -- which we can bound by a function of the maximum degree and the VC dimension of $\Pi$ (Lemma \ref{lem:boundnumber}) -- and $\frac{\sqrt{\sum_{i=1}^n R_i 1\{i \in \mathcal{C}_n^2(g)\}}}{n_e}$. 

\paragraph{Conclusions} Collecting terms, for a universal constant $\bar{C} < \infty$, I show 
$$
\small 
\begin{aligned} 
\eqref{eqn:help3_main_text} & \le \bar{C} \times  \sum_{g=1}^{\chi_n(A^2)} \times \frac{\Gamma'}{\gamma \delta_n}  \times \sqrt{\log(\mathcal{N}_n) \mathcal{N}_n \mathrm{VC}(\Pi)} \times \mathbb{E}\left[\frac{\sqrt{\sum_{i=1}^n R_i 1\{i \in \mathcal{C}_n^2(g)\}}}{n_e}\Big| A, Z\right] \\ 
& \le \bar{C} \times  \sqrt{\chi_n(A^2)} \times \frac{\Gamma'}{\gamma \delta_n}  \times \sqrt{\log(\mathcal{N}_n) \mathcal{N}_n \mathrm{VC}(\Pi)} \times \mathbb{E}\left[\frac{\sqrt{\sum_{i=1}^n R_i}}{n_e}\Big| A, Z\right] \quad (\because \text{concavity of } \sqrt{x}). 
\end{aligned} 
$$
The first term $\sqrt{\chi_n(A^2)}$ captures the dependence structure. By \cite{brooks1941colouring}'s theorem, $\chi_n(A^2) \le 2 \mathcal{N}_n^2$ (see Lemma \ref{lem:boundnumber}). The second term captures Lipschitz-continuity of the objective function and depends on the overlap $1/\delta_n$. The third term captures the complexity of the function class of interest, increasing in the maximum degree. The last term captures concentration in the sample size. Using Jensen's inequality, $\mathbb{E}\Big[\frac{\sqrt{\sum_{i=1}^n R_i}}{n_e}\Big] \le 1/n_e^{1/2}$.  In Theorem \ref{thm:thmmain}, $\Gamma$ replaces $\Gamma'$ under bounded moments, instead of bounded outcomes. 

\begin{rem}[Independence of sampling indicators] \label{rem:sampling_local_dependent} My results  extend to settings where sampling indicators are locally dependent. For instance, if indicators are dependent between two-degree neighbors, the proof above follows verbatim, because the sampling indicators in the set $\mathcal{C}_n^2(g), g \in \{1, \cdots, \chi(A_n^2)\}$ are independent. \qed 
\end{rem} 

\begin{rem}[Regret conditional on $\varepsilon_i$] \label{rem:epsilon} For known propensity score and uniformly bounded outcome, the proof technique follows verbatim conditional on $\varepsilon_i$, once I define welfare as $\frac{1}{n} \sum_{i=1}^n r\Big(\pi(X_i), \sum_{k \in N_i} \pi(X_k), Z_i, |N_i|, \varepsilon_i\Big)$, conditional on $(\varepsilon_i)_{i=1}^n$, as in a design-based framework \citep[e.g.][]{leung2021network}. In particular, we can invoke verbatim the symmetrization argument in Equation  \eqref{eqn:helper1_text} and follow the same steps, providing stronger guarantees that hold conditional on $(\varepsilon_i)_{i=1}^n$ (without assumptions on $(\varepsilon_i)_{i=1}^n$). However, with an \textit{unknown} propensity score, convergence rates of the estimators in Section \ref{sec:alg1} depend on the \textit{distribution} of $\varepsilon_i$: regret guarantees can only be obtained \textit{in expectation}, after integrating welfare over $\varepsilon_i$ as in \cite{KitagawaTetenov_EMCA2018}, \cite{athey2017efficient}.  \qed 
\end{rem} 
 
 \section{Main extensions} \label{sec:extensions}

I discuss here trimming with poor overlap, higher-order dependence, different target and sample units, and non-reversible treatments. Appendix \ref{app:ext2} contains additional extensions. 

\subsection{Trimming to control overlap} \label{sec:trimming}

In this subsection, I provide regret bounds whenever a few units may present a large degree. I consider the setting where $T_i = \sum_{k \in N_i} D_k$. To guarantee overlap, I introduce the following trimming estimator: 
\begin{equation} \label{eqn:trimming} 
W_n^{tr}(\pi, m^c, e; \kappa_n) = \frac{1}{n} \sum_{i=1}^n  R_i \left\{ \frac{I_i(\pi)}{e_i(\pi)} \Big(Y_i - m_i^c(\pi)\Big) 1\Big\{|N_i| \le \log_\gamma(\kappa_n)\Big\} + m_i^c(\pi)\right\}, 
\end{equation} 
with $e_i(\pi), m_i^c(\pi), I_i(\pi)$ as in Equation \eqref{eqn:welf}. Here, $\log_\gamma(\kappa_n)$ defines the trimming constant, as the logarithm in scale $\gamma$ of a user-specific $\kappa_n$ (with $\gamma$ in Assumption \ref{ass:quasi}). 

The trimming estimator builds on the following idea: it excludes the direct effect on the largely connected nodes (with more than $\log_\gamma(\kappa_n)$ neighbors) but keeps information from the spillovers that such nodes generate. This is because nodes with most connections are those for which overlap restrictions are more likely to fail. Define
  $$
 \hat{\pi}_{\kappa_n}^{tr} \in \mathrm{arg} \max_{\pi \in \Pi_n} W_n^{tr}(\pi, m^c, e; \kappa_n), \quad P_n\Big(|N_i| \ge \log_\gamma(\kappa_n)\Big) = \frac{1}{n} \sum_{i=1}^n 1\Big\{|N_i| \ge \log_\gamma(\kappa_n)\Big\}.
  $$

\begin{thm} \label{thm:trimming}    Suppose that $P_n\Big(|N_i| \ge \log_\gamma(\kappa_n)\Big) < c$, for a constant $c < 1$. Let $T_i = \sum_{k \in N_i} D_k$, and let Assumptions \ref{ass:sutnva}, \ref{ass:ignorability}, \ref{ass:quasi}, \ref{ass:finite_vc}, \ref{ass:general4} hold. Then  
    $$
\mathbb{E}\Big[\sup_{\pi \in \Pi_n} W_{A, Z}(\pi) - W_{A, Z}(\hat{\pi}_{\kappa_n}^{tr} )\Big| A, Z \Big] = \mathcal{O}\left(\frac{\mathcal{N}_n^{3/2}}{\kappa_n} \sqrt{\frac{\log(\mathcal{N}_n) \mathrm{VC}(\Pi)}{n_e}} + P_n\Big(|N_i| \ge \log_\gamma(\kappa_n)\Big)\right). 
    $$
\end{thm}

\begin{proof}[Proof of Theorem \ref{thm:trimming}]
See Appendix \ref{app:regret}.
\end{proof} 
Theorem \ref{thm:trimming} shows we can improve the regret bound for a suitable choice of $\kappa_n$ under restrictions on the degree distribution. For instance, suppose $\sqrt{n}$-many individuals have a degree that can grow in $n$, whereas all other units have a degree bounded by at most $\log_\gamma(\kappa)$, for a constant $\kappa$ independent of $n$. In this case, $P_n(|N_i| \ge \log_\gamma(\kappa)) = \mathcal{O}(\sqrt{\frac{\alpha}{n_e}})$, and the regret is of order $\mathcal{O}\Big(\frac{\mathcal{N}_n^{3/2}}{\kappa} \sqrt{\frac{\log(\mathcal{N}_n) \mathrm{VC}(\Pi)}{n_e}}\Big)$, independent of $\delta_n$. 
Theorem \ref{thm:trimming} illustrates how information can be leveraged from the \textit{degree distribution} to improve convergence rates.

\subsection{Regret with higher-order dependence} \label{sec:jj}

Next, I characterize regret bounds in settings where individuals can depend on friends up to the degree of order $M$, where $M$ is a finite number and unknown. To simplify exposition, I assume the outcome is uniformly bounded.

\begin{ass}[Higher-order dependence and bounded outcome] \label{ass:higher_order} Suppose that for some unknown $M \ge 2$, (A) $\varepsilon_i \perp (\varepsilon_j)_{j \not \in \cup_{k=1}^{M} N_{i, k}} \Big| A, Z$, where $N_{i,k}$ denotes the set of connection of $i$ of degree $k$. Suppose in addition that  (B) $Y_i \in [-\Gamma', \Gamma'],$ for a positive constant $\Gamma' < \infty$. 
\end{ass} 

Under Assumption \ref{ass:higher_order}, unobservables can depend on individuals of at most degree $M$. Suppose $M$ is unknown and researchers do not have information from higher-order neighbors.
Define $m^c: \{0,1\} \times \mathbb{Z} \times \mathcal{Z} \times \mathbb{Z} \mapsto [-\Gamma', \Gamma']$ for some finite $\Gamma' < \infty$, $e^c(\cdot; |N_i|): \mathcal{Z}^{|N_i|} \times \{0,1\}^{|N_i|} \times \mathcal{Z} \mapsto (\gamma \delta_n, 1 - \gamma \delta_n)$, the \textit{pseudo-true} conditional mean function and propensity score, and $\hat{m}, \hat{e}$ their corresponding estimators constructed arbitrarly (e.g., pooling information from all sampled units). Let  
\begin{equation}  \label{eqn:helper_text}
\small 
\begin{aligned}
&\tilde{\mathcal{R}}_n(A, Z) = \frac{1}{n} \sum_{i=1}^n  \mathbb{E}\left[\sup_{d, t}   \Big( \hat{m}(d, t, Z_i, |N_i|) - m^c(d, t,  Z_i, |N_i|)\Big )^2 | A, Z\right]. 
\\ & \tilde{\mathcal{B}}_n(A, Z) =  \frac{1}{n} \sum_{i=1}^n   \mathbb{E}\left[\sup_{d, t}  \Big(\frac{1}{e^c(d, t,Z_{k \in N_i}, R_{k \in N_i},  Z_i)} - \frac{1}{\hat{e}(d, t,Z_{k \in N_i}, R_{k \in N_i},  Z_i)} \Big )^2 | A, Z\right]
\end{aligned}  
\end{equation} 
denote the mean-squared errors of the estimators obtained from all sampled units, averaged over the population covariates and number of neighbors.  Different from Theorem \ref{thm:dr}, we do not need to condition on $R_i = 1$ in Equation \eqref{eqn:helper_text} because no cross-fitting is used, and the estimated nuisance function is independent of $i$'s index.

\begin{thm} \label{thm:estimatedoutcome} Let Assumptions \ref{ass:sutnva}, \ref{ass:quasi} hold, and Condition (C) in \ref{ass:ignorability}, Assumptions \ref{ass:finite_vc}, \ref{ass:a2}, \ref{ass:general4}, \ref{ass:bounded_nuisance}, \ref{ass:higher_order} hold.  Assume either (or both) (i) $e^c(\cdot) = e(\cdot)$, or (ii) Assumption \ref{ass:ignorability} (A) holds and $m^c = m$.  
  Then, for $M \ge 2$, $\xi \in (0, 1/2]$ as in Assumption \ref{ass:a2}:
    $$
    \small 
    \begin{aligned} 
\mathbb{E}\Big[\sup_{\pi \in \Pi_n} W_{A, Z}(\pi) - W_{A, Z}(\hat{\pi}_{\hat{m}, \hat{e}} )\Big| A, Z\Big] =   \mathcal{O}\left(M\mathcal{N}_n^{M/2 - 1} n_e^{-\xi} + \frac{1}{\delta_n}  \sqrt{\max\Big\{\tilde{\mathcal{R}}_n(A, Z), \tilde{\mathcal{B}}_n(A, Z)\Big\}}\right).
\end{aligned} 
    $$
\end{thm} 

\begin{proof}[Proof of Theorem \ref{thm:estimatedoutcome}] See Appendix \ref{app:1_first} 
\end{proof} 
 Theorem \ref{thm:estimatedoutcome} provides a uniform bound on the regret, and it is double robust to correct specification of the conditional mean and the propensity score. The theorem's result depends on the convergence rate of $\hat{e}$ and $\hat{m}$ to their $pseudo$-true value. For parametric estimators of the conditional mean and the propensity score and bounded degree, the regret bounds scale at rate $1/\sqrt{n_e}$, divided by the overlap parameter.  For general machine-learning estimators, the rate can be slower than the parametric one, reflecting the ``cost'' of the lack of knowledge of the degree of dependence $M$. Here, $\mathcal{N}_n^{M/2 - 1}$ captures higher-order dependence. Theorem \ref{thm:estimatedoutcome} does not require that Assumption \ref{ass:ignorability} (A) holds in settings with a correctly specified propensity score, assuming $\hat{m}^c$ converges to \textit{some} pseudo-true value $m^c$.

\subsection{Expected regret with a different target population} \label{app:average_bound}

This subsection compares regret guarantees when units are either drawn from the (larger) target population as described in Section \ref{sec:identification}, or units are drawn from a \textit{different} population from the target population.  Following \cite{KitagawaTetenov_EMCA2018}, and to simplify exposition in this subsection, we  
consider a policy function class $\Pi_n = \Pi$ where $\Pi$ is not data dependent.\footnote{We assume that $\Pi_n = \Pi$ not to define the joint distribution of $(X, A', Z')$ in the definition below. } Consider a population with $n$ individuals, connected under adjacency matrix $A'$ and with covariates matrix $Z'$. For given $(A', Z')$, welfare is defined as 
\begin{equation} \label{eqn:welfare_target} 
W_{A', Z'}(\pi) = \frac{1}{n} \sum_{i=1}^n m\Big(\pi(X_i), \sum_k A_{i,k}' \pi(X_k'), Z_i', \sum_k A_{i,k}'\Big), \quad X_i' \subseteq Z_i'.
\end{equation} 
Consider two notions of regret, the \textit{conditional} and \textit{expected} regret, defined respectively as 
\begin{equation} \label{eqn:expected_regret} 
\begin{aligned} 
\mathcal{R}_{\Pi, A' , Z'}^{\mathrm{cond}} & = \mathbb{E}\left[\sup_{\pi \in \Pi} W_{A', Z'}(\pi) - W_{A', Z'}(\hat{\pi}_{m^c, e}) \Big| A', Z'\right], \\ \mathcal{R}_{\Pi}^{\mathrm{exp}} & = \sup_{\pi \in \Pi} \mathbb{E}\Big[W_{A',Z'}(\pi)\Big] - \mathbb{E}\Big[W_{A',Z'}(\hat{\pi}_{m^c, e})\Big].
\end{aligned}
\end{equation}  
The conditional regret is a function of the target population adjacency matrix and covariates $Z'$, whereas the expected regret takes expectation over $(A', Z')$. The expected regret is (implicitly) a function of the joint distribution of $(A', Z', A, Z)$, since it integrates over the distribution of $(A', Z')$ and $\hat{\pi}$ estimated on the sampled units.

 \begin{figure}[!ht]
 \centering
    \begin{tikzpicture}[auto,
 block/.style ={rectangle, draw=blue, thick, fill=blue!20, text width=5em,align=center, rounded corners, minimum height=2em},
 block1/.style ={rectangle, draw=blue, thick, fill=blue!20, text width=5em,align=center, rounded corners, minimum height=2em},
 line/.style ={draw, thick, -latex',shorten >=2pt},
 cloud/.style ={draw=red, thick, ellipse,fill=red!20,
 minimum height=1em}]

\coordinate (1) at (-1.7,0.8);
\coordinate (2) at (9.5,0.8);
\coordinate (3) at (9.5,5.5);
\coordinate (4) at (-1.7,5.5);

\coordinate (11) at (3.1,0.6);
\coordinate (12) at (14,0.6);
\coordinate (13) at (14,5.7);
\coordinate (14) at (3.1,5.7);

\draw[thick] (1)--(2)--(3)--(4)-- cycle;
\draw[dotted, thick] (11)--(12)--(13)--(14)-- cycle;

  \node[draw, fill = pink, circle] (aaa) at (0, 3.2) {};
  \node[draw, fill = green, circle] (bbb) at (0, 4.1) {};
  \node[draw, fill = green, circle] (ccc) at (-1, 3.5) {};
 \node[draw, fill =green, circle] (ddd) at (-0.8, 2.5) {};
  \node[draw, fill =pink,  circle] (eee) at (0.7, 2.5) {};
  \node[draw, fill = pink, circle] (fff) at (1, 3.5) {};
 \node[circle] (hhh) at (0.9, 4.4) {};
 \node[circle] (iii) at (1.7, 2.9) {};
  \node[circle] (lll) at (1.7, 4) {};
   \node[circle] (mmm) at (-0.9, 4.4) {};
   \node[circle] (nnn) at (-1.6, 3) {};
   \node[circle] (ooo) at (-1.4, 2) {};
   \node[circle] (ppp) at (-1.7, 3.9) {};
     \node[circle] (qqq) at (0, 4.8) {};
       \node[ circle] (rrr) at (0, 1.9) {};
       \node[circle] (sss) at (1.2, 1.9) {};

  \node[draw, fill = pink, ultra thick, circle] (aa) at (6, 3.2) {};
  \node[draw, fill = pink, circle] (bb) at (6, 4.1) {};
  \node[draw, fill = green, circle] (cc) at (5, 3.5) {};
 \node[draw, fill = pink,  circle] (dd) at (5.2, 2.5) {};
  \node[draw, fill = pink, circle] (ee) at (6.7, 2.5) {};
  \node[draw, fill = pink, circle] (ff) at (7, 3.5) {};
 \node[circle] (hh) at (6.9, 4.4) {};
 \node[circle] (ii) at (7.7, 2.9) {};
  \node[circle] (ll) at (7.7, 4) {};
   \node[circle] (mm) at (5.1, 4.4) {};
   \node[circle] (nn) at (4.4, 3) {};
   \node[circle] (oo) at (4.6, 2) {};
   \node[circle] (pp) at (4.3, 3.9) {};
     \node[circle] (qq) at (6, 4.8) {};
       \node[ circle] (rr) at (6, 1.9) {};
       \node[circle] (ss) at (7.2, 1.9) {};

  \node[draw, fill = green, circle] (a) at (12, 3.2) {};
  \node[draw, fill = green, circle] (b) at (12, 4.1) {};
  \node[draw, fill = pink, circle] (c) at (11, 3.5) {};
 \node[draw, fill = green,  circle] (d) at (11.2, 2.5) {};
  \node[draw, fill = pink, circle] (e) at (12.7, 2.5) {};
  \node[draw, fill = green, circle] (f) at (13, 3.5) {};
 \node[circle] (h) at (12.9, 4.4) {};
 \node[circle] (i) at (13.7, 2.9) {};
  \node[circle] (l) at (13.7, 4) {};
   \node[circle] (m) at (11.1, 4.4) {};
   \node[circle] (n) at (10.4, 3) {};
   \node[circle] (o) at (10.6, 2) {};
   \node[circle] (p) at (10.3, 3.9) {};
     \node[circle] (q) at (12, 4.8) {};
       \node[ circle] (r) at (12, 1.9) {};
       \node[circle] (s) at (13.2, 1.9) {};
 
    \draw[-, dashed] (aa) edge (bb) (aa) edge (ee);

    \draw[-] (aaa) edge (bbb) (aaa) edge (ccc)  (aaa) edge (eee);
 
 \draw[-] (bbb) edge (ccc)  (ddd) edge (eee)  (fff) edge (bbb); 
 \draw[-] (fff) edge (hhh) (fff) edge (iii) (eee) edge (iii) (bbb) edge (hhh) (fff) edge (lll); 
\draw[-] (mmm) edge (ccc) (nnn) edge (ccc) (ppp) edge (ccc) (nnn) edge (ddd) (ooo) edge (ddd) (rrr) edge (ddd) (rrr) edge (eee) (sss) edge (eee) (bbb) edge (qqq) (bbb) edge (mmm); 
   
  \draw[-] (a) edge (b)   (a) edge (e);
 
 \draw[-] (b) edge (c) ; 
 \draw[-] (f) edge (h) (f) edge (i) (e) edge (i) (b) edge (h) (f) edge (l); 
\draw[-] (m) edge (c) (n) edge (c) (p) edge (c) (n) edge (d) (o) edge (d) (r) edge (d)  (s) edge (e) (b) edge (q) (b) edge (m); 
 
 \draw[-, ultra thick] (aa) edge (bb) (aa) edge (cc)  (aa) edge (ee);
 
 \draw[-] (bb) edge (cc)  (dd) edge (ee)  (ff) edge (bb); 
 \draw[-] (ff) edge (hh) (ff) edge (ii) (ee) edge (ii) (bb) edge (hh) (ff) edge (ll); 
\draw[-] (mm) edge (cc) (nn) edge (cc) (pp) edge (cc) (nn) edge (dd) (oo) edge (dd) (rr) edge (dd) (rr) edge (ee) (ss) edge (ee) (bb) edge (qq) (bb) edge (mm);

  \node (v) at (0, 5) {$\pi(X_i)$};
  \node (v) at (12, 5) {$\pi(X_i')$};
  \node (v) at (0, 1.5) {$(X_i)_{i=1}^n \subseteq Z$};
   \node (v) at (12, 1.5) {$(X_i')_{i=1}^n \subseteq Z'$};
  \node (v) at (6, 5) {$D_i | Z_i, R_i, R_i^f \sim \mathcal{P}(Z_i, R_i, R_i^f)$};
  \node (v) at (6, 1.5) {$\Big[(Y_i, Z_i, Z_{\mathcal{N}_i}, D_i, D_{\mathcal{N}_i})R_i, R_i\Big]_{i=1}^n$};
  \node (v) at (0.45, 0.3) {Sample from target pop};
  \node (v) at (11.5, 0.3) {Sample \textit{not} from target pop};


    \end{tikzpicture}
 \caption{Example of the experiment (picture at the center) and policy targeting exercise when the sample is drawn from the target population as in Section \ref{sec:policy} (left-hand side) or the sample is \textit{not} drawn from the target population (right-hand side). Green dots denote treated units, and pink dots denote untreated ones. The experiment runs as described in Section \ref{sec:identification}.  Researchers observe the vector of outcome, treatment, neighbors, treatments, and covariates of sampled units ($(Y_i, Z_i, Z_{\mathcal{N}_i}, D_i, D_{\mathcal{N}_i})R_i$), as well as the the identity of whom they sample ($R_i$). When the experiment participants are drawn from the target population, researchers then design a treatment allocation $\pi(X_i)$ for the entire population using information $X_i$, a subset of  $Z_i$ available to policymakers for all $n$ units. When instead the target population is different from the population from which the sample is drawn, policymakers only observe covariates $(X_i')_{i=1}^n$ from the target sample, and the experiment did not use a sample drawn from the target population. 
  }
\label{fig:network2}
\end{figure}

When the target population differs from the population from which we sample experiment participants, we can only hope to control the expected, but not the conditional regret. When instead the target population is the one from which we sample the experiment participants, we can control both notions of regret as shown in the following lemma. 
\begin{lem}[Expected and conditional regret] \label{lem:expected} Suppose that $(A', Z') = (A, Z)$ almost surely, i.e., for any realization of $(A, Z)$, experiment participants are always drawn from the (larger) target population as in Section \ref{sec:identification}. Then 
$$
\mathcal{R}_{\Pi}^{\mathrm{exp}} \le \mathbb{E}\left[\mathcal{R}_{\Pi, A , Z}^{\mathrm{cond}}\right], 
$$ 
where $\mathcal{R}_{\Pi, A , Z}^{\mathrm{cond}}$ is bounded as in Theorem \ref{thm:thmmain} for $\Pi_n = \Pi$. 
\end{lem} 

Lemma \ref{lem:expected} shows that the regret guarantees in Section \ref{sec:estimation} are valid bounds on the expected (and conditional) regret. The proof of Lemma \ref{lem:expected} follows directly from Jensen's inequality and the law of iterated expectations. The main assumption of Lemma \ref{lem:expected} is that the sampled units are drawn from the (larger) target population, which is the main case of interest in this paper. This is a common feature in applications where researchers sample (small groups of) individuals at random from a large region or country \citep[e.g.,][]{cai2015social, egger2019general}, and are interested in scaling the policy up in such a region or country.

Suppose, however, we are interested in implementing the policy on a population different from the one from which we have drawn our sample (e.g., in a different country). In the following theorem, we study guarantees of the proposed procedure for this setting.

\begin{thm}[Sampled units not drawn from the target population] \label{thm:expected} Suppose that the conditions in Theorem \ref{thm:thmmain} hold, with $(A', Z') \perp \Big[A, Z, (Y_i, R_i, D_i)_{i=1}^n\Big]$. For a universal constant $\bar{C} < \infty$, 
$$
\small 
\begin{aligned} 
\mathcal{R}_{\Pi}^{\mathrm{exp}} \le & \bar{C} \frac{\Gamma \mathbb{E}_A\left[\mathcal{N}_n^{3/2} \log^{1/2}(\mathcal{N}_n)\right]}{\gamma \delta_n} \sqrt{\frac{\mathrm{VC}(\Pi)}{n_e}} + 2 \mathbb{E}_{A, Z}\left[\sup_{\pi \in \Pi} \Big|W_{A,Z}(\pi) - \mathbb{E}_{A', Z'}[W_{A', Z'}(\pi)] \Big|\right], 
\end{aligned}
$$    
where $\mathbb{E}_{A, Z}[\cdot]$ is the expectation operator with respect to the distribution of $(A, Z)$.  
\end{thm}

The proof is in Appendix \ref{proof:expected}. 
Theorem \ref{thm:expected} provides a bound on the expected (instead of conditional) regret, allowing the sampled units to be drawn from a population different from the target population. The bound depends on two components. The first mimics the component in Theorem \ref{thm:thmmain} and depends on the expected maximum degree and the expected size of the sampled population $n_e$. The second component instead captures the discrepancy between the population from which the sample is drawn $(A, Z)$ and the target population. 

Suppose that $(A, Z), (A', Z')$ have the same distribution. It follows 
\begin{equation} \label{eqn:second_component} 
\small 
\begin{aligned} 
& \mathbb{E}\left[\sup_{\pi \in \Pi} \Big|W_{A,Z}(\pi) - \mathbb{E}_{A', Z'}[W_{A', Z'}(\pi)] \Big|\right] \\ &= \mathbb{E}\left[\sup_{\pi \in \Pi} \Big|\frac{1}{n} \sum_{i=1}^n m(\pi(X_i), \sum_k A_{i,k} \pi(X_k), Z_i, |\mathcal{N}_i|) - \mathbb{E}\Big[m(\pi(X_i), \sum_k A_{i,k} \pi(X_k), Z_i, |\mathcal{N}_i|)\Big] \Big|\right], 
\end{aligned} 
\end{equation} 
which is \textit{independent} of the sample size $n_e$.  Equation \eqref{eqn:second_component} depends on how fast the conditional mean functions of \textit{all} units $n$ concentrate around their expectation uniformly over $\Pi$. Equation \eqref{eqn:second_component} captures the expected ``cost" of targeting treatments on a population \textit{different} from the one from which the sample was drawn.


\begin{rem}[Trade-offs of collecting network data] In settings where the target population 
is different from the population from which the sample is drawn, it is possible to obtain \textit{faster} regret bounds if researchers \textit{observe} network data from the entire target population. I show this in Appendix \ref{sec:different_sample}, where regret guarantees do not depend on the additional component $\mathbb{E}_{A, Z}\left[\sup_{\pi \in \Pi} \Big|W_{A,Z}(\pi) - \mathbb{E}_{A', Z'}[W_{A', Z'}(\pi)] \Big|\right]$. 
Therefore, Appendix \ref{sec:different_sample}, together with Theorem \ref{thm:expected}, illustrates trade-offs between collecting and not collecting network data from the target sample when sampled units are not drawn from the target population. \qed 
\end{rem}

\section{Empirical application} \label{sec:application}

I now illustrate the proposed method using data originating from \cite{cai2015social}. The authors study the effect of an information session on farmers' weather insurance adoption. Individuals are grouped into 185 addresses (villages) grouped into approximately 50 larger areas. According to the authors, ``All rice-producing
households were invited to one of the sessions, and almost 90\% of them attended. Consequently, this provided us (the authors) with a census of the population of these 185 villages. In total, 5,335 households were surveyed" \citep{cai2015social}. 
Before conducting the experiment, researchers collected network data by asking each individual to indicate at most five friends (who can be in the same or different village). On average, $50\%$ of the connections of sampled units have a different village. More than  $90\%$ of the connections are within the same area.  

In this application, I use information collected from those units for which information about their post-treatment outcome and their friend's identity is available; in total, 4511, a subset of the population. 
The experiment consists of two rounds of information sessions three days apart, each round containing two types of information sessions (simple and intensive). 
Households are randomized to each round and within each round to each type of information session. By using time variation over the two rounds, \cite{cai2015social} show the existence of significant neighbors' spillover effects of an intensive information session on second-round participants' outcomes and no endogenous spillover effects, consistently with the model presented in this paper. I defer a discussion on how the model and assumptions of this paper connect to \cite{cai2015social} to Section \ref{sec:assumptions_app}.

\vspace{-1.5mm}

\subsection{Experimental setup and estimation} 

In the experiment, ``the effect of social networks on insurance take-up is identified by looking at whether second round participants are more likely to buy insurance if they have more friends who were invited to first round intensive sessions"  \citep{cai2015social}. 
Specifically, each round consists of two sessions held simultaneously. In the first round, households are assigned to either a 20-minute session during which researchers offer details about the insurance contract only (control arm, ``simple" information session) or a 45-minute session that also provides details about the expected benefits of insurance (treatment arm, ``intensive" information session). 
In the second round, farmers are assigned similarly to either intensive or simple information sessions. Treatment denotes whether individuals were assigned to an intensive information session (either in the first or second round), whereas, by design, spillovers occurs from the first to second round, as described in \cite{cai2015social}.\footnote{For estimation, I follow \cite{cai2015social} and consider the general network matrix where spillovers only occur from individuals participating in the first information session to individuals in the second session. When evaluating the out-of-sample performance of the policy, I use the original ``general network" as an adjacency matrix because out-of-sample evaluations may not have the sequential structure of the experiment (i.e., some individuals may be treated and asked to make purchase decisions some time after treatment occurs, possibly generating spillovers also on the treated units participating in the same information session).}  Researchers also considered additional arms where they provided information about purchase decisions of other participants (``More info" in Figure \ref{fig:clusters}). Here, I follow the main analysis in \cite{cai2015social} (Table 2), and focus on providing information on insurance benefits only. 

\begin{figure}
\centering
\begin{tikzpicture}[
node distance = 6mm and 6mm, 
  start chain = going right,
    mw/.style = {minimum width=#1},
  list/.style = {rectangle split, rectangle split parts=1,
                 rectangle split horizontal, draw,
                 align=center,
                 text width=14mm, 
                 minimum height=18mm, 
                 inner sep=5mm, on chain}, 
          > = stealth, 
          ]

  \node[list] (A) at (11, 0) {First round};
  \node[list] (B) at (7, -3) {Simple session };
  \node[list] (C) at (10, -3) {Intensive session };
  
    \node[list] (D) at (17, 0) {Second round};
  \node[list] (E) at (14, -3) {Simple session};
  \node[list] (F) at (17, -3) {Intensive session};
  \node[list] (G) at (20, -3) {More Info};
  

  
  \path[->, -triangle 90] (A) edge node [text width=2.5cm,midway,above] {+ 3 days}  (D);
    \draw[->, -triangle 90] (A) edge (B) (A) edge (C) (D) edge (E) (D) edge (F) (D) edge (G);
    
\end{tikzpicture}
\caption{Design in \cite{cai2015social} with household-level treatment randomization. Participants are assigned at random to first and second rounds, and within each round, to different information sessions. Simple session denotes the control arm, where researchers provided information about the insurance contract only. Intensive session is the main treatment arm, where individuals are also provided with information about the benefits of insurance. ``More info" contains additional arms with information about purchase decisions, omitted in our analysis and \cite{cai2015social}'s main analysis. Purchase decisions were made at the end of each information session. 
 } \label{fig:clusters}
\end{figure}


I follow \cite{cai2015social} in the model specification. I estimate a model using all first-round participants and those second-round participants either in the control arm or in the main (intensive) treatment arm.\footnote{Namely, I follow Column (2)-(5) in Table 2 in \cite{cai2015social}. As discussed in \cite{cai2015social}, I can drop observations in the ``More info" treatment arms for estimating the conditional mean function because individuals in the second-round of information sessions do not generate spillover effects by design. } I estimate $\hat{m}$ using the linear probability model for the outcome as in \cite{cai2015social} (Table 2, Col (4)), controlling for area fixed effects, a large set of covariates, the average number of treated neighbors, individual treatment, and the interaction between individual and neighbors' treatments.
The model in \cite{cai2015social} assumes homogenous treatment effects across covariates and villages. Here, I also allow for some heterogeneity in covariates and control for interaction terms of the rice area, a coefficient capturing risk aversion and education with individual and neighbors' treatments. 
Following \cite{cai2015social}, I consider the ``general network" as the main network, that is, the raw network data obtained from surveys where an individual generates spillover effects on $i$ if she was indicated by $i$ as a friend.
I then construct welfare using a \textit{doubly-robust} estimator, with ten-fold cross-fitting as in Algorithm \ref{alg:adaptive2}. The conditional mean is estimated via lasso with a small penalty ($e^{-12}$) to increase the stability of the estimator. The individual propensity score is estimated as in Remark \ref{rem:prop} via a penalized logistic regression with a similar small penalty and $5\%$ trimming.

\subsection{Policy evaluation} 

I ``simulate" the following environment: researchers collect information from villages in the first fifteen areas. They estimate the policy to treat individuals in the remaining villages. In the remaining villages, I assume the policymaker does not have access to the network information but only observes the farmer's education, risk aversion, and rice area. I then compute welfare effects \textit{out-of-sample} on the villages outside the training set (first 15 areas). I repeat the same process via three-fold cross-fitting: I use the second fifteen areas as a training set and the remaining areas as a test set; similarly, I use the last group of areas as a training set and the first thirty areas as a test set. Finally, I compute the average out-of-sample improvements over the three out-of-sample evaluations.
 The out-of-sample evaluation uses the double-robust score, estimated out-of-sample. This exercise mimics settings where participants are sampled from a random subset of villages, and the treatment assigned to the experiment participants cannot be changed after the experiment (see Remark \ref{rem:ww}). In this exercise, I sample areas instead of villages to guarantee that the welfare estimates are independent of the training set, a desirable property for out-of-sample comparisons. 

I contrast to the empirical welfare-maximization method that ignores welfare effects in \cite{athey2017efficient, KitagawaTetenov_EMCA2018} and uses the \textit{same} policy and models of the proposed procedure for both the propensity score and conditional mean function (including that the conditional mean function controls for spillovers). 

As a first exercise, I consider \textit{simple} policies that use information from transformations of two of the three covariates: education, rice area, and a coefficient capturing risk aversion. I compute simple classification trees obtained for all possible two-out-of-three combinations of such variables. The tree finds one optimal split over the first (continuous) variable. The split for the second variable is constrained to be at the population median value. This policy is simple to compute and communicate because it assigns treatments based on a few possible sub-groups. I study out-of-sample improvements while varying the treatment cost as $1\%, 3\%, 5\%$ of the insurance take-up benefit. These costs are comparable to the direct treatment effect that we would estimate once observations from all villages as in Table 2, Col 2 in \cite{cai2015social} are pooled (approximately equal to $3\%$).  Table \ref{tab:all} provides welfare comparisons. We observe welfare improvements up to approximately thirty percentage points and positive effects uniformly across the specifications. These economically significant improvements are obtained despite the network not being observable in the target sample.

As a second exercise, I consider a more complex policy consisting of a maximum score that controls for education, rice area and risk aversion as follows:
\begin{equation} \label{eqn:ms}
\small 
\begin{aligned} 
\pi(X_i) = 1\Big\{\beta_0 + \text{Rice area} \times \beta_1 + \text{Risk aversion } \times \beta_2 + \text{Education} \times \beta_3 > 0\Big\}. 
\end{aligned}
\end{equation} 
The parameters are estimated using the mixed-integer linear program in Section \ref{sec:optimization}. Table \ref{fig:coefficients} reports the average \textit{out-of-sample} welfare improvement estimated via three-fold cross-fitting. It shows out-of-sample welfare improvements up to nine percentage points. This result illustrates the benefits of the procedure for more complex policy functions as well. 

The cross-fitting procedure returns three policies estimated on independent samples. 
To investigate the properties of the estimated policy, Table \ref{fig:coefficients} reports the coefficients of the estimated policy (NEWM) leading to the largest out-of-sample welfare.  
The policy treats individuals who are more risk-averse, less educated, and with a smaller rice area. I contrast this policy with the one that ignores network effects (EWM). The two policies are substantially different when treating individuals with larger rice areas and risk aversion. This difference highlights the importance of taking into account spillover effects for policy targeting because different subgroups should be treated differently with spillover effects.




\begin{table}[!ht] \centering 
  \caption{\textit{Out-of-sample} welfare improvement for a classification tree upon empirical welfare-maximization targeting rule in \cite{athey2017efficient} that does not account for network effects in the design of the policy. Different columns denote different $X$ variables considered for the design of the policy. Here $C$ denotes the cost of the treatment. The policy is a classification tree that allows for the first covariate to be continuous and finds the best split over the first covariate, whereas the second covariate is whether such a variable is above or below its median value or missing.  } 
  \label{tab:all} 
\begin{tabular}{@{\extracolsep{5pt}} cccc} 
\\[-1.8ex]\hline 
\hline \\[-1.8ex] 
 & Educ \& Rice-ar & Educ \& Risk-av & Rice-ar \& Risk-av \\ 
\hline \\[-1.8ex] 
$C = 1\%$ & 0.146 & 0.084 & 0.289 \\ 
$C = 3\%$ & 0.159 & 0.093 & 0.201 \\ 
$C = 5\%$ & 0.093 & 0.111 & 0.143 \\ 
\hline \\[-1.8ex] 
\end{tabular} 
\end{table}

\begin{table}[!ht] \centering 
  \caption{Estimated coefficients for $\pi(X) = 1\{X^\top \beta + \beta_0 > 0\}$, as a function of the rice area of the farmer, a coefficient capturing risk aversion and education. NEWM denotes the proposed method and EWM the double-robust empirical welfare-maximization procedure that ignores network effects. Coefficients are normalized by $\beta_0$, with estimated $\beta_0 = 1$ for both NEWM and EWM. 
The right-hand-side panel reports the average out-of-sample improvement of the NEWM method over policies that ignore network effects, estimated via three folds cross-fitting. $C$ denotes the cost of treatment. The left-hand-side panel reports the estimated coefficients of the policy with the largest out-of-sample welfare for $C = 5\%$. } 
  \label{fig:coefficients} 
  \scalebox{0.95}{
\begin{tabular}{@{\extracolsep{5pt}} cccc|ccc} 
\\[-1.8ex]\hline 
\hline \\[-1.8ex] 
 &  Rice Area & Risk Aversion & Educ &  & Welfare Improvement  &  \\ 
    &  &  &  &  $C = 1\%$ & $3\%$ & $5\%$   \\ 
\hline \\[-1.8ex] 
 NEWM &  -0.068 & 0.395 & -0.397 &   0.074 & 0.085 & 0.093 \\ 
 EWM &  -0.003 & -0.041 & -0.473 \\
\hline \\[-1.8ex] 
\end{tabular} 
}
\end{table}

\subsection{Assumptions and applicability of the method} \label{sec:assumptions_app}

This section concludes with a review of the assumptions required by the proposed procedure and their applicability in the context of the chosen application. Assumption \ref{ass:sutnva} states that interference occurs through the neighbors' treatment assignments. In the context of our application, treatments denote (intensive) information sessions. This paper assumes potential outcomes are (possibly heterogeneous) functions of the number of informed neighbors. As a result, the model is best suited when information effects, as opposed to endorsement effects (i.e., effects driven by neighbors' purchase decisions), occur.
 This restriction is consistent with findings in \cite{cai2015social}, who, by leveraging the sequential structure of the experiment, illustrate information effects and lack of endorsement effects. Quoting \cite{cai2015social}'s abstract: ``By varying the information
available about peers' decisions and randomizing default options, we show that the
network effect is driven by the diffusion of insurance knowledge rather than the purchase
decisions." Insurance knowledge denotes the treatments, and purchase decisions are the outcomes of interest, consistent with our model.

A second restriction this paper imposes is that the maximum degree is sufficiently smaller than the sample size (Assumption \ref{ass:a2}). This restriction avoids overfitting and controls the complexity of the function class of interest. Following the specification in \cite{cai2015social}, here individuals generate spillovers on those people indicated as friends, at most five of them by the design of the survey in \cite{cai2015social}.
Therefore, we interpret our analysis as imposing a restriction on the exposure mapping $g_n(\cdot)$: only the five ``closest" friends (i.e., friends indicated in the survey) generate spillover effects, whereas if there are other friends not indicated in the survey, these generate no or negligible spillovers. This assumption is mantained in \cite{cai2015social}, who state: ``The drawback of
this specification is that the network characterization may be incomplete.
This concern is mitigated by the experience of the pilot test in two villages,
where most farmers named four or five friends (82\% five, 14\% four, and 4\% others) when the number was not limited."  However, it is important to acknowledge that this is an assumption, and future research should explore the sensitivity of the estimated policy to misspecification of the exposure mapping \citep[e.g.,][]{savje2023causal}.


The model specification of the conditional mean function in \cite{cai2015social} imposes a lack of heterogeneity in unobserved network statistics. However, because we augment the estimated conditional mean with the doubly robust score, the estimators also allow for arbitrary network heterogeneity, even if such heterogeneity is not captured in the estimated conditional mean function. The reader may refer to Lemma \ref{prop:welfare} and Theorem \ref{thm:thmmain} for details. 


Finally, the sampling in \cite{cai2015social} guarantees that the welfare estimated using information from participants is an unbiased estimator of welfare once the policy is deployed \textit{at scale} in rural China.  The main reason is that \cite{cai2015social} independently sample 185 small villages in rural China, and, among such, they randomize treatments at the individual level \citep[see Page 7 in][]{cai2015social}. 
This sampling induces local dependence within small villages, which is possible to accommodate in our framework (see Remark \ref{rem:sampling_local_dependent}).

\vspace{-3mm}

  \section{Conclusions} \label{sec:conclusion} 
  
This paper introduced a method for estimating treatment rules under network interference. It considers constrained environments, and accommodates policy functions that do not necessarily depend on network information. The proposed methodology is valid for a large class of networks and does not impose restrictions on covariates. I cast the optimization problem into a mixed-integer linear program and derive guarantees on the policy regret.

The proposed method assumes anonymous and exogenous interactions. Future research can address the case of endogenous interactions by explicitly modeling the endogenous component, or considering weak dependence structures as in \cite{leung2022causal}.

This paper estimates welfare-maximizing policies when the network information on the target sample is not observed by directly maximizing the empirical welfare. Extending our method by incorporating partial information on the population network is an interesting future direction. Combining the high-dimensional estimator of the network as in \cite{alidaee2020recovering} with the empirical welfare-maximization procedure is a possible approach.  

Finally, the literature on influence maximization has often relied on structural models, whereas the literature on treatment choice has focused on semiparametric estimation.
This paper opens new questions about the trade-off between structural assumptions and model-robust estimation of policy functions. Exploring this trade-off remains an open question.

\begin{appendices} 

\section{Practical guide} \label{app:algorithm}

This section provides details on the implementation. 
Algorithm \ref{alg:all} presents a summary. The method is implemented in the R package {\tt NetworkTargeting} available on the author's website.

  \begin{algorithm} [!h]   \caption{Network Empirical Welfare Maximization}\label{alg:all}
    \begin{algorithmic}[1]
    \State Sample individuals in a (quasi)experiment at random from the population of interest (see Remark \ref{rem:ident} for stratified sampling). 
    \State For each sampled individual $(R_i = 1)$ and their friends ($R_i^f = 1$) in the experiment randomize treatment assignments as in Assumption \ref{ass:quasi} (treatments do not need to be randomized among the remaining units in the population). 
    \State Collect information $\Big[R_i\Big(Y_i, D_i, T_i, N_i, Z_i, Z_{k \in N_i}\Big), R_i\Big]_{i=1}^n$, denoting sampling indicators $(R_i = 1)$, post treatment outcome $Y_i$, treatment assignment $D_i$, neighbors' treatments $T_i$, \textit{arbitrary} individual and neighbors' observable characteristics $Z_i, Z_{k \in N_i}$. 
    \State Run Algorithm \ref{alg:adaptive} to estimate $\hat{m}, \hat{e}$ the conditional mean and propensity scores for sampled units $(R_i = 1)$ as defined in Equation \eqref{eqn:estimand}. 
    \State Run the optimization algorithm in Section \ref{sec:optimization} to estimate $\hat{\pi}$ using (arbitrary) individual level information $X_i \subseteq Z_i$. 
    \State Implement $\hat{\pi}$ on the population of interest by collecting individual-level information $(X_i)_{i=1}^n$ for all units in the population. 
    \end{algorithmic} 
\end{algorithm}

\subsection{Cross-fitting: exact solution} 

The cross-fitting algorithm is described in Algorithm \ref{alg:adaptive}. It solves a \textit{sequence} of mixed-integer linear programs of the form
\begin{equation} \label{eqn:coloring} 
     \small 
     \begin{aligned} 
    (K^*, G^*) = & \mathrm{arg} \min_{K \in \mathbb{Z}, G \in \{0,1\}^{n \times K} }  \quad  K \quad 
     \text{such that } & \sum_{k=1}^K \sum_{j = 1}^n R_i R_j 1\{j \not \in \mathcal{I}_i\} G_{j,k} G_{i,k} = 0  \\ 
    & & \sum_{k=1}^K G_{i,k} = 1, \quad \forall i \in \{1, \cdots, n\}, 
     \end{aligned} 
  \end{equation}
where $\mathcal{I}_i$ is defined in Equation \eqref{eqn:I_i} as the set of sampled units who are \textit{not} friends or share a common friend with $i$. Each program consists of finding a feasible solution to the constraints in Equation \eqref{eqn:coloring} for given $K$. The program finds the smallest number of groups $K^*$ and groups partition $G^*$ such that two \textit{sampled} individuals who are friends or share a common friend are not in the same group.  Here, $G_{i,k}^* = 1$ if $i$ is assigned to group $k$.

To estimate the conditional mean, the algorithm performs cross-fitting with $J$ folds within each group, as in standard cross-fitting algorithms \citep{athey2017efficient}. If some of these groups are small (with fewer than $J \check{P}$ units, for some small finite $\check{P}$), Algorithm \ref{alg:adaptive} does not use information from such groups. Here, $\check{P}$ is a small constant and denotes the minimum number of observations such that the estimator is well-defined (e.g., the effective degrees of freedom for linear regression).\footnote{The presence of groups with a few units does not affect our results in Theorem \ref{thm:dr}, because these results are directly expressed in terms of average convergence rates of the nuisance functions (see Appendix \ref{app:estimated_1}). It also does not affect the characterization of the convergence rate in Remark \ref{rem:jj}, and Appendix \ref{sec:lasso}. Intuitively, because $K^* \le 2\mathcal{N}_n^2$ by \cite{brooks1941colouring}'s theorem, the contribution of groups with few observations to the average estimation error is at most $\mathcal{O}(\mathcal{N}_n^2/n_e)$. See Appendix \ref{sec:lasso} for details. } The propensity score is estimated using a similar approach. To estimate $\hat{e}^{(i)}$, researchers can also use information about the \textit{treatments} of the neighbors of sampled units ($R_i = 1$) who have not been sampled, as described in Algorithm \ref{alg:adaptive}.

To gain further intuition on each step, observe that the proposed partition guarantees that the outcomes of two individuals in the same group are independent conditional on $(A, Z)$. Therefore, within each group, we can then apply a standard cross-fitting algorithm. The construction of such groups and the intuition behind the cross-fitting approach is a novel contribution of this paper. 

  \begin{algorithm} [!h]   \caption{Network Cross-Fitting: Exact Optimization}\label{alg:adaptive}
    \begin{algorithmic}[1]
    \Require $\Big[R_i\Big(Y_i, D_i, T_i, N_i, Z_i, Z_{k \in N_i}\Big), R_i\Big]_{i=1}^n$, finite $\check{P}$, finite $J$. 
    \State For each $i \in \{1, \cdots, n\}$ construct 
     \begin{equation} \label{eqn:I_i}
      \small 
\begin{aligned}      
     \mathcal{I}_i & = \Big\{j \in \{1, \cdots, n\} \setminus \{i\}: R_j = 1 \text{ and } j \not \in N_i, N_i \cap N_j = \emptyset \Big\}.
     \end{aligned} 
     \end{equation} 
     \State Solve Equation \eqref{eqn:coloring} and return $K^*, G^*$. 
    \For{$k \in \{1, \cdots, K^*\}$} 
     \begin{algsubstates}
     \State Partition units $\{i: R_i G_{i,k}^* = 1\}$, to $J$ folds $(F_k^j)_{j=1}^J$, equally sized up-to one element. Define $F_{k}^{j(i)}$ the fold containing unit $i$.
     \State For $i$ such that $G_{i,k}^* R_i = 1$ construct the estimator $\hat{m}^{(i)}(\cdot)$ of $m(\cdot)$, using $(Y_v, D_v, D_{k \in N_v}, Z_v, N_v)$ from units $v$ in $(F_{k}^j)_{j=1}^J \setminus F_{k}^{j(i)}$. Let $\hat{m}^{(i)}(\cdot) = 0$ if $\sum_i G_{i,k}^* R_i \le J\check{P}$. 
     \end{algsubstates}
    \EndFor
    \State Repeat for the propensity score: for $i$ such that $G_{i,k}^* R_i = 1$ estimate the \textit{individual} conditional treatment probabilities using $(D_v, Z_v, R_v, (D_k (1 - R_k), R_k, Z_k)_{k \in N_v})$ from units $v$ in folds $(F_{k}^j)_{j=1}^J \setminus F_{k}^{j(i)}$. Aggregate such probabilities to construct an estimator of $e(\cdot)$ for unit $i$, $\hat{e}^{(i)}(\cdot)$ as in Remark \ref{rem:prop}. Let $1/\hat{e}^{(i)}(\cdot) = 0$ if $\sum_i G_{i,k}^* R_i \le J \check{P}$. 
        \State Define $\hat{m}_i(\pi), \hat{e}_i(\pi)$ as in Equation \eqref{eqn:m_hati} and $W_n(\pi, \hat{m}, \hat{e})$ as in Equation \eqref{eqn:crossfit}. 
\State \Return $W_n(\pi, \hat{m}, \hat{e})$. 
         \end{algorithmic}
\end{algorithm}

\subsection{(Approximate) network cross-fitting with subgraphs}

Algorithm \ref{alg:adaptive2} presents a relaxation of network cross-fitting. 
It fixes $K$, and creates $K$ groups \textit{recursively}. Each iteration, it constructs two groups to \textit{maximize} the number of individuals who are friends or share a common friend and are assigned to the \textit{same} group. It then repeats the same optimization within each group until we obtain $K$ groups in total. The algorithm constructs subgraphs by solving recursively max-cut optimization problems (see Algorithm \ref{alg:recursive}). For each unit $i$, Algorithm \ref{alg:adaptive2} then estimates the conditional mean function using all groups \textit{except} the group assigned to unit $i$. To estimate the propensity score, I construct subgraphs where I maximize the number of individuals who are neighbors (but not necessarily neighbors of neighbors) in each subgraph.\footnote{The reason is that, due to the independence of treatments in Assumption \ref{ass:quasi} (ii), the estimated propensity score is independent of unit $i$'s outcome if it is estimated using information from treatments different from $(D_i, D_{k \in N_i})$.} The slackness parameter $s$ in Algorithm \ref{alg:recursive} guarantees subgraphs have approximately the same number of units up to $s$ units (e.g., five or ten). 

The rationale is the following. If the network presents $K$ completely independent and equally sized clusters,  the algorithm will recover such clusters. In this case, unit $i$'s prediction would use information from clusters except the one containing $i$; the predicted value for unit $i$ would be independent of $i$'s outcome, avoiding overfitting. 
 The algorithm approximates this setup by constructing subgraphs that minimize the number of connections between such subgraphs.\footnote{Although optimization for clusterings with networks goes beyond the scope of this paper, we note that \cite{leung2021network} presents an extensive discussion where clusters are not independent. 
}  I recommend choosing $K$ by leveraging prior knowledge of the data, such as using the number of villages or regions. For example, in the empirical application, units present almost all the connections within same \textit{large areas} with 47 total areas; therefore, any $K \le 47$ (e.g., $K = 10$) guarantees independent subgraphs. Also, note that the effective sample size only shrinks by a factor $(K-1)/K = \mathcal{O}(1)$.


  \begin{algorithm} [!h]   \caption{Network Cross-Fitting: Approximate Optimization}\label{alg:adaptive2}
    \begin{algorithmic}[1]
    \Require $\Big[R_i\Big(Y_i, D_i, T_i, N_i, Z_i, Z_{k \in N_i}\Big), R_i\Big]_{i=1}^n$, slackness parameter $s$, $K$ folds.
    \State Assign individuals into $K$ folds by running Recursive Opt in Algorithm \ref{alg:recursive} with $\tilde{n} = n$, and slackness $s$. 
     \State For $i: R_i = 1$, construct $\hat{m}^{(i)}(\cdot)$, the estimator of $m(\cdot)$ for unit $i$, using data in all except $i$'s fold.
    \State Repeat for the propensity score: run Algorithm \ref{alg:recursive} with $\mathcal{H}_i = \{j \in \{1, \cdots, n\}: j \not \in N_i, R_j + \sum_k A_{j,k} R_k > 0\}$ in lieu of $\mathcal{I}_i$. For each unit $i$, construct $\hat{e}^{(i)}(\cdot)$, the estimator of $e(\cdot)$ for unit $i$ by: (i) estimating individual treatment probabilities with units in all folds except the one containing $i$; (ii) aggregating such probabilites as in Remark \ref{rem:prop}. 
        \State Construct $\hat{e}^{(i)}, \hat{m}^{(i)}$ and $W_n(\pi, \hat{m}, \hat{e})$ as in Equation \eqref{eqn:crossfit}. 
\Return $W_n(\pi, \hat{m}, \hat{e})$. 
         \end{algorithmic}
\end{algorithm}

  \begin{algorithm} [!h]   \caption{Recursive Opt}\label{alg:recursive}
    \begin{algorithmic}[1]
    \Require input size $\tilde{n}$, $(R_i, \mathcal{I}_i)_{i=1}^{\tilde{n}}$, with $\mathcal{I}_i$ as in Equation \eqref{eqn:I_i}, slackness parameter $s$, $K$
    \State Solve 
    $$
    \small 
    \begin{aligned} 
    \vspace{-10mm}
 \hspace{-5mm}  G^* \in \mathrm{arg} \min_{G \in \{0,1\}^{\tilde{n} \times \tilde{n}}} \sum_{i=1}^{\tilde{n}} \sum_{j \neq i}^{\tilde{n}} G_i (1 - G_j) 1\{j \in \mathcal{I}_i\} & R_i R_j  \quad  G_i \in \{0,1\}, i \in \{1, \cdots, \tilde{n}\},  \\ 
  & \frac{1}{n} \sum_{i=1}^n G_i \in \left[\frac{1}{2\tilde{n}}\sum_{i=1}^{\tilde{n}} R_i - s/\tilde{n}, \frac{1}{2\tilde{n}} \sum_{i=1}^{\tilde{n}} R_i + s/\tilde{n}\right].
   \end{aligned} 
    $$ 
    \If{$K = 2$}
    \State 
    \Return $G^*$.
    \Else 
    \State 
    \Return 
    $$
    \small 
\begin{aligned}     
    \hspace{-5mm} \left[G^*, \text{Recursive Opt}\left(\sum_{i=1}^{\tilde{n}} G_i^*, (R_i, \mathcal{I}_i)_{G_i^* = 1}, S', \frac{K}{2}\right),\text{Recursive Opt}\left(\tilde{n} - \sum_{i=1}^{\tilde{n}} G_i^*, (R_i, \mathcal{I}_i)_{G_i^* = 0}, S', \frac{K}{2}\right)\right].
    \end{aligned}
    $$  
    \EndIf 
         \end{algorithmic}
\end{algorithm}

Appendix \ref{app:ext2} contains additional extensions, Appendix \ref{app:numerics} a numerical study , and Appendix \ref{app:derivations_all} derivations. Appendix \ref{app:algorithm} at the end of the main text contains the algorithms. 

\vspace{-2mm}

\onehalfspacing

 \bibliography{my_bib2}
\bibliographystyle{chicago}

 \section{Additional extensions} \label{app:ext2}

\vspace{-1mm}


\subsection{Estimation error of nuisance functions with Algorithm \ref{alg:adaptive}} \label{sec:lasso}

This section examines the estimation error $\sqrt{\mathcal{R}_n(A, Z) \times \mathcal{B}_n(A, Z)}$ in Theorem \ref{thm:dr}. 
Consider estimating $m(\cdot)$ with Algorithm \ref{alg:adaptive}. Algorithm \ref{alg:adaptive} first partitions the units into $K^*$ groups. Within each group, it constructs $J$ equally sized folds. 
For two units $(i,v)$, define $\phi_v^m(i) \in \{0,1\}$ with $\phi_v^m(i) = 1$ if \textit{all} of the following conditions hold unit $v$ is sampled ($R_v = 1$); $v$ is in the \textit{same} partition $k \in \{1, \cdots, K^*\}$ of $i$; and $v$ is in any fold except the one containing unit $i$.\footnote{Following Algorithm \ref{alg:adaptive}'s definitions, $\phi_v^m(i) = 1\{v \in (F_{k}^j)_{j=1}^J \setminus F_{k}^{j(i)}, k \text{ such that } i \in \cup_j F_k^j\}$.} The effective sample size for estimation of $\hat{m}^{(i)}$ is $\sum_{v = 1}^n R_v \phi_v^m(i)$ because, Algorithm \ref{alg:adaptive} uses sampled units not in the same fold of $i$, but in its same partition $k$. Define $\phi_v^e(i) \in \{0,1\}$,  with $\phi_v^e(i) = 1$ if \textit{all} of the following conditions hold: (a) unit $v$ is sampled or, if not sampled, one of its friends is sampled ($R_v = 1$ or $(1 - R_v)R_v^f = 1$); (b) $v$ is in the \textit{same} partition $k \in \{1, \cdots, K^*\}$ of $i$; and (c) $v$ is in any fold except the one containing unit $i$, once we run Algorithm \ref{alg:adaptive} to estimate $e(\cdot)$. Let $m \in \mathcal{M}, e \in \mathcal{E}$, for function classes $\mathcal{M}, \mathcal{E}$, and assume
\begin{equation} \label{eqn:rate} 
\small 
\begin{aligned} 
\mathcal{R}_n(A, Z) = \mathcal{O}\Big(\frac{1}{n} \sum_{i=1}^n C_{\mathcal{M}} \mathbb{E}\Big[\Big(1 + \sum_{v=1}^n R_v \phi_v^m(i)  \Big)^{-2\zeta_m}\Big|R_i = 1, A, Z\Big]\Big) \\
\mathcal{B}_n(A, Z) = \mathcal{O}\Big(\frac{1}{n} \sum_{i=1}^n \frac{1}{\delta_n^2} C_{\mathcal{E}} \mathbb{E}\Big[\Big(1 + \sum_{v=1}^n R_v \phi_v^e(i) \Big)^{-2\zeta_e}\Big| R_i = 1, A, Z\Big]\Big)
\end{aligned} 
\end{equation}  
for some $1/2 \ge \zeta_m, \zeta_e > 0$, and $C_{\mathcal{M}}, C_{\mathcal{E}}$ capturing the complexity of the function class. 
Here, $\zeta_m$ characterizes the convergence rate of the conditional mean function on a sample of \textit{independent} units (by Algorithm \ref{alg:adaptive}), with $\Big(1 + \sum_{v=1}^n R_v \phi_v^m(i)  \Big)$ denoting the effective sample size to estimate $\hat{m}_i$. Similarly, $\zeta_e$ for the propensity score. I rescale the rates for the propensity score by $1/\delta_n^2$ because the propensity score is bounded from zero by $\delta_n$.
Equation \eqref{eqn:rate} also captures the contribution to the estimation error of those units $i$ belonging to groups with a few (finite number of) observations (see Algorithm \ref{alg:adaptive}).\footnote{For those units $i$ with a finite number of observations in their partition $k$, $\sum_{v=1}^n R_v \phi_v^m(i) = \mathcal{O}(1)$, and $\mathcal{O}\left(\mathbb{E}\Big[\Big(1 + \sum_{v=1}^n R_v \phi_v^m(i)  \Big)^{-2\zeta_m}\Big|A, Z, R_i = 1\Big]\right)$ is bounded away from (does not converge to) zero for $i$.}

\begin{prop} \label{prop:rate_estimator} Suppose the conditions in Theorem \ref{thm:dr} and Equation \eqref{eqn:rate} hold, and $n_e = \alpha n, \alpha \in (0,1)$. Then 
$
\sqrt{\mathcal{R}_n(A, Z) \times \mathcal{B}_n(A, Z)} = \mathcal{O}\left(\frac{\mathcal{N}_n^2 C_{\mathcal{M}}^{1/2} C_{\mathcal{E}}^{1/2}}{\delta_n n_e^{\zeta_m + \zeta_e}}\right). 
$
In addition, if \\ 
$\mathcal{N}_n^{1/2} C_{\mathcal{M}}^{1/2} C_{\mathcal{E}}^{1/2}/n_e^{ \zeta_m + \zeta_e} = \mathcal{O}\left(n_e^{-1/2}\right)$, then
$
\mathbb{E}\Big[\sup_{\pi \in \Pi_n} W_{A, Z}(\pi) - W_{A, Z}(\hat{\pi}_{\hat{m}, \hat{e}})\Big| A, Z\Big]= \mathcal{O}\left(n_e^{-\xi}\right).
$ 
\end{prop}
 
See Appendix \ref{app:lasso} for the proof. 
Proposition \ref{prop:rate_estimator} characterizes the rate of the estimation error. Here, $ \mathcal{N}_n^{1/2} C_{\mathcal{M}}^{1/2} C_{\mathcal{E}}^{1/2}/n_e^{\zeta_m + \zeta_e} = \mathcal{O}\left(n_e^{-1/2}\right)$ holds for a large class of estimators under conditions on the maximum degree. An example is lasso. Under fixed sparsity, bounded regression matrix, and regularities in \cite{negahban2012unified}, $\zeta_m =  1/2$, $\mathcal{C}_{\mathcal{M}} = \log(p)$, where $p$ is the dimension of the regression matrix. To attain $\mathcal{N}_n^{1/2} C_{\mathcal{M}}^{1/2} C_{\mathcal{E}}^{1/2}/n_e^{ \zeta_m + \zeta_e} = \mathcal{O}\left(n_e^{-1/2}\right)$, we only need that $\zeta_e$ for the propensity score is such that $\mathcal{N}_n^{1/2} \mathcal{C}_{\mathcal{E}}^{1/2} \log^{1/2}(p)/n_e^{\zeta_e} = \mathcal{O}\left(1\right)$.

\vspace{-1mm}

\subsection{Welfare with spillovers on non-compliance} \label{sec:non_comp}

 Consider the setting where spillovers also occur over individuals' compliance. Namely, let $D_i \in \{0,1\}$ denote the assigned treatment and $S_i 
\in \{0,1\}$ denote the selected treatment from individual $i$. I model non-compliance as follows: 
\begin{equation} \label{eqn:comp} 
\small 
\begin{aligned} 
Y_i = r\Big(S_i, \sum_{k \in N_i} S_k, Z_i, |N_i|, \varepsilon_i \Big), \quad S_i = h_{\theta}\Big(D_i, \sum_{k \in N_i} D_k, Z_i, |N_i|, \nu_i \Big). 
\end{aligned} 
\end{equation} 
I let $\nu_i$ be exogenous unobservables, independent from $\varepsilon_i$ (see Proposition \ref{thm:selection}), and $(r(\cdot), \theta)$ unknown, with $\theta$ denoting the set of parameters indexing $h$. 
Similarly to what discussed in Section \ref{sec:identification}, let 
$
W_{A, Z}(\pi)= \frac{1}{n} \sum_{i=1}^n \mathbb{E}\left[Y_i \Big| A, Z, \Big\{D_i = \pi(X_i)\Big\}_{i=1}^n \right]
$ be the welfare under $\pi$.

\begin{prop}[Identification] \label{thm:selection}
Let Equation \eqref{eqn:comp} hold with 
$
 \varepsilon_i \perp \Big((\nu_j)_{j=1}^n, (\varepsilon_{D_j})_{j=1}^n \Big) \Big| A, Z$, \\ $\nu_i \Big| A, Z, (\varepsilon_{D_j})_{j=1}^n \sim_{i.i.d.} \mathcal{P}_\nu.   
$  
Let $P_\theta(S_i = 1|\cdot)$ denotes the conditional probability of selection into treatment indexed by the parameters $\theta$. 
For each $i \in \{1,\cdots, n\}$,
$$
\small 
\begin{aligned} 
& \mathbb{E}\left[Y_i \Big| A, Z, \Big\{D_i = \pi(X_i)\Big\}_{i=1}^n \right] = \sum_{d \in \{0,1\}, s \in \{0, \cdots, |N_i|\}} \mathbb{E}\Big[Y_i \Big| Z_i, |N_i|, S_i = d, \sum_{k \in N_i} S_k = s\Big] \times H_i(d,s, \pi),  \\ 
& H_i(d,s, \pi)  = P_{\theta}\Big(S_i = d\Big|  Z_i, |N_i|, V_i(\pi)\Big)  \sum_{u_1, \cdots, u_{l}: \sum_v u_v = s} \prod_{k = 1}^{|N_i|}  P_{\theta}\Big(S_{N_i^{(k)}} = u_k \Big| Z_{N_i^{(k)}}, |N_{N_i^{(k)}}|, V_{N_i^{(k)}}(\pi) \Big), 
\end{aligned}   
$$
where $V_i(\pi) = \Big\{D_i = \pi(X_i), \sum_{k \in N_i} D_k = \sum_{k \in N_i} \pi(X_k), Z_i, Z_{k \in N_i}\Big\}$. 
\end{prop}

See Appendix \ref{sec:ee} for the proof. 
Proposition \ref{thm:selection} is an identification result.  The welfare effect of an incentive $\pi$ depends on conditional means and $H_i(\cdot)$. Here $H_i(\cdot)$ denotes the conditional probability of selecting into treatment, conditional on the individual and neighbors' incentives.  Its expression only depends on the individual probability of selected treatments $P_{\theta}(S_i = 1|\cdot)$, conditional on individual's and neighbors' treatment assignments. Interestingly, $H_i(\cdot)$ also depends on the treatment assigned to the second-degree neighbors; therefore, information from second-degree neighbors is required for identification. 
Literature on non compliance includes \cite{kang2016peer}, \cite{vazquez2020causal}.  These references do not study welfare maximization. This motivates a different identification strategy here. 

\vspace{-1mm} 

\subsection{Reweighting with known and different target population}  \label{sec:different_sample}

Here, we study settings where the target population differs from the population from which the sample is drawn \textit{and} the adjacency matrix of the target population is \textit{known}. 

 Consider a population with $n$ individuals, connected under adjacency matrix $A'$ and with covariates matrix $Z'$, and $(A', Z')$ are \textit{observed} by the researcher. Welfare is as in Equation \eqref{eqn:welfare_target}.  
Define $\mathcal{S}_n(A, Z)$ as the empirical support of $Z_i, Z_{k \in N_i}, |N_i|$ for given adjacency matrix $(A, Z)$, and similarly $\mathcal{S}_n(A', Z')$ for $A', Z'$. 
$|\mathcal{S}_n(A, Z)| \le n$ by construction. Define 
$
L(z, \mathbf{x}, l) = \frac{1}{n} \sum_{i=1}^n 1\Big\{Z_i = z, Z_{k \in N_i} = \mathbf{x}, \sum_k A_{i,k} = l \Big\}, L'(z, \mathbf{x}, l)  = \frac{1}{n} \sum_{i=1}^n 1\Big\{Z_i' = z, Z_{k \in N_i'}' = \mathbf{x}, \sum_k A_{i,k}' = l \Big\}, 
$ 
the number of units in each population with individual covariates $z$, neighbors' observables $\mathbf{x}$, and number of friends $l$. 
Estimate the empirical welfare as 
$$
\small 
\begin{aligned} 
\tilde{W}_n(\pi, m^c, e) = \frac{1}{n_e} \sum_{i=1}^n R_i \frac{L'\Big(Z_i, Z_{k \in N_i}, |N_i|\Big)}{L\Big(Z_i, Z_{k \in N_i}, |N_i|\Big)} \left\{ \frac{I_i(\pi)}{e_i(\pi)} \Big(Y_i - m_i^c(\pi)\Big) + m_i^c(\pi)\right\}. 
\end{aligned}
$$  
Here, the empirical welfare reweights observations by the ratio of the empirical distributions in the target population and the sampled units. Importantly, the functions $L(\cdot), L'(\cdot)$ must be observed by the researcher. $L(\cdot)$ is observed under the sampling assumptions in Section \ref{sec:identification}, whereas observing $L'(\cdot)$ assumes that researcher observe $(A', Z')$ from the target population.  

\begin{prop} \label{prop:different2} Suppose the conditions in Theorem \ref{thm:thmmain} hold conditional also on $(A', Z')$, and $\mathcal{S}_n(A',Z') \subseteq \mathcal{S}_n(A, Z)$ almost surely. Let $\hat{\pi}^t \in \mathrm{arg} \max_{\pi \in \Pi_n} \tilde{W}_n(\pi, m^c, e)$. Then, for a universal constant $\bar{C} < \infty$, 
$
\mathbb{E}\Big[\sup_{\pi \in \Pi_n} W_{A', Z'}(\pi) - W_{A',Z'}(\hat{\pi}^t) \Big| A, Z, A', Z'\Big] \le \frac{\bar{C} \Gamma \bar{L}_{A, Z, n} \mathcal{N}_n^{3/2}}{\gamma \delta_n}\sqrt{\frac{\log(\mathcal{N}_n) \mathrm{VC}(\Pi)}{n_e}}, 
$ 
where $\bar{L}_{A, Z, n} =\max_{(Z_i, Z_{k \in N_i}, |N_i|) \in \mathcal{S}_n(A,Z)} L'\Big(Z_i, Z_{k \in N_i}, |N_i|\Big) \Big/ L\Big(Z_i, Z_{k \in N_i}, |N_i|\Big)$. 
\end{prop} 
 
See Appendix \ref{app:main_last}  for a proof. 
Proposition \ref{prop:different2} shows that regret bounds depend on the largest ratio between the empirical distribution on the target and sampled units over the empirical support of the individuals, and neighbors' covariates and of degree. An important assumption is that the support $\mathcal{S}_n(A', Z')$ is contained in the support $\mathcal{S}_n(A, Z)$.

 \subsection{Constraints on $\Pi_n$ that depend on $D$} \label{sec:constrain_pi}
 
 Following Remark \ref{rem:ww}, in this subsection, I discuss a policy-function class 
 \begin{equation} \label{eqn:policy2} 
 \small 
 \begin{aligned} 
 \tilde{\Pi}_n = \Big\{\tilde{\pi}: \mathcal{X} \times \{0,1\} \mapsto \{0,1\}, \tilde{\pi}(x,d) = \pi(x)(1 - d) + d, \pi \in \Pi_n \Big\}, 
 \end{aligned} 
 \end{equation} 
  for $\Pi$ with finite VC dimension. Here $\tilde{\pi}(D_i, X_i)$ is one almost surely if the treatment in the experiment is one $(D_i = 1)$. I define $e, m^c$ as in Equation \eqref{eqn:welf}, here functions of $\tilde{\pi}$.

  \begin{prop} \label{prop:3} Let Assumptions \ref{ass:sutnva}, \ref{ass:ignorability}, \ref{ass:quasi}, \ref{ass:finite_vc}, and \ref{ass:general4} hold. Consider a policy class $\tilde{\pi}(X_i, D_i), \tilde{\pi} \in \tilde{\Pi}_n$, with $\tilde{\pi}_{m^c, e}^* \in \mathrm{arg} \max_{\tilde{\pi} \in \tilde{\Pi}_n} W_n(\tilde{\pi}, m^c, e)$. For a universal constant $\bar{C}<\infty$,
    $$
    \small 
    \begin{aligned} 
\mathbb{E}\Big[\sup_{\pi \in \tilde{\Pi}_n} W_{A, Z}(\pi) - W_{A, Z}(\tilde{\pi}_{m^c, e}^* ) \Big| A, Z\Big] \le \bar{C} \frac{\Gamma \mathcal{N}_n^{3/2}}{\gamma \delta_n} \sqrt{ \frac{\log(\mathcal{N}_n) \mathrm{VC}(\Pi)}{n_e}}.
\end{aligned} 
    $$ 
  \end{prop}

 See Appendix \ref{app:more} for the proof. Proposition \ref{prop:3} extends our results for policies constrained to always assign treatments to the treated individuals in the experiment.



\section{A numerical study} \label{app:numerics}

I simulate data as
$
Y_i = \frac{1}{\max 1, |N_i|} \Big( X_i \beta_1 + X_i \beta_2 D_i + \mu\Big) \sum_{k \in N_i} D_k  + X_i \beta_3 D_i+ \varepsilon_i, \varepsilon_i = \frac{\eta_i + \sum_{k \in N_i} \eta_k}{\sqrt{2 (|N_i| + 1)}},  
$
with $\eta_i \sim_{i.i.d.} \mathcal{N}(0,1)$. I simulate covariates as $X_{i} \in [-1, 1]^4$, with each entry drawn independently and uniformly between $[-1,1]$. I draw $\beta_3   \in \{-1.5, 1.5\}$ with equal probabilities. I consider five versions of NEWM described in the caption of Table \ref{tab:summaries}. 

I compare NEWM to methods that ignore network effects from \cite{KitagawaTetenov_EMCA2018, athey2017efficient}. Each method uses a policy function of the form 
 $\pi(X_i) = 1\Big\{X_{i,1} \phi_1 + X_{i,2} \phi_2 + \phi_3 \ge 0\Big\},$ estimated via MILP. First, I consider a geometric network formation of the form
$
A_{i,j} = 1\Big\{|X_{i,2} - X_{j,2}|/2 + |X_{i,4} - X_{j,4}|/2 \le \sqrt{4/2.75n}\Big\}.
$
 In the second set of simulations, I generate Barabasi-Albert networks. I draw $n/5$ edges uniformly according to Erd\H{o}s-Rényi graph with probabilities $10/n$, and second, I draw sequentially connections of the new nodes to the existing ones with probability equal to the average number of connections of the existing nodes. I simulate over $200$ data sets with $n_e = n$, and evaluate the performance out-of-sample over $1000$ networks, drawn from the same distribution. Results are in Table \ref{tab:summaries}.  For $n$ sufficiently large $(n = 200)$, the five specifications of NEWM yield comparable results. NEWM outperforms methods that ignore spillovers across all specifications.

\begin{table*}[!ht]\centering
\caption{Out-of-sample median \textit{welfare} over $200$ replications. DR is the method in \cite{athey2017efficient} with estimated balancing score and EWM PS is the method in \cite{KitagawaTetenov_EMCA2018} with known balancing score. NEWM\_out1 is NEWM with a correctly specified outcome model, and NEWM\_out2 its equivalent with approximate network cross-fitting. NEWM\_dr1 is the doubly robust equivalent controlling for the number of treated neighbors, and NEWM\_dr2, NEWM\_dr3 control for a binned version of the number of treated neighbors as in Remark \ref{rem:j}, with and without approximate network cross-fitting.  GE denotes the geometric network, and AB the Albert-Barabasi. 
}    \label{tab:summaries} 
\ra{1.3}
\scalebox{0.9}{
\begin{tabular}{@{}lrrrcrrrcrrrcrrrcrrr@{}}\toprule
Welfare & \multicolumn{2}{c}{$n = 50$} & \ & \multicolumn{2}{c}{$n = 70$} & \ & \multicolumn{2}{c}{$n = 100$} & \ & \multicolumn{2}{c}{$n = 150$} & \ & \multicolumn{2}{c}{$n = 200$} \\
\cmidrule{2-3} \cmidrule{5-6}  \cmidrule{8-9}   \cmidrule{11-12}  \cmidrule{14-15} 
&GE& AB& &GE& AB &&GE& AB  &&GE& AB &&GE& AB  \\ \midrule
DR & 1.49 & 0.94 && 1.49 & 1.08 && 1.38 & 1.05 && 1.53& 0.95 && 1.42& 0.95 \\
EWM PS  &1.21 &0.93 && 1.23 &0.92&& 1.32& 0.93 && 1.38& 0.90&& 1.29& 0.95 \\
NEWM\_out1 & 1.74& 1.31 & &1.87 & 1.38 & & 1.93 & 1.37 && 1.91 & 1.40 && 2.00& 1.39 \\
NEWM\_out2 & $1.77$ &1.34& & 1.87 & 1.41 && 1.91 & 1.37 &&1.95 & 1.38&& 1.98& 1.39 \\
NEWM\_dr1 & 1.78 & 1.22 && 1.89 & 1.33 &&1.89 & 1.37 && 1.94 & 1.28 && 1.95& 1.33 \\
NEWM\_dr2 & $1.69$ &1.21& & 1.83 & 1.36 && 1.84 & 1.33 &&1.82 & 1.31&& 1.94& 1.38 \\
NEWM\_dr3 & $1.45$ &1.15& & 1.75 & 1.25 && 1.79 & 1.28 &&1.81 & 1.28&& 1.88& 1.35 \\
\bottomrule
\end{tabular}
}
\end{table*}

\vspace{-7mm}

 \section{Derivations} \label{app:derivations_all}
 
 \vspace{-2mm} 
 
 \subsection{Notation} 
 
\begin{defn}[Proper Cover] \label{defn:cover}
Given an adjacency matrix $A \in \mathcal{A}_n$, with $n$ rows and columns, a family $\mathcal{C}_n = \{\mathcal{C}_n(g)\}$ of  disjoint subsets $\mathcal{C}_n(1), \mathcal{C}_n(2), \cdots$ of $\{1, \cdots, n\}$ is a proper cover of $A$ if $\cup_g \mathcal{C}_n(g) = \{1, \cdots, n\}$ and $\mathcal{C}_n(g) \subseteq \{1, \cdots, n\}$ consists of units such that for any pair of elements $\{i, k \in \mathcal{C}_n(g), k \neq i\}$, $A_{i,k} = 0$.  \qed 
\end{defn} 
\begin{defn}[Chromatic number] \label{defn:chromatic} 
The chromatic number $\chi_n(A)$, denotes the size of the smallest proper cover of $A$. \qed 
\end{defn}

\begin{defn} \label{defn:matrix} 
For a given matrix $A \in \mathcal{A}_n$, I define $A^2 \in \mathcal{A}_n$ the adjacency matrix such that $A_{i,j} = 1$ if $(i,j)$ are either neighbors or they share at least a common neighbor. Similarly $A^M(A)$ is the adjacency matrix obtained after connecting units sharing common neighbors up to $M^{th}$ degree; $N_{i,M}$ is the set of neighbors of individual $i$ for an adjacency matrix $A^M$. \qed 
\end{defn}

The proper cover of $A_n^2$ is defined as $\mathcal{C}_n^2 = \{\mathcal{C}_n^2(g)\}_{g = 1}^{\chi(A^2)}$ with chromatic number $\chi(A_n^2)$. Similarly $\mathcal{C}_n^M = \{\mathcal{C}_n^M(g)\}_{g=1}^{\chi(A^M)}$ with chromatic number $\chi_n(A_n^M)$ is the proper cover of $A_n^M$. For a given set $\mathcal{C}_n^M(g)$, I denote $|\mathcal{C}_n^M(g)|$ the number of elements in such a set.

I will refer to $\chi(A)$ as $\chi_n(A_n)$ whenever clear from the context. Let 
$$
\small 
\begin{aligned} 
e_i^c(\pi) = e^c\Big(\pi(X_i), T_i(\pi), Z_{k \in N_i}, R_{k \in N_i}, Z_i, |N_i|\Big), \quad m_i^c(\pi) = m^c\Big(\pi(X_i), T_i(\pi), Z_i, |N_i|\Big), 
\end{aligned} 
$$ for given functions $e^c, m^c$, and $I_i(\pi) = 1\{T_i(\pi) = T_i, \pi(X_i) = D_i\}$, similarly to Equation \eqref{eqn:I_iii}. In the presence of estimation error, define $\hat{e}_i(\pi), \hat{m}_i(\pi)$ their corresponding estimators.

Following \cite{devroye2013probabilistic}'s notation, for $x_1^n = (x_1, ..., x_n)$ being arbitrary points in $\mathcal{X}^n$, for a function class $\mathcal{F}$, with $f \in \mathcal{F}$,  $f : \mathcal{X} \mapsto \mathbb{R}$, let
$
\mathcal{F}(x_1^n) = \left\{f(x_1), ..., f(x_n): f \in \mathcal{F} \right\}.
$

\begin{defn} For a class of functions $\mathcal{F}$, with $f : \mathcal{X} \mapsto \mathbb{R}$, $\forall f \in \mathcal{F}$ and $n$ data points $x_1, ..., x_n \in \mathcal{X}$ define the $l_q$-covering number
$
\mathcal{M}_q\Big(\eta, \mathcal{F}(x_1^n)\Big)
$
to be the cardinality of the smallest cover $\{s_1, ..., s_N\}$, with $s_j \in \mathbb{R}^n$, such that for each $f \in \mathcal{F}$, there exist an $s_j \in \{s_1, ..., s_N\}$ such that 
$
(\frac{1}{n} \sum_{i=1}^n |f(x_i) - s_j^{(i)}|^q)^{1/q} < \eta. 
$
For $\bar{F}$ the envelope of $\mathcal{F}$, define the Dudley's integral as $\int_0^{2 \bar{F}} \sqrt{\log\Big(\mathcal{M}_1(\eta, \mathcal{F}(x_1^n))\Big)} d\eta$. \qed 
\end{defn}

For random variables $X = (X_1, ..., X_n)$, denote $\mathbb{E}_{X}[.]$ the expectation with respect to $X$, conditional on the  other variables inside the expectation operator. 
\begin{defn} \label{defn:rademacher}
Let $X_1, ..., X_n$ be arbitrary random variables. Let $\sigma = \{\sigma_i\}_{i=1}^n$ be $i.i.d$ Rademacher random variables ($P(\sigma_i = -1) = P(\sigma_i = 1) = 1/2$), independent of $X_1, ..., X_n$. The empirical Rademacher complexity is
$
\mathcal{R}_n(\mathcal{F}) = \mathbb{E}_{\sigma}\Big[\sup_{f \in \mathcal{F}}| \frac{1}{n} \sum_{i=1}^n \sigma_i f(X_i)| \Big| X_1, ..., X_n\Big].
$
\end{defn}

\vspace{-4mm}

\subsection{Theorems} \label{app:regret} 

I discuss the theorems first. Appendix  \ref{sec:lem} presents the lemmas used for these theorems.

The first theorem controls the supremum of the empirical process of interest with respect to $\Pi \supseteq \Pi_n$ as in Assumption \ref{ass:finite_vc}. Theorem \ref{thm:thm2} imposes the same assumptions as Theorem \ref{thm:thmmain}, except that unobservables can be locally dependent up to the $M^{th}$ degree.

\begin{thm} \label{thm:thm2} Let Assumptions \ref{ass:sutnva}, \ref{ass:ignorability} (C),  \ref{ass:quasi}, \ref{ass:finite_vc},  \ref{ass:general4}, \ref{ass:higher_order} (A) hold. Consider functions $m^c(\cdot), e^c(\cdot)$ such that for all $d \in \{0,1\}, t \in \mathcal{T}_n$ $m^c(d, t, Z_i, |N_i|) \in [-\Gamma, \Gamma]$,  for a finite constant $\Gamma$,  and $e^c(d, t, Z_{k \in N_i}, R_{k \in N_i}, Z_i,  |N_i|) \in (\gamma \delta_n, 1 - \gamma \delta_n)$ almost surely. Suppose that either (or both) (i) $e^c = e$, or (ii) also Assumption \ref{ass:ignorability} (A) hold and $m^c = m$. Then for any $n \ge 1, M \ge 2$, and a universal constant $\bar{C}<\infty$
    \begin{equation} \label{eqn:thm1} 
 \small    
    \begin{aligned} 
& \mathbb{E}\Big[\sup_{\pi \in \Pi} |W_n(\pi, m^c, e^c) - W_{A, Z}(\pi)| \Big| A, Z  \Big]  \le \bar{C} \frac{\Gamma}{\gamma \delta_n} \sqrt{\frac{M \mathcal{N}_n^{M + 1} \log(\mathcal{N}_n) \mathrm{VC}(\Pi)}{n_e}}.
\end{aligned} 
    \end{equation}

\end{thm}

\begin{proof}[Proof of Theorem \ref{thm:thm2}] 

I organize the proof as follows. First, I derive a symmetrization argument to bound the supremum of the empirical process in Equation \eqref{eqn:thm1} with the Rademacher complexity of direct and spillover effects. Second, I bound the Rademacher complexity using Lemmas \ref{lem:finallemma}, \ref{lem:final_lemmab}. Section \ref{sec:proof} provides a proof sketch. Define  
$$
\small 
\begin{aligned}  
Q_i(\pi, A, Z) = R_i \left[\frac{I_i(\pi)}{e_i^c(\pi)}\Big(Y_i - m_i^c(\pi)\Big) + m_i^c(\pi)\right],  
\end{aligned} 
$$  
where I suppressed the dependence with $e^c, m^c$. Define $\mathcal{Q}_n(\pi, A, Z)$ the distribution such that $\Big(Q_i(\pi, A, Z)\Big)_{i=1}^n \Big| A, Z \sim \mathcal{Q}_n(\pi, A, Z)$. Define $(\sigma_i)_{i=1}^n$ $i.i.d.$ Rademacher random variables independent of observables and unobservables. Finally, let $\Big(Q_i'(\pi, A, Z)\Big)_{i=1}^n \Big| A, Z \sim \mathcal{Q}_n(\pi, A, Z)$, an independent copy of $\Big(Q_i(\pi, A, Z)\Big)_{i=1}^n$, conditional on $(A, Z)$. Note that $Q_i(\pi, A, Z)$ depends on $\pi$ through $\Big(\pi(X_i), \sum_{k \in N_i} \pi(X_k)\Big)$ by Assumption \ref{ass:sutnva}.

\vspace{-2mm} 

\paragraph{Conditional expectation} 
By definition of $Q_i'$, 
\begin{equation} \label{eqn:helper_main_proof} 
\small 
\begin{aligned} 
\mathbb{E}[W_n(\pi, e^c, m^c)| A,Z] = \frac{1}{n} \sum_{i=1}^n \mathbb{E}[Q_i(\pi, e^c, m^c)| A,Z] =  \frac{1}{n} \sum_{i=1}^n \mathbb{E}[Q_i'(\pi, e^c, m^c)| A,Z].  
\end{aligned} 
\end{equation} 
It follows: 
 \begin{equation} 
 \label{eqn:helper1_app}
 \small 
 \begin{aligned} 
& \mathbb{E}\Big[\sup_{\pi \in \Pi} |W_n(\pi, m^c, e^c) - W_{A, Z}(\pi)| \Big| A, Z  \Big] \quad \\ & =  \mathbb{E}\Big[\sup_{\pi \in \Pi} |W_n(\pi, m^c, e^c) - \mathbb{E}[W_n(\pi, m^c, e^c) | A, Z]| \Big| A, Z  \Big] \quad &(\because \text{Lemma  \ref{lem:doublerobust}})\\ & = \mathbb{E}\Big[\sup_{\pi \in \Pi}\Big|\frac{1}{n_e} \sum_{i=1}^n \Big[Q_i(\pi, A, Z) - \mathbb{E}[Q_i'(\pi, A, Z)| A, Z] \Big]\Big| | A, Z \Big] \quad    &(\because \text{Eq. \eqref{eqn:helper_main_proof}})  \\ 
&= \mathbb{E}\Big[\sup_{\pi \in \Pi}\Big|\frac{1}{n_e} \sum_{i=1}^n \mathbb{E}_{Q'}\Big[Q_i(\pi, A, Z) - Q_i'(\pi, A, Z) \Big| A, Z\Big]\Big| | A, Z \Big] \quad  &(\because  (Q_i')_{i=1}^n \perp (Q_i)_{i=1}^n | A,Z) \\
&\le \mathbb{E}\Big[\sup_{\pi \in \Pi}\Big|\frac{1}{n_e} \sum_{i=1}^n \Big[Q_i(\pi, A, Z) - Q_i'(\pi, A, Z) \Big]\Big| | A, Z \Big] \quad &(\because \text{Jensen's inequality}). 
 \end{aligned} 
  \end{equation} 
  The second to last equality takes the expectation with respect to $Q'$ (given $Q, A, Z$).

\paragraph{Symmetrization and proper cover} 
Recall now Definitions \ref{defn:cover}, \ref{defn:chromatic}, \ref{defn:matrix}. Construct an adjacency matrix $A^M$ with neighbors connected up to the $M^{th}$ degree, with smallest proper cover $\mathcal{C}_n^M = \{\mathcal{C}_n(j)\}_{g=1}^{\chi(A^M)}, \mathcal{C}_n^M(g) \subseteq \{1, \cdots, n\}, \cup_g  \mathcal{C}_n^M(g) = \{1, \cdots, n\}$, and chromatic number $\chi(A^M)$. Note that such a cover always exists.\footnote{For example, in a fully connected network, the chromatic number is $n$, where each group only contains one unit, while in a network with no connection, the chromatic number is one. The size of such cover (chromatic number) will affect the bound in the statement of the theorem via the maximum degree.} By the triangular inequality
 \begin{equation} \label{eqn:helper5_text}
 \small 
 \begin{aligned} 
 & \mathbb{E}\Big[\sup_{\pi \in \Pi}\Big|\frac{1}{n_e} \sum_{i=1}^n \Big[Q_i(\pi, A, Z) - Q_i'(\pi, A, Z)\Big]\Big|  | A, Z\Big]  \\ &\le 
 \sum_{g \in \{1, \cdots, \chi(A^M)\}} \underbrace{\mathbb{E}\Big[\sup_{\pi \in \Pi}\Big|\frac{1}{n_e} \sum_{i \in \mathcal{C}_n^M(g)} \Big[Q_i(\pi, A, Z) - Q_i'(\pi, A, Z)\Big]\Big|  | A, Z\Big]}_{:=II(g)}. 
\end{aligned} 
 \end{equation}  
 
 Observe first that $\mathbb{E}[Q_i(\pi, A, Z) - Q_i'(\pi, A, Z) | A, Z] = 0$ since $Q_i, Q_i'$ have the same distribution. Also, if $R_i = 0$, then $Q_i = 0$. Therefore, by Assumption \ref{ass:sutnva}, and Assumption \ref{ass:quasi} (ii), for a given $\pi$, $Q_i(\pi, A, Z)$ is a deterministic function of 
 $
R_i \Big(R_{k \in N_i}, \varepsilon_{D_i}, \varepsilon_{D_{k \in N_i}}, Z_{k \in N_i}, Z_i, \varepsilon_i, R_{k \in N_i}^f\Big). 
$
Also, note that if $R_i = 1$, then $R_{k}^f = 1$, for $k \in N_i$ almost surely. Therefore, $Q_i$ can be written as a deterministic function of $
\Big(R_i, R_{k \in N_i}, \varepsilon_{D_i}, \varepsilon_{D_{k \in N_i}}, Z_{k \in N_i}, Z_i, \varepsilon_i\Big)
$
only, where we can drop its dependence with $R_{k \in N_i}^f$. 
The following holds. 
\begin{itemize} 
\item By Assumption \ref{ass:quasi} (ii), $\varepsilon_{D_i}$ are $i.i.d.$ and exogenous with respect to $(A,Z, \varepsilon)$; 
\item By Assumption \ref{ass:quasi} (i) $R_i$ are $i.i.d.$ and exogenous;
\item Under Assumption \ref{ass:higher_order} (A), $\varepsilon_i | A, Z$ are independent for individuals who are not neighbors up to degree $M \ge 2$.
\end{itemize} 
As a result, it directly follows that conditional on $A, Z$, for any $M \ge 2$, 
\begin{equation} \label{eqn:yeah}
\small 
\begin{aligned} 
& \Big(R_i, R_{k \in N_i}, \varepsilon_{D_i}, \varepsilon_{D_{k \in N_i}}, Z_{k \in N_i}, Z_i, \varepsilon_i\Big) \perp \Big(R_j, R_{k \in N_j}, \varepsilon_{D_j}, \varepsilon_{D_{k \in N_j}}, Z_{k \in N_j}, Z_j, \varepsilon_j\Big)_{j \not \in \cup_{k=1}^{M} N_{i,k}} | A, Z .   
\end{aligned} 
\end{equation} 
Equation \eqref{eqn:yeah} implies that $Q_i(\pi, A, Z) \perp (Q_j(\pi, A, Z))_{j \not \in \cup_{k=1}^{M} N_{i,k}} | A, Z$. 
Since $(Q_i)_{i=1}^n, (Q_i')_{i=1}^n|A,Z$ have the same \textit{joint} distribution and are independent, we also have 
\begin{equation} \label{eqn:ii}
\small 
\begin{aligned} 
\Big(Q_i(\pi, A, Z) - Q_i'(\pi, A, Z)\Big) \perp \Big(Q_j(\pi, A, Z) - Q_j'(\pi, A, Z)\Big)_{j \not \in \cup_{k=1}^{M} N_{i,k}} | A, Z. 
\end{aligned} 
\end{equation} 
Note that  $(Q_i)_{i \in \mathcal{C}_n^M(g)} =_d  (Q_i')_{i \in \mathcal{C}_n^M(g)} | A,Z$ and are independent (since $\mathcal{C}_n^M$ is deterministic conditional on $A$). Therefore, for each group $\mathcal{C}_n^M(g)$, by Equation \eqref{eqn:ii}, for $i \in \mathcal{C}_n^M(g)$
$$
\small 
\begin{aligned} 
\Big(Q_i(\pi, A, Z) - Q_i'(\pi, A,Z)\Big) \perp 
\Big(Q_j(\pi, A, Z) - Q_j'(\pi, A,Z)\Big)_{j \neq i, j \in \mathcal{C}_n^M(g)} | A, Z .
\end{aligned} 
$$ 
We can then bound $II(g)$ in Equation \eqref{eqn:helper5_text} as follows 
$$
\small 
\begin{aligned} 
II(g)
& = \mathbb{E}\Big[\sup_{\pi \in \Pi}\Big|\frac{1}{n_e} \sum_{i \in \mathcal{C}_n^M(g)} \sigma_i \Big[Q_i(\pi, A, Z) - Q_i'(\pi, A, Z)\Big]\Big|  | A, Z\Big] \\
&\le  \mathbb{E}\Big[\sup_{\pi \in \Pi}\Big|\frac{1}{n_e} \sum_{i \in \mathcal{C}_n^M(g)} \sigma_i Q_i(\pi, A, Z) \Big|  | A, Z\Big] + \mathbb{E}\Big[\sup_{\pi \in \Pi}\Big|\frac{1}{n_e} \sum_{i \in \mathcal{C}_n^M(g)} \sigma_i Q_i'(\pi, A, Z) \Big|  | A, Z\Big] \\
&= 2\mathbb{E}\Big[\sup_{\pi \in \Pi}\Big|\frac{1}{n_e} \sum_{i \in \mathcal{C}_n^M(g)} \sigma_i Q_i(\pi, A, Z) \Big|  | A, Z\Big]. 
\end{aligned} 
$$ 
The first equality follows from independence of $Q_i - Q_i' | A, Z$ within the subset $\mathcal{C}_n^M(g)$, and the fact that $Q_i, Q_i'$ have the same distribution. The second inequality follows from the triangular inequality and $Q_i, Q_i'$ having the same joint distribution given $A, Z$. 
\vspace{-3mm} 
\paragraph{Bound on the Rademacher complexity} The following holds
\begin{equation} \label{eqn:elements_call}
\small 
\begin{aligned} 
& \mathbb{E}\Big[\sup_{\pi \in \Pi}\Big|\frac{1}{n_e} \sum_{i \in \mathcal{C}_n^M(g)} \sigma_i Q_i(\pi, A, Z) \Big|  | A, Z\Big] \le \mathbb{E}\Big[\underbrace{\mathbb{E}_{Y, \sigma}\Big[\sup_{\pi \in \Pi}\Big|\frac{1}{n_e} \sum_{i \in \mathcal{C}_n^M(g)} \sigma_i R_i \frac{I_i(\pi)}{e_i^c(\pi)} Y_i \Big| \Big]}_{:=i(g)} | A, Z\Big] \\ &+ \mathbb{E}\Big[\underbrace{\mathbb{E}_\sigma\Big[\sup_{\pi \in \Pi}\Big|\frac{1}{n_e} \sum_{i \in \mathcal{C}_n^M(g)} \sigma_i R_i \frac{I_i(\pi)}{e_i^c(\pi)} m_i^c(\pi) \Big| \Big]}_{:=ii(g)}  | A, Z\Big]  +
\mathbb{E}\Big[\underbrace{\mathbb{E}_\sigma\Big[\sup_{\pi \in \Pi}\Big|\frac{1}{n_e} \sum_{i \in \mathcal{C}_n^M(g)} \sigma_i R_i m_i^c(\pi) \Big| \Big]}_{:=iii(g)}  | A, Z\Big] , 
\end{aligned} 
\end{equation} 
where $\mathbb{E}_{Y,\sigma}[\cdot]$ denotes the conditional expectation with respect to $(Y, \sigma)$ only, given all other observables and unobservables, and similarly $\mathbb{E}_\sigma[\cdot]$, with respect to $\sigma$ only. 
Let  $\bar{C} < \infty$ be a universal constant. 
I invoke Lemma \ref{lem:final_lemmab} for each element in Equation \eqref{eqn:elements_call} as follows. 
\begin{itemize}
\item I invoke Lemma \ref{lem:final_lemmab} for $i(g)$ with $Y_i$ in lieu of $\Omega_i$ in the statement of Lemma \ref{lem:final_lemmab}, with third moment bounded by $\Gamma^2$ by Assumption \ref{ass:ignorability} (C); and $\frac{I_i(\pi)}{e_i^c(\pi)}$ in lieu of $g_i(\cdot)$ in Lemma \ref{lem:final_lemmab}, with upper bound $U_n = 1/(\gamma \delta_n)$ ($U_n$ as in the statement of Lemma \ref{lem:final_lemmab}) by Assumption \ref{ass:quasi} (iii). Since we sum over elements $R_i 1\{i \in \mathcal{C}_n^M(g)\} = 1$, by Lemma \ref{lem:final_lemmab} 
$$
\small 
\begin{aligned} 
i(g) \le \bar{C} \frac{\Gamma}{n_e \gamma \delta_n} \sqrt{\mathrm{VC}(\Pi) \mathcal{N}_n \sum_{i=1}^n R_i 1\{i \in \mathcal{C}_n^M(g)\} \log(\mathcal{N}_n)}.
\end{aligned} 
$$ 
\item I invoke Lemma \ref{lem:final_lemmab} for $ii(g)$ where we have $\frac{I_i(\pi)}{e_i^c(\pi)} m_i(\pi)$ in lieu of $g_i(\cdot)$ in the statement of Lemma \ref{lem:final_lemmab}, with constant $U_n = \Gamma/(\gamma \delta_n)$ by Assumption \ref{ass:ignorability} (C) and Assumption \ref{ass:quasi} (iii), and $\Omega_i = 1$ in the statement of Lemma \ref{lem:final_lemmab}. Therefore, 
$$
\small 
\begin{aligned} 
ii(g) \le \bar{C} \frac{\Gamma}{n_e \gamma \delta_n} \sqrt{\mathrm{VC}(\Pi) \mathcal{N}_n \sum_{i=1}^n R_i 1\{i \in \mathcal{C}_n^M(g)\} \log(\mathcal{N}_n)}.
\end{aligned} 
$$ 
\item I invoke Lemma \ref{lem:final_lemmab} for $iii(g)$ where we have $m_i(\pi)$ in lieu of $g_i(\cdot)$ with constant $U_n = \Gamma$, and $\Omega_i = 1$ in the statement of Lemma \ref{lem:final_lemmab}. Therefore,  
$$
\small 
\begin{aligned} 
iii(g) \le \bar{C} \frac{\Gamma}{n_e} \sqrt{\mathrm{VC}(\Pi) \mathcal{N}_n \sum_{i=1}^n R_i 1\{i \in \mathcal{C}_n^M(g)\} \log(\mathcal{N}_n)}. 
\end{aligned} 
$$ 
\end{itemize} 
\vspace{-6mm}
\paragraph{Summing the terms} Collecting the terms together, I obtain
$$
\small 
\begin{aligned} 
\eqref{eqn:helper5_text} \le \sum_{g \in \{1, \cdots, \chi(A^M)\}} \mathbb{E}\Big[\frac{\Gamma}{n_e \gamma \delta_n} \sqrt{\mathcal{N}_n \log(\mathcal{N}_n) \mathrm{VC}(\Pi) \sum_{i=1}^n R_i 1\{i \in \mathcal{C}_n^M(g)\}}\Big| A, Z\Big], 
\end{aligned} 
$$ 
 where the expectation is taken with respect to $R = (R_1, \cdots, R_n)$. I write 
\begin{equation} \label{eqn:aii2} 
 \small 
 \begin{aligned} 
 & \sum_{g \in \{1, \cdots, \chi(A^M)\}} \mathbb{E}\Big[\frac{\Gamma}{n_e \gamma \delta_n} \sqrt{\mathcal{N}_n \log(\mathcal{N}_n) \mathrm{VC}(\Pi) \sum_{i=1}^n R_i 1\{i \in \mathcal{C}_n^M(g)\}}\Big| A, Z\Big] \\ &\le 
 \sum_{g \in \{1, \cdots, \chi(A^M)\}} \frac{\Gamma}{n_e \gamma \delta_n} \sqrt{\mathcal{N}_n \log(\mathcal{N}_n) \mathrm{VC}(\Pi) \sum_{i=1}^n \mathbb{E}[R_i|A,Z] 1\{i \in \mathcal{C}_n^M(g)\}} \quad &(\because \text{ Jensen's inequality}) \\
 &=\sum_{g \in \{1, \cdots, \chi(A^M)\}} \frac{\Gamma}{n_e \gamma \delta_n} \sqrt{\mathcal{N}_n \log(\mathcal{N}_n) \mathrm{VC}(\Pi) n_e |\mathcal{C}_n^M(g)|/n} \quad &(\because \mathbb{E}[R_i |A,Z] = n_e/n).  \end{aligned} 
 \end{equation} 
 We have 
 \begin{equation} \label{eqn:aii3} 
 \small 
 \begin{aligned} 
 \eqref{eqn:aii2}
 &\le \chi(A^M)  \frac{\Gamma}{n_e \gamma \delta_n} \sqrt{\mathcal{N}_n \log(\mathcal{N}_n) \mathrm{VC}(\Pi) n_e \frac{1}{\chi(A^M)} \sum_{g \in \{1, \cdots, \chi(A^M)\}} |\mathcal{C}_n^M(g)|/n}  \quad (\because \text{concave } \sqrt{x}) \\
 &= \chi(A^M)  \frac{\Gamma}{n_e \gamma \delta_n} \sqrt{\mathcal{N}_n \log(\mathcal{N}_n) \mathrm{VC}(\Pi) n_e \frac{1}{\chi(A^M)} } =   \frac{\Gamma}{\gamma \delta_n} \sqrt{\frac{\chi(A^M) \mathcal{N}_n \log(\mathcal{N}_n) \mathrm{VC}(\Pi) }{n_e}}. 
 \end{aligned} 
 \end{equation} 
In the first inequality  in \eqref{eqn:aii3} I divided and multiplied by $\chi(A^M)$ and used concavity of the square-root function. In the second equality I used the fact that $\{\mathcal{C}_n^M(g)\}$ contain disjoint sets, with $\sum_g |\mathcal{C}_n^M(g)|= n$. By Lemma \ref{lem:degreebound} $\chi(A^M) \le M \mathcal{N}_n^M$, completing the proof. 
 \end{proof}

 \vspace{-4mm} 
 
	\subsubsection{Theorem \ref{thm:thmmain} and Theorem \ref{thm:estimatedoutcome}} \label{app:1_first}
	
	I state these two theorems as corollaries of Theorem \ref{thm:thm2}. 
\begin{cor} \label{cor:1} Theorem \ref{thm:thmmain} holds.  
\end{cor} 
 \begin{proof} Following \cite{KitagawaTetenov_EMCA2018}, 
\begin{equation} \label{eqn:kk1}
 \small
 \begin{aligned} 
 & \mathbb{E}\Big[\sup_{\pi \in \Pi_n} W_{A,Z}(\pi) - W_{A,Z}(\hat{\pi}_{m^c,e})  \Big| A, Z\Big] \\ &= \mathbb{E}\Big[\sup_{\pi \in \Pi_n} W_{A,Z}(\pi) - W_n(\hat{\pi}_{m^c,e}, m^c, e) +  W_n(\hat{\pi}_{m^c,e}, m^c, e) -  W_{A,Z}(\hat{\pi}_{m^c,e})  \Big|A, Z\Big] \\ &\le 
 \mathbb{E}\Big[\sup_{\pi \in \Pi_n} W_{A,Z}(\pi) - W_n(\pi, m^c, e) +  W_n(\hat{\pi}_{m^c,e}, m^c, e) -  W_{A,Z}(\hat{\pi}_{m^c,e^c})  \Big|A, Z\Big].   \end{aligned} 
\end{equation} 
 We have $\eqref{eqn:kk1} \le \mathbb{E}\Big[2\sup_{\pi \in \Pi_n} |W_{A,Z}(\pi) - W_n(\pi, m^c, e)|  \Big|A, Z\Big]  \le \mathbb{E}\Big[2\sup_{\pi \in \Pi} |W_{A,Z}(\pi) - W_n(\pi, m^c, e)|  \Big|A, Z\Big]$ $(\because \Pi_n \subseteq \Pi)$. The proof completes by Theorem \ref{thm:thm2}, with $M = 2$. 
 \end{proof} 

\begin{cor} Theorem \ref{thm:estimatedoutcome} holds.  
\end{cor}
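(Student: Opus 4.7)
The plan is to combine Lemma \ref{lem:kita3} with Theorem \ref{thm:thmmain} and Assumption \ref{ass:correctoutcome}. By Lemma \ref{lem:kita3},
$$
\sup_{\pi \in \Pi} W(\pi) - W(\hat{\pi}_{\hat{m}^c, \hat{e}^c}^{aipw}) \le 2\sup_{\pi \in \Pi}\bigl|W_n^{aipw}(\pi, m^c, e^c) - W(\pi)\bigr| + 2\sup_{\pi \in \Pi}\bigl|\hat{W}_n^{aipw}(\pi, \hat{m}^c, \hat{e}^c) - W_n^{aipw}(\pi, m^c, e^c)\bigr|.
$$
The expectation of the first term on the right is bounded by the argument already carried out in Theorem \ref{thm:thmmain} (and more specifically by Theorem \ref{thm:thm2}), which delivers the leading $\mathbb{E}[\mathcal{N}_{n,M-1}^{M+1}]\sqrt{\mathrm{VC}(\Pi)/n}$ contribution, with $\mathcal{K}_\Pi(n) = 0$ by hypothesis. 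The task therefore reduces to controlling the second term.

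To bound the second term, I would add and subtract $\frac{1\{\pi(X_i) = D_i,\ S_i(\pi) = \sum_{k\in N_i} D_k\}}{\hat{e}^c(\pi(X_i), S_i(\pi), O_i)}\bigl(Y_i - m^c(\pi(X_i), S_i(\pi), Z_i, |N_i|)\bigr)$ inside the AIPW formula, yielding
\begin{align*}
\hat{W}_n^{aipw}(\pi, \hat{m}^c, \hat{e}^c) - W_n^{aipw}(\pi, m^c, e^c)
&= \frac{1}{n}\sum_{i=1}^n \frac{1\{\cdots\}}{\hat{e}^c(\cdots)}\bigl(m^c(\cdots) - \hat{m}^c(\cdots)\bigr) \\
&\quad + \frac{1}{n}\sum_{i=1}^n 1\{\cdots\}\bigl(Y_i - m^c(\cdots)\bigr)\Bigl(\tfrac{1}{\hat{e}^c(\cdots)} - \tfrac{1}{e^c(\cdots)}\Bigr) \\
&\quad + \frac{1}{n}\sum_{i=1}^n \bigl(\hat{m}^c(\cdots) - m^c(\cdots)\bigr).
\end{align*}
The crucial observation, which also addresses the main technical issue of exchanging the supremum over $\Pi$ with the estimation-error bounds, is that each summand depends on $\pi$ only through the pair $(\pi(X_i), S_i(\pi)) \in \{0,1\}\times\{0,\dots,|N_i|\}$. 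Hence, inside each term, $\sup_{\pi\in\Pi}$ can be replaced pointwise by $\sup_{d\in\{0,1\},\ s\le |N_i|}$ before summing over $i$.

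After this reduction, the first and third summands are controlled, in expectation, by $\mathcal{O}(n^{-\xi_1})$ through the first display of Assumption \ref{ass:correctoutcome}, using the uniform upper bound on $1/\hat{e}^c$ guaranteed by the trimming assumption. The second summand is controlled by $\mathcal{O}(n^{-\xi_2})$ via the identity $\tfrac{1}{\hat{e}^c} - \tfrac{1}{e^c} = \tfrac{e^c - \hat{e}^c}{\hat{e}^c\, e^c}$ together with the uniform overlap on both $\hat{e}^c$ and $e^c$ and the second display of Assumption \ref{ass:correctoutcome}. Combining the two contributions produces an overall $\mathcal{O}(n^{-\xi})$ term with $\xi = \min\{\xi_1,\xi_2\}$, which added to the oracle regret from the first step yields the stated bound. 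No additional network-topology arguments are required in this step, since the nuisance error is handled unit by unit after the $\sup_\pi$-to-$\sup_{d,s}$ reduction, and the $\mathbb{E}[\mathcal{N}_{n,M-1}^{M+1}]$ factor is entirely inherited from the oracle term.
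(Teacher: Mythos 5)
Your proposal is correct and follows essentially the same route as the paper: the same Lemma \ref{lem:kita3} split into an oracle term (handled by Theorem \ref{thm:thm2}) plus a nuisance-error term, the same three-way decomposition of $\hat{W}_n^{aipw}(\pi,\hat{m}^c,\hat{e}^c)-W_n^{aipw}(\pi,m^c,e^c)$, and the same key reduction of $\sup_{\pi\in\Pi}$ to $\sup_{d,s\le|N_i|}$ before invoking Assumption \ref{ass:correctoutcome} together with the overlap and trimming bounds.
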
 
 \begin{proof} \renewcommand{\qedsymbol}{}Following the argument of Corollary \ref{cor:1}, and using the fact that $\Pi_n \subseteq \Pi$, it follows 
 	$$
 	\small
 	\begin{aligned}
  \mathbb{E}\Big[\sup_{\pi \in \Pi_n} W_{A,Z}(\pi) - W_{A,Z}(\hat{\pi}_{\hat{m}, \hat{e}})\Big| A, Z \Big] &\le 2\underbrace{\mathbb{E}\Big[\sup_{\pi \in \Pi} |W_n(\pi, m^c, e^c) - W_{A,Z}(\pi)|\Big| A, Z\Big]}_{(I)}  \\ &\quad + 2 \underbrace{\mathbb{E}\Big[\sup_{\pi \in \Pi} |W_n(\pi,\hat{m}, \hat{e})- W_n(\pi,m^c, e^c)|\Big| A, Z\Big]}_{(II)}  .
 	\end{aligned} 
 	$$
 	Term $(I)$ is bounded by Theorem \ref{thm:thm2}. 
 	 I now study $(II)$. In particular, $(II)$ is equal to
\begin{equation} \label{eqn:kjh}
\small
 	\begin{aligned} 
 	&\mathbb{E}\Big[\sup_{\pi \in \Pi}\Big|\frac{1}{n_e} \sum_{i=1}^n R_i \frac{I_i(\pi)}{\hat{e}_i(\pi)}\Big(Y_i - \hat{m}_i(\pi) \Big) + \frac{1}{n_e} \sum_{i=1}^n R_i \Big(\hat{m}_i(\pi) -  m_i^c(\pi)\Big) - 
 	\frac{1}{n_e} \sum_{i=1}^n R_i \frac{I_i(\pi)}{e_i^c(\pi)}\Big(Y_i - m_i^c(\pi)\Big)\Big| | A, Z\Big] \\ 
 	&\le \mathbb{E}\Big[\sup_{\pi \in \Pi}\Big|\frac{1}{n_e} \sum_{i=1}^n R_i \frac{I_i(\pi)}{\hat{e}_i(\pi)}\Big(Y_i + m_i^c(\pi) - m_i^c(\pi) - \hat{m}_i(\pi) \Big) - 
 	\frac{1}{n_e} \sum_{i=1}^n R_i \frac{I_i(\pi)}{e_i^c(\pi)}\Big(Y_i - m_i^c(\pi)\Big)\Big| | A, Z\Big] \\ 
 	& \quad + \mathbb{E}\Big[\sup_{\pi \in \Pi} |\frac{1}{n_e} \sum_{i=1}^n R_i \Big(\hat{m}_i(\pi) -  m_i^c(\pi)\Big)| \Big| A, Z\Big] \\ 
 	&\le  \mathbb{E}\Big[\sup_{\pi \in \Pi} |\frac{1}{n_e} \sum_{i=1}^n R_i \Big(\hat{m}_i(\pi) -  m_i^c(\pi)\Big)| \Big| A, Z\Big] + \mathbb{E}\Big[\sup_{\pi \in \Pi} |\frac{1}{n_e} \sum_{i=1}^n R_i (\frac{I_i(\pi)}{e_i^c(\pi)} - \frac{I_i(\pi)}{\hat{e}_i(\pi)}) \Big(Y_i -  m_i^c(\pi)\Big) |  \Big| A, Z\Big] \\ 
&  \quad	+ \mathbb{E}\Big[\sup_{\pi \in \Pi} |\frac{1}{n_e} \sum_{i=1}^n R_i \frac{I_i(\pi)}{\hat{e}_i(\pi)} \Big(\hat{m}_i(\pi) -  m_i^c(\pi)\Big)| \Big| A, Z\Big].
 	\end{aligned} 
 	\end{equation} 
 	I inspect each term in Equation \eqref{eqn:kjh}. Since $R_i \in \{0,1\}$
 	$$
 	\small 
 	\begin{aligned} 
 	\mathbb{E}\Big[\sup_{\pi \in \Pi} |\frac{1}{n_e} \sum_{i=1}^nR_i \Big(\hat{m}_i(\pi) -  m_i^c(\pi)\Big)| \Big| A, Z\Big] \le \mathbb{E}\Big[ \frac{1}{n_e} \sum_{i=1}^n \sup_{d, s} R_i |\hat{m}(d,s,Z_i, |N_i|) -  m^c(d,s,Z_i, |N_i|)| \Big| A, Z\Big]. 
 	\end{aligned} 
 	$$ 
 	By Cauchy-Schwarz inequality and the triangular inequality 
 	$$
 	\small 
 	\begin{aligned} 
 	&  \mathbb{E}\Big[ \frac{1}{n_e} \sum_{i=1}^n \sup_{d, s} R_i |\hat{m}(d,s,Z_i, |N_i|) -  m^c(d,s,Z_i, |N_i|)| \Big| A, Z\Big] \\ &\le  \sqrt{\frac{1}{n_e} \sum_{i=1}^n \mathbb{E}[R_i^2]} \sqrt{\mathbb{E}\Big[ \frac{1}{n_e} \sum_{i=1}^n \sup_{d, s}  |\hat{m}(d,s,Z_i, |N_i|) -  m^c(d,s,Z_i, |N_i|)| ^2\Big| A, Z\Big]} \\
 	 &= \sqrt{\mathbb{E}\Big[ \frac{1}{n} \sum_{i=1}^n \sup_{d, s}  |\hat{m}(d,s,Z_i, |N_i|) -  m^c(d,s,Z_i, |N_i|)| ^2\Big| A, Z\Big]} \quad (\because \mathbb{E}[R_i^2] = \mathbb{E}[R_i] = n_e/n).
 	 \end{aligned} 
 	$$ 
 	 For the second term we have (let $e_i^c(d,t) = e^c(d,t,Z_{k \in N_i}, R_{k \in N_i},  |N_i|)$ and similarly for $\hat{e}_i(d,t)$)
 	$$
 	\small 
 	\begin{aligned} 
 	& \mathbb{E}\Big[\sup_{\pi \in \Pi} |\frac{1}{n_e} \sum_{i=1}^n R_i (\frac{I_i(\pi)}{e_i^c(\pi)} - \frac{I_i(\pi)}{\hat{e}_i(\pi)}) \Big(Y_i -  m_i^c(\pi)\Big) |  \Big| A, Z\Big] \le 2 \Gamma' \mathbb{E}\Big[\sup_{\pi \in \Pi} \frac{1}{n_e} \sum_{i=1}^n R_i |(\frac{I_i(\pi)}{e_i^c(\pi)} - \frac{I_i(\pi)}{\hat{e}_i(\pi)})  |  \Big| A, Z\Big] \\
 	&\le 2 \Gamma' \mathbb{E}\Big[ \frac{1}{n_e} \sum_{i=1}^n R_i \sup_{d,t} |(\frac{1}{e_i^c(d,t)} - \frac{1}{\hat{e}_i(d, t)})  |  \Big| A, Z\Big] \le 2 \Gamma' \sqrt{\mathbb{E}\Big[ \frac{1}{n} \sum_{i=1}^n  \sup_{d,t} |(\frac{1}{e_i^c(d,t)} - \frac{1}{\hat{e}_i(d, t)})  |^2  \Big| A, Z\Big]}  
 	\end{aligned} 
 	$$ 
 	where in the first inequality I used the fact that $Y_i, m^c$ are uniformly bounded and in the last inequality I used Cauchy-Schwarz.  
 For the third term in \eqref{eqn:kjh}, it follows similarly
 	$$
 	\small 
 	\begin{aligned} 
 	& \mathbb{E}\Big[\sup_{\pi \in \Pi} |\frac{1}{n_e} \sum_{i=1}^n R_i \frac{I_i(\pi)}{\hat{e}_i(\pi)} \Big(\hat{m}_i(\pi) -  m_i^c(\pi)\Big)| \Big| A, Z\Big]  \le \frac{1}{\gamma \delta_n} \mathbb{E}\Big[\sup_{\pi \in \Pi} \frac{1}{n_e} \sum_{i=1}^n R_i | \Big(\hat{m}_i(\pi) -  m_i^c(\pi)\Big)| \Big| A, Z\Big] \\
 	&\le \frac{1}{\gamma \delta_n} \sqrt{\mathbb{E}\Big[ \frac{1}{n} \sum_{i=1}^n \sup_{d,t} | \Big(\hat{m}(d,t, Z_i, |N_i|) -  m^c(d,t, Z_i, |N_i|)\Big)|^2 \Big| A, Z\Big]}.  
 	\end{aligned} 
 	$$  
 	 \end{proof}
 
 \vspace{-10mm} 
 
 \subsubsection{Proof of Theorem \ref{prop:minimax}} \label{proof:lower}

The proof constructs an appropriate adjacency matrix, matrix of covariates and distribution of treatments and unobservables to provide the lower bound, taking into account the selection indicators. Recall the definition of $\mathbb{E}_{\mathcal{D}_n(A,Z)}[\cdot]$ in Theorem \ref{prop:minimax}. Let $v = \mathrm{VC}(\Pi)$, and recall, under Assumption \ref{ass:quasi} (i), $R_i \sim_{i.i.d.} \mathrm{Bern}(\alpha), \alpha = n_e/n$. Let $X_i = Z_i$ for expositional convenience not to keep track of both $X_i, Z_i$. Let  $A^* \in \mathcal{A}_n^o$, such that $A_{i,j}^* = 0$ for all $i \neq j$. Let $z_1, \cdots, z_v$ be $v$ points shattered by $\Pi$, which, since $\mathcal{X} = \mathbb{R}^d$ and $\Pi$ has VC dimension $v$ they must exist. Let $Z^*$ such that $\frac{1}{n} \sum_{i=1}^n 1\{Z_i^* = z_j\} = \frac{1}{v}$ for all $j \in \{1, \cdots, v\}$. I write  
 \begin{equation} \label{eqn:lower1} 
 \small 
 \begin{aligned} 
 & \sup_{A \in \mathcal{A}_n^o, Z \in \mathcal{Z}^n}  \sup_{\mathcal{D}_n(A,Z) \in \mathcal{P}_n(A,Z)} \frac{\delta_n}{\mathcal{N}_n^{3/2} \log^{1/2}(\mathcal{N}_n) } \mathbb{E}_{\mathcal{D}_n(A,Z)}\Big[ \Big(\sup_{\pi \in \Pi} W_{A, Z}(\pi) -  W_{A, Z}(\hat{\pi}_n)\Big)\Big| A, Z \Big] \\ &\ge 
  \sup_{\mathcal{D}_n(A^*,Z^*) \in \mathcal{P}_n(A^*, Z^*)} \frac{\delta_n}{\mathcal{N}_n^{3/2} \log^{1/2}(\mathcal{N}_n) } \mathbb{E}_{\mathcal{D}_n(A^*,Z^*)}\Big[ \Big(\sup_{\pi \in \Pi} W_{A^*, Z^*}(\pi) - W_{A^*, Z^*}(\hat{\pi}_n)\Big)\Big| A = A^*, Z = Z^* \Big], 
 \end{aligned} 
 \end{equation} 
 where, recall that $\delta_n, \mathcal{N}_n$ are also a function of $A^*, Z^*$. 
 
I will focus on Equation \eqref{eqn:lower1}. I will indicate for $| A^*, Z^*$ the conditioning set $| A = A^*, Z = Z^*$.  Because I consider a fully disconnected network, we have $\delta_n = 1$ in Assumption \ref{ass:quasi} (since individuals have no neighbors), and $\mathcal{N}_n = 2$ for adjacency matrix $A^*$. I follow the proof of Theorem 14.5 in \cite{devroye2013probabilistic}, and Theorem 2.2 in \cite{KitagawaTetenov_EMCA2018}, while I also condition on $(A^*,Z^*)$, and consider random indicators $R_i$.

\paragraph{Treatment assignments and potential outcomes' distribution} Next, I select the distribution for treatment assignments and potential outcomes. 
Let $D_i$ be a Bernoulli random variable, independent of observables and unobservables with $P(D_i = 1) = 1/2$. Let $\mathbf{b} \in \{0,1\}^v$ be a bit indicator which indexes a distribution $\mathcal{D}_{n, \mathbf{b}}(A^*,Z^*) \in \mathcal{P}_n(A^*, Z^*)$. Namely, I restrict the class of distributions to a finite number of distributions, indexed by $\mathbf{b}$. Denote $Y_i(d) = r(d, 0, Z_i, 0, \varepsilon_i)$, the potential outcome function, where spillovers and number of connections are equal to zero by construction of $A^*$. Let $P(Y_i(1) = 1/2 | Z_i = z_j) = 1/2 + \eta$, $P(Y_i(1) = -1/2 | Z_i = z_j) = 1/2 - \eta$ for $\mathbf{b}_j = 1, j \le v$. If $\mathbf{b}_j = 0$, instead have $P(Y_i(1) = 1/2 | Z_i = z_j) = 1/2 - \eta$, $P(Y_i(1) = -1/2 | Z_i = z_j) = 1/2 + \eta$, where $\eta \in [0,1/2]$ and is selected at the end of the proof. Consider $Y_i(0) = 0$ almost surely. 
 
 \vspace{-2mm} 
\paragraph{Lower bound via Bayes risk} I can therefore write the optimal treatment rule as $\pi_{\mathbf{b}}^*(z_j) = 1\{b_j = 1\}, j \le v$, which satisfies the finite VC dimension. I have $W_{A^*,Z^*}(\pi_{\mathbf{b}}^*) = \frac{\eta}{v} \sum_{j=1}^v \mathbf{b}_j$ under the distribution $\mathcal{D}_{n, \mathbf{b}}$. Consider $\mathbf{b}$ being a random variable with $\mathbf{b}_j \sim_{i.i.d.} \mathrm{Bern}(1/2)$ and independent of observables and unobservables. Denote $\mathbb{E}_{\mathbf{b}}[\cdot]$ the expectation with respect to $\mathbf{b}$ (conditional on $A^*, Z^*$). 
 For any data-dependent $\hat{\pi}_n$,\footnote{See e.g., Appendix A.2 in \cite{KitagawaTetenov_EMCA2018}, Page 8.}
\begin{equation} \label{eqn:lower2} 
 \small 
 \begin{aligned} 
 &  \sup_{\mathcal{D}_n(A^*,Z^*) \in \mathcal{P}_n} \mathbb{E}_{\mathcal{D}_n(A^*, Z^*)}\Big[W_{A^*, Z^*}(\pi_{\mathbf{b}}^*) - W_{A^*,Z^*}(\hat{\pi}_n)\Big| A^*, Z^*\Big] \\
 &\ge \mathbb{E}_{\mathbf{b}} 
 \Big[\mathbb{E}_{\mathcal{D}_{n, \mathbf{b}}(A^*, Z^*)}\Big[W_{A^*, Z^*}(\pi_{\mathbf{b}}^*) - W_{A^*,Z^*}(\hat{\pi}_n)\Big| A^*, Z^*\Big] \Big| A^*, Z^*\Big],  \\
 &\ge \inf_{\hat{\pi}_n} \eta \frac{1}{v} \sum_{j=1}^v \mathbb{E}_{\mathbf{b}}\Big[\mathbb{E}_{\mathcal{D}_{n, \mathbf{b}}(A^*, Z^*)}\Big[1\{b_j \neq \hat{\pi}_n(z_j)\}\Big| A^*, Z^*\Big] \Big| A^*, Z^*\Big]. 
 \end{aligned} 
\end{equation} 
 We can see the minimization in Equation \eqref{eqn:lower2} as a risk-minimization problem with lower bound provided by the Bayes risk. I construct a Bayes classifier of the form 
 $$
 \small 
 \begin{aligned} 
 \hat{\pi}_n(z_j) = 1\left\{P\left(\mathbf{b}_j = 1 | \Big[(Y_i, D_i, D_{k \in N_i})R_i, R_i\Big]_{i=1}^n, A^*, Z^*\right) \ge 1/2\right\}, j \le v.
 \end{aligned} 
 $$ 
I can then follow the same steps of \cite{KitagawaTetenov_EMCA2018}, Equation (A.12), (A.13), with $k_j^+ = \# \Big\{i: Z_i = z_j, R_i Y_i D_i = 1/2\Big\}, k_j^- = \# \Big\{i: Z_i = z_j, R_i Y_i D_i = -1/2\Big\}$ for the case of this paper, and $Y_i D_i R_i$ in lieu of $Y_i D_i$ in the derivation of \cite{KitagawaTetenov_EMCA2018}. Following (A.12), (A.13), and the equation below (A.13) in \cite{KitagawaTetenov_EMCA2018}
  $$
 \small 
 \begin{aligned} 
&  \inf_{\hat{\pi}_n} \eta \frac{1}{v} \sum_{j=1}^v \mathbb{E}_{\mathbf{b}}\Big[\mathbb{E}_{\mathcal{D}_{n, \mathbf{b}}(A^*, Z^*)}\Big[1\{b_j \neq \hat{\pi}_n(z_j)\}\Big| A^*, Z^*\Big] \Big| A^*, Z^*\Big] \\ & \ge \frac{\eta}{2 v} \sum_{j=1}^v a^{-\mathbb{E}_{\mathbf{b}}\left[\mathbb{E}_{\mathcal{D}_{n, \mathbf{b}}(A^*, Z^*)}\Big[|\sum_{i: Z_i^* = z_j} 2 Y_i D_i R_i| \Big| A^*, Z^*\Big] \right]},\quad a = \frac{1 + 2 \eta}{1 - 2\eta}. 
 \end{aligned} 
 $$ 
 \paragraph{Lower bound on the Bayes risk} The marginal distribution of $Y_i(1)$ (once  we integrate over $\mathbf{b}$), is $P(Y_i(1) = 1/2 | Z^*, A^*) = P(Y_i(1) = -1/2| Z^*, A^*) = 1/2$ similarly to \cite{KitagawaTetenov_EMCA2018}. By independence, $P(D_i R_i = 1) = \alpha/2$. We have  
\begin{equation} \label{eqn:bbbbb}
 \small 
 \begin{aligned} 
\mathbb{E}_{\mathbf{b}}\left[\mathbb{E}_{\mathcal{D}_{n, \mathbf{b}}(A^*, Z^*)}\Big[|\sum_{i: Z_i^* = z_j} 2 Y_i D_i R_i| \Big| A^*, Z^*\Big] \right] & = \mathbb{E}_{\mathbf{b}}\left[\mathbb{E}_{\mathcal{D}_{n, \mathbf{b}}(A^*, Z^*)}\Big[|\sum_{i: Z_i^* = z_j, R_i D_i = 1} 2 Y_i| \Big| A^*, Z^*\Big] \right] \\ &= \sum_{k=0}^{n/v} {n/v \choose k} (\frac{\alpha}{2})^k (1 - \frac{\alpha}{2})^{n/v - k}\mathbb{E}\Big| B(k, \frac{1}{2})- k/2\Big|, 
 \end{aligned} 
\end{equation} 
 where $B(k,1/2)$ is a binomial random variable with parameters $(k , 1/2)$.  Equation \eqref{eqn:bbbbb} holds because given $Z = Z^*$, there are $n/v$ many observations with $Z_i^* = z_j, j \le v$ by construction of $Z^*$. 
We can write  
 $
 \mathbb{E}\Big| B(k, \frac{1}{2})- k/2\Big| \le \sqrt{\mathbb{E}\Big(B(k, \frac{1}{2}) - k/2\Big)^2} = \sqrt{\frac{k}{4}}. 
$ 
 It follows 
 $$
 \small
 \begin{aligned} 
 \eqref{eqn:bbbbb} & \le  \sum_{k=0}^{n/v} {n/v \choose k} (\frac{\alpha}{2})^k (1 - \frac{\alpha}{2})^{n/v - k} \sqrt{\frac{k}{4}} = \mathbb{E} \sqrt{\frac{B(n/v, \frac{\alpha}{2})}{4}} \le \sqrt{\frac{\mathbb{E}[B(n/v, \frac{\alpha}{2})]}{4}} = \sqrt{\frac{n \alpha}{v 8}}. 
\end{aligned}  
 $$ 
 Following \cite{KitagawaTetenov_EMCA2018}, equation (A.14) and below, with $\alpha n$ in lieu of $n$ in \cite{KitagawaTetenov_EMCA2018},
 it follows that the Bayes risk is bounded from below by 
 $
 \frac{1}{2} \sqrt{\frac{v}{\alpha n}} \exp(-2 \sqrt{2})
 $
 for $\alpha n \ge 16 v$. Since $n_e = \alpha n, \mathcal{N}_n \le 2$ for $A^*$, the proof completes.  

\vspace{-2mm}

\subsubsection{Proof of Theorem \ref{thm:dr} } \label{app:estimated_1}

For the sake of brevity, I will be using the following notation 
 	$$
 	\small
\begin{aligned}  	
  & \tilde{I}_i(d,t)  = 1\Big\{d = D_i,t = T_i\Big\}, \quad \tilde{e}_i(d,t)  = e\Big(d, t, Z_{k \in N_i}, R_{k \in N_i}, Z_i, |N_i| \Big), \quad  \tilde{m}_i(d,t) & = m\Big(d, t, Z_i, |N_i|\Big). 
 	\end{aligned} 
$$ 
Also, let $\tilde{\varepsilon}_i = Y_i - m(D_i, T_i, Z_i, |N_i|)$. 
 	 With an abuse of notation, I will refer to $\hat{e}_i(d,t), \hat{m}_i(d,t)$ as the estimated counterpart of $\tilde{e}_i(d,t), \tilde{m}_i(d,t)$ from Algorithm \ref{alg:adaptive}, with arguments $(d, t)$. 
 Let	$I_i(\pi), e_i(\pi), m_i(\pi)$ be defined as in Equation \eqref{eqn:I_iii}, and the beginning of Section \ref{sec:prop_score}, and $\hat{e}_i(\pi), \hat{m}_i(\pi)$ be defined as in Algorithm \ref{alg:adaptive} (Equation \eqref{eqn:m_hati}), as a function of the treatment assignment rule $\pi$ (therefore $\hat{e}_i(\pi) := \hat{e}_i(\pi(X_i),T_i(\pi))$ and similarly for $\hat{m}_i(\pi)$). Recall the definitions of $K^*, F_k^j$ in Algorithm \ref{alg:adaptive}: $K^*$ denotes the number of partitions obtained under Algorithm \ref{alg:adaptive}, where we have $k \in \{1,\cdots, K^*\}$ many partitions. Within each partition, we have $j \in \{1, \cdots, J\}$ folds $F_k^j$. For each $k \in \{1, \cdots, K^*\}$, $\cup_{j=1}^J F_k^j$ never contains two units that are either neighbors or share a common neighbor. Let $R = (R_1, \cdots, R_n)$. 
 
The argument I present in the current proof applies to any $K^*$ obtained from Algorithm \ref{alg:adaptive}, and any configurations of folds $(F_k^j)_{j=1}^J, k \in \{1, \cdots, K^*\}$ obtained from Algorithm \ref{alg:adaptive}, including settings with folds $F_k^j$ with one or few units.\footnote{Algorithm \ref{alg:adaptive} estimates $\hat{m}^{(i)}, 1/\hat{e}^{(i)}$ as zero functions for those units $i$, assigned to groups $k \in \{1, \cdots, K^*\}$ with few (a finite) number of units. The estimation error for such units contributes directly to the average error in Equation \eqref{eqn:estimation_error}.  Appendix \ref{sec:lasso} show how to control the estimation error in \eqref{eqn:estimation_error}.}

 \paragraph{Preliminary decomposition}	
 Following the same argument of Corollary \ref{cor:1}, since $\Pi_n \subseteq \Pi$, 
 	$$
 	\small
 	\begin{aligned}
  \mathbb{E}\Big[\sup_{\pi \in \Pi_n} W_{A,Z}(\pi) - W_{A,Z}(\hat{\pi}_{\hat{m}, \hat{e}})\Big| A, Z \Big] &\le 2\underbrace{\mathbb{E}\Big[\sup_{\pi \in \Pi} |W_n(\pi, m, e) - W_{A,Z}(\pi)|\Big| A, Z\Big]}_{(I)}  \\ &\quad + 2 \underbrace{\mathbb{E}\Big[\sup_{\pi \in \Pi} |W_n(\pi,\hat{m}, \hat{e})- W_n(\pi,m, e)|\Big| A, Z\Big]}_{(II)}  .
 	\end{aligned} 
 	$$
 	Term $(I)$ is bounded by Theorem \ref{thm:thm2}. I now study $(II)$. 
$$
\small
\begin{aligned} 
& (II) = \mathbb{E}\Big[\sup_{\pi \in \Pi} \Big|\frac{1}{n_e} \sum_{i=1}^n R_i \Big(\frac{I_i(\pi)}{\hat{e}_i(\pi)}(m_i(\pi) - \hat{m}_i(\pi)) + \tilde{\varepsilon}_i \frac{I_i(\pi)}{\hat{e}_i(\pi)} + \hat{m}_i(\pi) - m_i(\pi)\Big)\Big|| A, Z	\Big] \\ 
&= \mathbb{E}\Big[\sup_{\pi \in \Pi} \Big|\frac{1}{n_e} \sum_{i=1}^n R_i \Big(\frac{I_i(\pi)}{\hat{e}_i(\pi)} - \frac{I_i(\pi)}{e_i(\pi)} \Big)(m_i(\pi) - \hat{m}_i(\pi)) + R_i \tilde{\varepsilon}_i \frac{I_i(\pi)}{\hat{e}_i(\pi)} - R_i \Big( \frac{I_i(\pi)}{e_i(\pi)} - 1\Big)(\hat{m}_i(\pi) - m_i(\pi)) \Big|  | A, Z\Big]. 
\end{aligned}  
$$
The last equality follows after adding and subctracting  $R_i \frac{I_i(\pi)}{e_i(\pi)} (m_i(\pi) - \hat{m}_i(\pi))$. It follows
\begin{equation} 
\small
\begin{aligned}
(II) 
\le &\underbrace{\mathbb{E}\Big[\sup_{\pi \in \Pi} \Big|\frac{1}{n_e} \sum_{i=1}^n R_i \Big(\frac{I_i(\pi)}{\hat{e}_i(\pi)} - \frac{I_i(\pi)}{e_i(\pi)} \Big)(m_i(\pi) - \hat{m}_i(\pi))\Big| | A, Z \Big]}_{(i)} + \underbrace{\mathbb{E}\Big[\sup_{\pi \in \Pi}\Big| \frac{1}{n_e} \sum_{i=1}^n  R_i \tilde{\varepsilon}_i \Big(\frac{I_i(\pi)}{\hat{e}_i(\pi)} -   \frac{I_i(\pi)}{e_i(\pi)}\Big) \Big| | A, Z \Big]}_{(ii)} \\ &+ \underbrace{\mathbb{E}\Big[\sup_{\pi \in \Pi}\Big| \frac{1}{n_e}  \sum_{i=1}^n R_i \tilde{\varepsilon}_i  \frac{I_i(\pi)}{e_i(\pi)} \Big| | A, Z\Big]}_{(iii)} + \underbrace{\mathbb{E}\Big[\sup_{\pi \in \Pi} \Big|\frac{1}{n_e} \sum_{i=1}^n R_i  \Big( \frac{I_i(\pi)}{e_i(\pi)} - 1\Big)(\hat{m}_i(\pi) - m_i(\pi)) \Big| | A, Z\Big]}_{(iv)}. 
\end{aligned} 
\end{equation} 
\vspace{-15mm}
\paragraph{Bounding $(i)$} Consider $(i)$ first. We have 
\begin{equation} \label{eqn:estimation_error}
\small
\begin{aligned} 
(i) & = \mathbb{E}\Big[\sup_{\pi \in \Pi} \Big|\frac{1}{n_e} \sum_{i=1}^n R_i \Big(\frac{I_i(\pi)}{\hat{e}_i(\pi)} - \frac{I_i(\pi)}{e_i(\pi)} \Big) R_i(m_i(\pi) - \hat{m}_i(\pi))\Big| | A, Z \Big]  \quad (\because R_i \in \{0,1\})
\\ 
&\le  \sqrt{ \frac{1}{n_e} \mathbb{E}\Big[\sum_{i=1}^n R_i \sup_{d,t} \Big(\frac{1}{\tilde{e}_i(d,t)} - \frac{1}{\hat{e}_i(d,t)}\Big)^2\Big| A, Z\Big] }\sqrt{ \frac{1}{n_e} \mathbb{E}\Big[ \sum_{i=1}^n R_i \sup_{d,t} \Big(\tilde{m}_i(d,t) - \hat{m}_i(d,t)\Big)^2 \Big| A, Z\Big] }  \\ &= \sqrt{\mathbb{E}[R_i/n_e] \mathbb{E}\Big[\sum_{i=1}^n \sup_{d,t} \Big(\frac{1}{\tilde{e}_i(d,t)} - \frac{1}{\hat{e}_i(d,t)}\Big)^2\Big| R_i = 1, A, Z \Big] } \quad \quad (\because \text{Defn of conditional expectation})  \\ &\quad \times \sqrt{ \mathbb{E}[R_i/n_e] \mathbb{E}\Big[ \sum_{i=1}^n \sup_{d,t} \Big(\tilde{m}_i(d,t) - \hat{m}_i(d,t)\Big)^2\Big| R_i = 1, A, Z\Big] } = \sqrt{\mathcal{R}_n(A,Z) \times \mathcal{B}_n(A,Z)}. 
\end{aligned} 
\end{equation}  

\vspace{-6mm}
\paragraph{Summands in $(ii)$ and $(iii)$, $(iv)$} Next, I show that each summand in $(ii), (iii), (iv)$ has a zero conditional expectation, given $R , A, Z$, for any $\hat{e}^{(i)}, \hat{m}^{(i)}$ in Algorithm \ref{alg:adaptive}.  
\begin{itemize} 
\item[$(ii)$]
I start from summands in $(ii)$. I write the expectation of each summand as
\begin{equation} \label{eqn:help}
\small 
\begin{aligned} 
& \mathbb{E}\Big[R_i \tilde{\varepsilon}_i \Big(\frac{I_i(\pi)}{\hat{e}_i(\pi)} - \frac{I_i(\pi)}{e_i(\pi)}\Big) \Big| R, A, Z\Big] \\ & =  \mathbb{E}\Big[R_i \Big(r(\pi(X_i), T_i(\pi), Z_i, |N_i|, \varepsilon_i) - m_i(\pi)\Big)  \Big(\frac{I_i(\pi)}{\hat{e}_i(\pi)} - \frac{I_i(\pi)}{e_i(\pi)}\Big) \Big| R, Z, A\Big]  \\ 
&= \mathbb{E}\Big[\mathbb{E}\Big[R_i \Big(r(\pi(X_i), T_i(\pi), Z_i, |N_i|, \varepsilon_i) - m_i(\pi)\Big)  \Big(\frac{I_i(\pi)}{\hat{e}_i(\pi)} - \frac{I_i(\pi)}{e_i(\pi)}\Big) \Big| \hat{e}_i(\pi), R, Z, A\Big] \Big|R, Z, A \Big]  \\
&= R_i \underbrace{\mathbb{E}\Big[\Big(r(\pi(X_i), T_i(\pi), Z_i, |N_i|, \varepsilon_i) - m_i(\pi)\Big) | A, Z, R\Big]}_{=0}  \mathbb{E}\Big[\Big(\frac{I_i(\pi)}{\hat{e}_i(\pi)} - \frac{I_i(\pi)}{e_i(\pi)}\Big) \Big|R, Z, A \Big]\\ & (\because \text{ Alg \ref{alg:adaptive} and Assumptions } \ref{ass:quasi} (i, ii))
\quad \quad = 0. 
\end{aligned} 
\end{equation}
The last equality follows from the fact that $T_i(\pi)$ (in Equation \eqref{eqn:welfare}) is a deterministic function of $(A, Z)$, $\varepsilon_i$ is independent of $\hat{e}_i(\pi)$ given $(R, Z, A)$ by Algorithm \ref{alg:adaptive}, and $\varepsilon_i$ is conditionally independent of $(D_i, R_i)_{i=1}^n$ given $A, Z$, by Assumption \ref{ass:quasi} (i, ii). 
\item[$(iii)$] For $(iii)$, $\mathbb{E}[R_i \tilde{\varepsilon}_i I_i(\pi)/e_i(\pi) | R,A,Z] = 0$ directly by Assumptions \ref{ass:quasi} (i, ii). 
\item[$(iv)$] For summands in $(iv)$, we have: 
\begin{equation} \label{eqn:help22}
\small 
\begin{aligned} 
& \mathbb{E}\Big[R_i  \Big( \frac{I_i(\pi)}{e_i(\pi)} - 1\Big)(\hat{m}_i(\pi) - m_i(\pi)) \Big| R, A, Z\Big] \\ &= 
R_i \underbrace{\mathbb{E}\Big[  \Big( \frac{I_i(\pi)}{e_i(\pi)} - 1\Big)\Big| R, A, Z\Big]}_{=0} \mathbb{E}\Big[(\hat{m}_i(\pi) - m_i(\pi)) \Big| R, A, Z\Big] = 0. 
\end{aligned} 
\end{equation}
The first equality follows because $\hat{m}_i(\pi)$ is independent of $(D_i, D_{k \in N_i})$ conditional on $(R, A, Z)$ by Algorithm \ref{alg:adaptive} and Assumption \ref{ass:quasi} (ii). 
 \end{itemize} 

 \vspace{-3mm} 
 \paragraph{Bounds for $(ii)$} Using the triangular inequality and the law of iterated expectations, I write (letting $\hat{e}_i(\cdot)$ be the estimated propensity score function for $i$)
 \begin{equation} \label{eqn:summand_m}
 \small 
 \begin{aligned} 
  (ii) \le & \mathbb{E}\Big[\sum_{k=1}^{K^*} \sum_{j =1}^J \underbrace{\mathbb{E}\Big[\sup_{\pi \in \Pi} \Big| \frac{1}{n_e} \sum_{i \in F_k^j} R_i \tilde{\varepsilon}_i\Big(\frac{I_i(\pi)}{\hat{e}_i(\pi)} - \frac{I_i(\pi)}{e_i(\pi)}\Big) \Big| | \hat{e}_{i \in F_k^j}(\cdot), R, A, Z \Big]}_{:= (M_k^j)} \Big| A, Z\Big],
 \end{aligned} 
 \end{equation}  
 where here we also condition on $R$ and the estimated functions $\hat{e}_i$ for units in the fold $i \in F_k^j$. 
 Next, we bound each component $(M_k^j)$ in \eqref{eqn:summand_m}. 
We make the following observations. 
 \begin{itemize} 
  \item[(1)] $(F_k^j)_{j=1}^J, K^*$ are deterministic functions of $(R, A)$ by construction of Algorithm \ref{alg:adaptive}.
 \item[(2)] For each $i \in F_{k}^j$, $\mathbb{E}\Big[R_i \tilde{\varepsilon}_i\Big(\frac{I_i(\pi)}{\hat{e}_i(\pi)} - \frac{I_i(\pi)}{e_i(\pi)}\Big) \Big| A, R, Z, \hat{e}_{i \in F_{k}^j}(\cdot)\Big] = 0$ by \eqref{eqn:help} and independence of $\hat{e}_{i \in F_k^j}(\cdot)$ with $\tilde{\varepsilon}_i$ (independence follows from Alg \ref{alg:adaptive} and Assumptions \ref{ass:quasi} (i,ii)).\footnote{Independence follows from the fact that  $\cup_{j=1}^J F_k^j$ does not contain two sampled individuals that are either neighbors or share a common neighbor. Therefore, we never use information from $(D_i, D_{k \in N_i})$ to estimate $\hat{e}_i(\cdot)$ for all $i: R_i = 1$. Also, note that the argument holds if, for estimating the propensity score for $i$, we also use information from the neighbors of the units in $\cup_{j=1}^J F_k^j \setminus F_k^{j(i)}$ which have \textit{not} been sampled, where $F_k^{j(i)}$ denotes the fold containing $i$. These units  (i.e., non-sampled neighbors of elements in $\cup_{j=1}^J F_k^j \setminus F_k^{j(i)}$) cannot be neighbors of $i$ (with $R_i = 1$) since  $\cup_{j=1}^J F_k^j$ does not contain sampled units with a common neighbor. } 
 \item[(3)] Conditional on $(\hat{e}_{i \in F_k^j}(\cdot), R, A, Z)$, we have that $\left\{R_i \tilde{\varepsilon}_i\Big(\frac{I_i(\pi)}{\hat{e}_i(\pi)}(\cdot) - \frac{I_i(\pi)}{e_i(\pi)}\Big)\right\}$  are mutually independent among units in the same fold ($i \in F_k^j$), by \ref{ass:quasi} (i,ii), and Alg \ref{alg:adaptive}. 
 \end{itemize} 
Therefore, by (2), and (3) above I can invoke standard symmetrization arguments for centered independent random variables \citep[see Lemma 6.4.2 in][]{vershynin2018high} to bound
 \begin{equation} \label{eqn:aiii}
 \small 
 \begin{aligned} 
 (M_k^j) \le 2 \mathbb{E}\Big[\mathbb{E}_{\tilde{\varepsilon}, \sigma}\Big[\sup_{\pi \in \Pi} \Big| \frac{1}{n_e} \sum_{i \in F_k^j} \sigma_i R_i \tilde{\varepsilon}_i\Big(\frac{I_i(\pi)}{\hat{e}_i(\pi)} - \frac{I_i(\pi)}{e_i(\pi)}\Big) \Big|\Big] | \hat{e}_{i \in F_k^j}(\cdot), R, A, Z \Big]
 \end{aligned} 
 \end{equation}  
 for $(\sigma_1, \cdots, \sigma_n)$ be $i.i.d.$ exogenous Radamacher random variables (recall that $\mathbb{E}_{\tilde{\varepsilon}, \sigma}[\cdot]$ indicates that the inner expectation is conditional on everything else except $\sigma, \tilde{\varepsilon}$). 
 
I can now directly use Lemma \ref{lem:final_lemmab} to bound the right-hand-side of \eqref{eqn:aiii}. Namely, I invoke Lemma \ref{lem:final_lemmab} where $\Omega_i$ in the statement of Lemma \ref{lem:final_lemmab} is $\tilde{\varepsilon}_i$ in Equation \eqref{eqn:aiii}, $g_i(\cdot)$ in Lemma \ref{lem:final_lemmab} is $\Big(\frac{I_i(\pi)}{\hat{e}_i(\pi)} - \frac{I_i(\pi)}{e_i(\pi)}\Big)$ in Equation \eqref{eqn:aiii}; $U_n$ in the statement of Lemma \ref{lem:final_lemmab} is $\frac{2}{\gamma \delta_n}$ in \eqref{eqn:aiii}. Therefore, by Lemma \ref{lem:final_lemmab}, for a universal constant $\bar{C} < \infty$
$$
\small 
\begin{aligned} 
& \mathbb{E}_{\tilde{\varepsilon}, \sigma}\Big[\sup_{\pi \in \Pi} \Big| \frac{1}{n_e} \sum_{i \in F_k^j} \sigma_i R_i \tilde{\varepsilon}_i\Big(\frac{I_i(\pi)}{\hat{e}_i(\pi)} - \frac{I_i(\pi)}{e_i(\pi)}\Big)\Big|\Big] \le \frac{\bar{C} \Gamma}{n_e} \sqrt{\mathcal{N}_n \log(\mathcal{N}_n) \sum_{i=1}^n R_i 1\{i \in F_k^j\} \mathrm{VC}(\Pi)}.  
\end{aligned} 
 $$  
 It follows 
\begin{equation} \label{eqn:hgfr}
 \small 
 \begin{aligned} 
 \sum_{j=1}^J \mathbb{E}\Big[\sum_k^{K^*} (M_k^j)\Big| A, Z\Big] & \le  J \mathbb{E}\Big[K^* \frac{\bar{C} \Gamma}{n_e} \sqrt{\frac{\sum_{j=1}^J \sum_{k = 1}^{K^*} \mathcal{N}_n \log(\mathcal{N}_n) \sum_{i=1}^n R_i 1\{i \in F_k^j\} \mathrm{VC}(\Pi)}{JK^*}}\Big| A, Z\Big] \\  & \quad \quad (\because \text{concavity of } \sqrt{x})\\
 &\le  \mathbb{E}\Big[\sqrt{J K^*} \frac{\bar{C} \Gamma}{n_e} \sqrt{\mathcal{N}_n \log(\mathcal{N}_n) \sum_{i=1}^n R_i \mathrm{VC}(\Pi)}\Big| A, Z\Big] \quad (\because \cup_{k=1, j=1}^{K^*, J} F_k^j \subseteq \{1, \cdots, n\}) \\
  & \le \mathbb{E}\Big[\sqrt{J \chi(A^2)} \frac{\bar{C} \Gamma}{n_e} \sqrt{\mathcal{N}_n \log(\mathcal{N}_n) \sum_{i=1}^n R_i \mathrm{VC}(\Pi)}\Big| A, Z\Big] \quad (\because K^* \le \chi(A^2)  \text{ by Lem } \ref{lem:K})\\
 &\le \sqrt{J \chi(A^2)} \frac{\bar{C} \Gamma}{n_e} \sqrt{\mathcal{N}_n \log(\mathcal{N}_n) \sum_{i=1}^n \mathbb{E}[R_i] \mathrm{VC}(\Pi)} \quad (\because \text{Jensen's inequality}). 
 \end{aligned} 
 \end{equation} 
 By Assumption \ref{ass:quasi} (i) $\eqref{eqn:hgfr} \le    \sqrt{J \chi(A^2)}\bar{C}  \Gamma  \sqrt{\frac{\mathcal{N}_n \log(\mathcal{N}_n)  \mathrm{VC}(\Pi)}{n_e}}.$  
By construction of Algorithm \ref{alg:adaptive}, $J = \mathcal{O}(1)$. By Lemma \ref{lem:boundnumber}, $\chi(A^2) \le 2\mathcal{N}_n^2$.

 \vspace{-3mm} 
 
\paragraph{Rademacher complexity bounds for $(iii)$}   Since $(iii)$ does not depend on estimators, the bound for $(iii)$ follows from the same argument in Theorem \ref{thm:thm2}. Recall the definitions of $\chi(A^2), \mathcal{C}_n^2(g)$ I used in Theorem \ref{thm:thm2}. Following the proof of Theorem \ref{thm:thm2} (Paragraph ``Symmetrization and proper cover"), I can write 
$$
\small 
\begin{aligned} 
(iii) \le \sum_{g \in \{1, \cdots, \chi(A^2)\}} \mathbb{E}\Big[\mathbb{E}_{\sigma, \tilde{\varepsilon}}\Big[ \sup_{\pi \in \Pi} \Big|\frac{1}{n_e} \sum_{i \in \mathcal{C}_n^2(g)} R_i \tilde{\varepsilon}_i \frac{I_i(\pi)}{e_i(\pi)}\Big| \Big] | A, Z\Big]. 
\end{aligned} 
$$ 
I can now bound $\mathbb{E}_{\sigma, \tilde{\varepsilon}}\Big[ \sup_{\pi \in \Pi} \Big|\frac{1}{n_e} \sum_{i \in \mathcal{C}_n^2(g)} R_i \tilde{\varepsilon}_i \frac{I_i(\pi)}{e_i(\pi)}\Big| \Big]$ directly with Lemma \ref{lem:final_lemmab}, with $\tilde{\varepsilon}_i$ in lieu of $\Omega_i$ in Lemma \ref{lem:final_lemmab} and $I_i(\pi)/e_i(\pi)$ in lieu of $g_i(\cdot)$ in Lemma \ref{lem:final_lemmab}, with upper bound $U_n = 2/(\gamma \delta_n)$. Following the same argument as in Equation \eqref{eqn:aii2}
$$
\small 
\begin{aligned} 
\sum_{g \in \{1, \cdots, \chi(A^2)\}} \mathbb{E}\Big[\mathbb{E}_{\sigma, \tilde{\varepsilon}}\Big[ \sup_{\pi \in \Pi} \Big|\frac{1}{n_e} \sum_{i \in \mathcal{C}_n^2(g)} R_i \tilde{\varepsilon}_i \frac{I_i(\pi)}{e_i(\pi)}\Big| \Big] | A, Z\Big] \le  c' \frac{\Gamma \sqrt{\chi(A_n^2)}}{\gamma \delta_n} \sqrt{\frac{ \mathcal{N}_n \log(\mathcal{N}_n) \mathrm{VC}(\Pi)}{n_e}}. 
\end{aligned} 
$$ 
By Lemma \ref{lem:boundnumber}, $\chi(A^2) \le 2 \mathcal{N}_n^2$, for a universal constant $c' < \infty$.  
\vspace{-4mm} 

\paragraph{Rademacher complexity bounds for $(iv)$} The bound for $(iv)$ follows verbatim as the bound for $(ii)$, where, here, instead of conditioning on $\hat{e}_{i \in F_k^j}$ as in Equation \eqref{eqn:summand_m}, I condition on $\hat{m}_{i \in F_k^j}$. This is omitted for space constraints. The proof completes. 

\vspace{-2mm} 

\subsubsection{Proof of Theorem \ref{thm:trimming}} 
Define 
$
W_{A, Z}^{tr}(\pi) = \frac{1}{n} \sum_{i=1}^n m\Big(\pi(X_i), T_i(\pi), Z_i, |N_i|\Big)1\Big\{|N_i| \le \log_\gamma(\kappa_n)\Big\}
$
the trimmed version of welfare. Following Corollary \ref{cor:1}, 
\begin{equation} \label{eqn:trimming_helper1} 
\small 
\begin{aligned} 
 & \mathbb{E}\Big[\sup_{\pi \in \Pi_n} W_{A,Z}(\pi) - W_{A,Z}(\hat{\pi}_{\kappa_n}^{tr}) \Big| A, Z\Big]  \le 2 \mathbb{E}\Big[\sup_{\pi \in \Pi}\Big|W_{A,Z}(\pi) - W_n^{tr}(\pi)\Big| | A, Z\Big]  \\ 
 &\le  2 \mathbb{E}\Big[\sup_{\pi \in \Pi}\Big|W_{A,Z}^{tr}(\pi) - W_n^{tr}(\pi)\Big| | A, Z\Big]  + 2 \sup_{\pi \in \Pi}\Big|W_{A,Z}^{tr}(\pi) - W_{A,Z}(\pi)\Big|.
\end{aligned} 
\end{equation} 
The bounds for the first component in the right-hand side of Equation \eqref{eqn:trimming_helper1} follows verbatim the proof of Theorem \ref{thm:thm2}, since $\mathbb{E}[W_n^{tr}(\pi)|A, Z] = W_{A,Z}^{tr}(\pi)$, with the difference that the overlap constant is  $\gamma^{\log_\gamma(\kappa_n) + 1}$ under Assumption \ref{ass:quasi} (iii). For the second component,
\begin{equation} \label{eqn:utb}
\small 
\begin{aligned} 
\Big|W_{A,Z}^{tr}(\pi) - W_{A,Z}(\pi)\Big| & \le \frac{1}{n} \sum_{i=1}^n m\Big(\pi(X_i), T_i(\pi), Z_i, |N_i|\Big)\Big(1 - 1\Big\{|N_i| \le \log_\gamma(\kappa_n)\Big\}\Big). 
\end{aligned} 
\end{equation}
Here, $\eqref{eqn:utb} =  \mathcal{O}\Big(\frac{1}{n} \sum_{i=1}^n 1\Big\{|N_i| > \log_\gamma(\kappa_n)\Big\}\Big)$, by \ref{ass:ignorability} (C) and Holder's inequality.  

\vspace{-2mm} 

\subsubsection{Proof of Theorem \ref{thm:expected}} \label{proof:expected} 

Define $W(\pi) = \mathbb{E}_{A', Z'}[W_{A', Z'}(\pi)]$ and $W(\hat{\pi}_{m^c, e}) = \mathbb{E}_{A', Z'}[W_{A', Z'}(\hat{\pi}_{m^c, e}) | \hat{\pi}_{m^c, e}]$, where $\hat{\pi}_{m^c, e} \perp (A', Z')$ by assumption. We can write, following similar steps as in Equation \eqref{eqn:kk1} with $W(\pi)$ in lieu of $W_{A, Z}(\pi)$, 
$\sup_{\pi \in \Pi} W(\pi) - W(\hat{\pi}_{m^c, e}) \le 2 \sup_{\pi \in \Pi} |W(\pi) - W_n(\pi, m^c, e)|$. Therefore, by taking expectations, 
\begin{equation} \label{eqn:hhhhh}
\small 
\begin{aligned} 
& \sup_{\pi \in \Pi} W(\pi) - \mathbb{E}[W(\hat{\pi}_{m^c, e})] = \mathbb{E}\Big[\sup_{\pi \in \Pi} W(\pi) - W(\hat{\pi}_{m^c, e})\Big] \le 2 \mathbb{E}\Big[\sup_{\pi \in \Pi} |W(\pi) - W_n(\pi, m^c, e)|\Big] \\ &= 2 \mathbb{E}\left[\sup_{\pi \in \Pi} \Big|W_n(\pi, m^c, e) - W_{A, Z}(\pi) + W_{A, Z}(\pi) - \mathbb{E}[W_{A', Z'}(\pi)]\Big| \right] \\ 
&= 2 \underbrace{\mathbb{E}\left[\sup_{\pi \in \Pi} \Big|W_n(\pi, m^c, e) - W_{A, Z}(\pi)\Big|\right]}_{(A)} + 2 \underbrace{\mathbb{E}\left[\sup_{\pi \in \Pi} \Big|W_{A, Z}(\pi) - \mathbb{E}[W_{A', Z'}(\pi)]\Big| \right]}_{(B)}.  
\end{aligned} 
\end{equation} 
 $(A)$ can be bounded using directly Theorem \ref{thm:thm2} and the law of iterated expectations.

\subsubsection{Proof of Proposition \ref{prop:3}} \label{app:more}

  To show that Proposition \ref{prop:3} I need to show that (i) the VC dimension of $\tilde{\Pi}_n$ is at most $\mathrm{VC}(\Pi)$ up-to a constant factor; (ii) overlap holds for any class of policy $\pi \in \tilde{\Pi}_n$, namely $e_i(\pi) \in (\gamma \delta_n, 1 - \gamma \delta_n)$. The rest of the proof then follows verbatim from Theorem \ref{thm:thmmain}. 
  
  First, for (i), note that by Theorem 13.1 in \cite{devroye2013probabilistic}, the VC dimension of the classifier $\tilde{\pi}(x,d) = \pi(x)(1 - d)$ equals the VC dimension of $\pi(x)$, namely $\mathrm{VC}(\Pi)$. By Lemma 29.4 in \cite{devroye2013probabilistic} it follows that the VC dimension of $\tilde{\Pi}_n$ equals VC$(\Pi)$. 
  
Second, for (ii),  for $\tilde{\pi}(x,d) = \pi(x)(1 - d) + d$
  $$
  \small 
  \begin{aligned} 
  P\Big(D_i = \tilde{\pi}(X_i, D_i) | Z_i, R_i = 1\Big) = \begin{cases} 
   P(D_i = 1 | Z_i, R_i = 1) & \text{ if } \pi(X_i) = 1\\ 
   1 & \text{ otherwise}. 
  \end{cases} 
    \end{aligned} 
  $$
It follows that $P\Big(D_i = \tilde{\pi}(X_i, D_i) | Z_i, R_i = 1\Big) \ge \min\{P(D_i = 1| Z_i, R_i = 1), P(D_i = 0 | Z_i, R_i = 1)\} \in (\gamma, 1- \gamma)$. Similarly, I can show that $P\Big(D_i = \tilde{\pi}(X_i, D_i) | Z_i, R_i = 0, R_i^f = 1\Big) \in (\gamma, 1-\gamma)$ and $P(T_i = t|Z_i, R_i = 1, R_{k \in N_i}, Z_{k \in N_i}, |N_i|) \ge \delta_n$ almost surely for any $t \in \mathcal{T}_n$, under Assumption \ref{ass:quasi} (ii). Intuitively, because I \textit{always} treat those units \textit{also treated} in the experiment, overlap for $\tilde{\pi} \in \tilde{\Pi}_n$ is guaranteed, under overlap in the experiment. It follows that the propensity score $e_i(\tilde{\pi}) = e(\tilde{\pi}(X_i, D_i), T_i(\tilde{\pi}), Z_i, Z_{k \in N_i}, R_{k \in N_i}, |N_i|)$, $\tilde{\pi} \in \tilde{\Pi}_n$ satisfies the overlap conditions imposed in Assumption \ref{ass:quasi}. Finally, it is easy to show that Lemma \ref{prop:welfare} directly holds also for any $\tilde{\pi} \in \tilde{\Pi}_n$, following verbatim the proof of Lemma \ref{prop:welfare}, reweighting for $e_i(\tilde{\pi})$. The rest of the proof follows verbatim the one of Theorem \ref{thm:thmmain} once we define the policy as $D_i + (1 -D_i) \pi(X_i)$, and the outcomes evaluated at the new policy are $r\Big(D_i + (1 -D_i) \pi(X_i), T_i(\pi), Z_i, |N_i|, \varepsilon_i\Big)$ with $T_i(\pi) = g_n\Big(\sum_{k \in N_i} D_k + (1 - D_k) \pi(X_k), Z_i, |N_i|\Big)$.

\subsection{Lemmas} \label{sec:lem}

\begin{lem} \label{lem:degreebound}

The following holds: $\chi(A_n) \le \chi(A_n^M) \le M \mathcal{N}_{n}^{M}$ for all $n \ge 1$. 
\end{lem} 
\begin{proof}[Proof of Lemma \ref{lem:degreebound}]
The first inequality follows by Definition \ref{defn:matrix}. The second inequality follows by Brook's Theorem \citep{brooks1941colouring}, since the maximum degree under $A_n^M$ is bounded by $\mathcal{N}_{n} + \mathcal{N}_{n} \times \mathcal{N}_{n} + \cdots + \prod_{s=1}^{M} \mathcal{N}_{n} \le M \mathcal{N}_{n}^{M}$.
\end{proof}

\begin{lem} \label{lem:lipweight} For $i \in \{1, \cdots, n\}$ consider functions $f_i: \mathcal{T}_n \mapsto [-U_n, U_n]$ for some $U_n >0$, and $\mathcal{T}_n \subseteq \mathbb{Z}$. Then for any $i \in\{1, \cdots, n\}, n \ge 1$, $f_i(t)$ is $2U_n$-Lipschitz in $t$. 
\end{lem} 
\begin{proof}[Proof of Lemma \ref{lem:lipweight}] For any $t, t' \in \mathbb{Z}$, 
$
\small
\Big|f_i(t) - f_i(t')\Big| \le 2 U_n 
$
for $t \neq t'$, by the triangular inequality. Since $\mathcal{T}_n \subseteq \mathbb{Z}$ is discrete, $\Big|f_i(t) - f_i(t')\Big| \le 2 U_n|t - t'|$. 
\end{proof}


\begin{lem} \label{lem:kita} For any $i \in \{1, \cdots, n\}$, let $X_i \in \mathcal{X}$ be an arbitrary random variable and $\mathcal{F}$ a class of uniformly bounded functions with envelope $\bar{F}$. 
Let $\Omega_i | X_1, \cdots, X_n$ be random variables independently but not necessarily identically distributed, where $\Omega_i \ge 0$ is a scalar. Assume that for some $u > 0$, 
$
\mathbb{E}[\Omega_i^{2 + u} | Z] < B, \quad \forall i \in \{1, \cdots, n\}.
$ In addition, assume that for any fixed points $x_1^n \in \mathcal{X}^n$, for some $V_n \ge 0$, for all $n \ge 1$,  
$\int_0^{2\bar{F}} \sqrt{\log\Big(\mathcal{M}_1\Big(\eta, \mathcal{F}(x_1^n)\Big)\Big)} d\eta < \sqrt{V_n}. 
$ Let $\sigma_i$ be $i.i.d$ Rademacher random variables independent of $(\Omega_i)_{i=1}^n,(X_i)_{i=1}^n$. 
Then for a constant $0 < C_{\bar{F}}< \infty$ that only depend on $\bar{F}$ and $u$, for all $n \ge 1$
$$
\small
\int_0^{\infty} \mathbb{E}\Big[\sup_{f \in \mathcal{F}}\Big|\frac{1}{n} \sum_{i=1}^n  \sigma_i f(X_i)1\{\Omega_i > \omega\}\Big| | X_1, \cdots, X_n \Big]d\omega \le  C_{\bar{F}}  \sqrt{\frac{B V_n}{n}}. 
$$
\end{lem}

\begin{proof}[Proof of Lemma \ref{lem:kita}] The proof follows verbatim the proof of Lemma A.5 in \cite{kitagawa2017equality}, with two small differences that do not affect the argument of the proof: I must control the Rademacher complexity using the Dudley's entropy integral bound (instead of the VC dimension), and $\Omega_i$ are independent but not necessarily identically distributed random variables. Given that the argument follows verbatim the one of Lemma A.5 of \cite{kitagawa2017equality}, the proof is omitted for space constraints.\footnote{The reader may refer to a technical note that collects lemmas from past literature available at \url{dviviano.github.io/projects/note_preliminary_lemmas.pdf} for details or Appendix E below.} 
\end{proof} 

\begin{lem} \label{lem:boundnumber} Take any $k \ge 2$. Let $\mathcal{F}_1, \cdots, \mathcal{F}_{k}
$ be classes of bounded functions with VC dimension $v$ and envelope $\bar{F} < \infty$. Let 
$$
\small
\begin{aligned} 
\mathcal{J}_n = \Big\{f_1(f_2 + ... + f_{k}), \quad f_j \in \mathcal{F}_j, \quad j = 1, \cdots, k\Big\}, \quad \mathcal{J}_n(x_1^n) = \Big\{h(x_1), \cdots, h(x_n); h \in \mathcal{J}_n\Big\}.
\end{aligned} 
$$
 For arbitrary fixed points $x_1^n \in \mathcal{X}^n$, for \textit{any} $n \ge 1, k \ge 2,  v\ge 1$, 
$
\int_0^{2\bar{F}} \sqrt{\log\Big(\mathcal{M}_1\Big(\eta, \mathcal{J}(x_1^n)\Big)\Big)} d\eta < c_{\bar{F}} \sqrt{k \log(k) v}
$
for a constant $c_{\bar{F}} < \infty$ that only depends on $\bar{F}$. 
\end{lem} 
\begin{proof}[Proof of Lemma \ref{lem:boundnumber}] Without loss of generality let $\bar{F} \ge 1$ (since if less than one the envelope is also uniformly bounded by one). Let 
$
\mathcal{F}_{-1,n}(x_1^n) = \{f_2(x_1^n) + ... + f_{k}(x_1^n), f_j \in \mathcal{F}_j, j = 2, ...,k_n \}.
$
 By \cite{devroye2013probabilistic}, Theorem 29.6, 
$
\small 
\mathcal{M}_1\Big(\eta, \mathcal{F}_{-1,n}(x_1^n)\Big) \le \prod_{j=2}^{k} \mathcal{M}_1\Big(\eta/(k-1), \mathcal{F}_j(x_1^n)\Big).
$
By Theorem 29.7 in \cite{devroye2013probabilistic}, 
\begin{equation} \label{eqn:helper3a} 
\small
\begin{aligned} 
\mathcal{M}_1\Big(\eta, \mathcal{J}_n(x_1^n)\Big) \le \prod_{j=2}^{k} \mathcal{M}_1\Big(\frac{\eta}{2 (k-1) \bar{F}}, \mathcal{F}_j(x_1^n)\Big) \mathcal{M}_1\Big(\frac{\eta}{2 \bar{F}}, \mathcal{F}_1(x_1^n)\Big) .
\end{aligned} 
\end{equation} 
By standard properties of covering numbers, for a generic set $\mathcal{H}$, $\mathcal{N}_1(\eta, \mathcal{H}) \le \mathcal{N}_2(\eta, \mathcal{H})$. It follows
$
 \eqref{eqn:helper3a} \le  \prod_{j=2}^{k} \mathcal{M}_2\Big(\frac{\eta}{2(k-1) \bar{F}}, \mathcal{F}_j(x_1^n)\Big) \mathcal{M}_2\Big(\frac{\eta}{2 \bar{F}}, \mathcal{F}_1(x_1^n)\Big). 
$
I now apply a uniform entropy bound for the covering number. By Theorem 2.6.7 of \cite{van1996weak}, we have that for a universal constant $C < \infty$ (that without loss of generality we can assume $C \ge 1$), 
$
\mathcal{M}_2\Big(\frac{\eta}{2(k-1) \bar{F}}, \mathcal{F}_j(x_1^n)\Big) \le C (v+1) (16 e)^{(v+1)} \Big(\frac{2\bar{F}^2(k-1)}{\eta}\Big)^{2 v}
$
which implies that 
$$
\small
\begin{aligned} 
\log\Big(\mathcal{M}_1\Big(\eta, \mathcal{J}_n(x_1^n)\Big)\Big) &\le\sum_{j = 1}^{k_n-1} \log\Big(\mathcal{M}_2\Big(\frac{\eta}{2\bar{F}(k-1)}, \mathcal{F}_j(x_1^n)\Big)\Big) + \log\Big(\mathcal{M}_2\Big(\frac{\eta}{2\bar{F}}, \mathcal{F}_1(x_1^n)\Big)\Big) \\ &\le k\log\Big(C (v +1)(16 e)^{v+1}\Big) + k2v \log(2C \bar{F}^2(k-1)/\eta).
\end{aligned}  
$$
Since $\int_0^{2 \bar{F}} \sqrt{k\log\Big(C (v +1)(16 e)^{v+1}\Big) + k_n2v \log(2C \bar{F}^2(k-1)/\eta)} d \eta \le c_{\bar{F}} \sqrt{k \log(k) v}$ for a constant $c_{\bar{F}} < \infty$, the proof completes.
\end{proof}

We discuss the \cite{ledoux2011probability}'s inequality for the case of interest here.  

\begin{lem} \label{lem:ledoux} For all $i \in \{1, \cdots, n\}$, let $\phi_i: \mathbb{R} \mapsto \mathbb{R}$ be such that $|\phi_i(a) - \phi_i(b)| \le L|a - b|$ for all $a, b \in \mathbb{R}$, with $\phi_i(0) = 0$, and arbitrary $L  > 0$. Then, for any $n \ge 1, L > 0$, any $\mathcal{U}_n \subseteq \mathbb{R}^n, \mathcal{K}_n \subseteq \{0,1\}^n$, with $u = (u_1, \cdots, u_n) \in \mathcal{U}_n$, $\alpha = (\alpha_1, \cdots, \alpha_n) \in \mathcal{K}_n$, 
$$
\small
\begin{aligned} 
\frac{1}{2} \mathbb{E}_{\sigma} \Big[ \sup_{u \in \mathcal{U}_n, \alpha \in \mathcal{K}_n} \Big| \frac{1}{n} \sum_{i=1}^n \sigma_i \phi_i(u_i) \alpha_i \Big|\Big] \le L \mathbb{E}_{\sigma}\Big[ \sup_{u \in \mathcal{U}_n, \alpha \in \mathcal{K}_n} \Big| \frac{1}{n} \sum_{i=1}^n \alpha_i \sigma_i u_i\Big|\Big].
\end{aligned} 
$$
\end{lem}  

\begin{proof}[Proof of Lemma \ref{lem:ledoux}]
	The proof follows closely the one of Theorem 4.12 in \cite{ledoux2011probability} while dealing with the additional $\alpha$ vector. We provide here the main argument and refer to \cite{ledoux2011probability} for additional details. First, note that if $\mathcal{U}_n$ is unbounded, there will be settings such that the right hand side is infinity and the result trivially holds. Therefore, let $\mathcal{U}_n$ be bounded. We aim to show that 
\begin{equation} \label{eqn:toprovelem}
\small
\begin{aligned} 
	\mathbb{E}\Big[\sup_{u \in \mathcal{U}_2, \alpha \in \mathcal{K}_2} \alpha_1 u_1 + \sigma_2 \phi(u_2) \alpha_2\Big] \le \mathbb{E}\Big[\sup_{u \in \mathcal{U}_2, \alpha \in \mathcal{K}_2} \alpha_1 u_1 + L\sigma_2 u_2 \alpha_2\Big].
	\end{aligned} 
	\end{equation}  
	If Equation \eqref{eqn:toprovelem}, it follows that 
	$$
\small
\begin{aligned} 
	\mathbb{E}\Big[\sup_{u \in \mathcal{U}_2, \alpha \in \mathcal{K}_2} \alpha_1 \phi_1(u_1) \sigma_1 + \sigma_2 \phi(u_2) \alpha_2 | \sigma_1\Big] \le \mathbb{E}\Big[\sup_{u \in \mathcal{U}_2, \alpha \in \mathcal{K}_2} \alpha_1 \phi_1(u_1) \sigma_1 + L\sigma_2 u_2 \alpha_2 \Big| \sigma_1\Big].
	\end{aligned} 
	$$
	Because $\sigma_1 \phi(u_1)$ simply transforms $\mathcal{U}_2$,
	and we can iteretively apply this result. 
	
I first prove Equation \eqref{eqn:toprovelem}. Define for $a,b \in \{0,1\}^2$,  
	$
	I(u,s, a, b) : = \frac{1}{2}\Big ( u_1 a_1 + a_2 \phi(u_2)\Big) +  \frac{1}{2} \Big( s_1 b_1 - b_2 \phi(s_2)\Big). 
	$
	I want to show that the right hand side in Equation \eqref{eqn:toprovelem} is larger than $I(u,s,a,b)$ for all $u,s \in \mathcal{U}_2$ and $a,b \in \{0,1\}^2$. Since I am taking the supremum of $I(u,s,a,b)$ over $u,s,a,b$, I can assume without loss of generality \citep[as in][]{ledoux2011probability} 
	\begin{equation} \label{eqn:helper} 
	\small 
	\begin{aligned} 
	u_1a_1 + a_2 \phi(u_2) \ge s_1 b_1 + b_2 \phi(s_2), \quad s_1 b_1 - b_2 \phi(s_2) \ge u_1 a_1 - a_2 \phi(u_2).  
	\end{aligned} 
	\end{equation} 
	I can now define four quantities of interest
	$$
	\small
	\begin{aligned} 
	q_1 = b_1 s_1 - b_2 \phi(s_2), \quad q_2 = b_1 s_1 - L s_2 b_2, \quad q_1' = a_1 u_1 + L a_2 u_2, \quad q_2' = a_1 u_1 + a_2 \phi(u_2).
	\end{aligned} 
	$$
	 I consider four different cases, similarly to \cite{ledoux2011probability} and argue that for any value of $(a_1, a_2, b_1, b_2) \in \{0,1\}^4$, $2 I(u,s,a,b)  = q_1 + q_2' \le q_1' + q_2$. 
	\\ \textit{Case 1}
	Start from the case $a_2u_2, s_2b_2 \ge 0$. We know that $\phi(0) = 0$, so that $|b_2 \phi(s_2)| \le L b_2 s_2$. 
	Now assume that $a_2 u_2 \ge b_2 s_2$. In this case 
$
	q_1 - q_2 =  L b_2 s_2  - b_2 \phi(s_2) \le L a_2 u_2 - a_2 \phi(u_2) = q_1' - q_2'   
	$
	since $|a_2\phi(u_2) - b_2 \phi(s_2)| \le L|a_2 u_2 - b_2 s_2| = L(a_2u_2 - b_2 s_2)$. To see why this last claim holds, note that for $a_2, b_2 = 1$, then the results hold by the condition $a_2 u_2 \ge b_2 s_2$ and Lipschitz continuity.  If instead $a_2 = 1, b_2 = 0$, the claim trivially holds. While the case $a_2 = 0, b_2 = 1$, then it must be that $s_2 = 0$ since we assumed that $a_2 u_2 \ge 0, b_2 s_2 \ge 0$ and $a_2 u_2 \ge b_2 s_2$. Thus $q_1 - q_2 \le q_1' - q_2'$.  If instead $b_2s_2 \ge a_2 u_2$, then use $-\phi$ instead of $\phi$ and switch the roles of $s,u$ giving a similar proof. \\ \textit{Case 2} Let $a_2 u_2 \le 0, b_2 s_2 \le 0$. The proof is as Case 1, switching the signs where necessary. \\ \textit{Case 3} Let $a_2 u_2 \ge 0, b_2 s_2 \le 0$. Then $a_2 \phi(u_2) \le L a_2 u_2$, since $a_2 \in \{0,1\}$ and by Lipschitz properties of $\phi$, $- b_2 \phi(s_2) \le - b_2 L s_2$ so that 
	$
	\small
	a_2 \phi(u_2) - b_2 \phi(s_2) \le a_2 L u_2 - b_2 L s_2. 
	$
	\\ \textit{Case 4} Let $a_2 u_2 \le 0, b_2 s_2 \ge 0$. Then the claim follows symmetrically to Case 3. \\ The conclusion of the proof follows verbatim the one in \cite{ledoux2011probability}. 
\end{proof}

\begin{lem} \label{lem:finallemma} 
Let $\Pi$, $\Pi'$ be two function classes, each with VC dimension $v$, and $\pi: \mathcal{X} \mapsto \{0,1\}$ for any $\pi \in \Pi, \Pi'$.
For $i \in \{1, \cdots, n\}$, take arbitrary $(X_{k \in N_i}, X_i), X_i \in \mathcal{X}, \Omega_i \in \mathbb{R}, R_i \in \{0,1\}$, adjacency matrix $A$, and functions $f_i: \mathbb{Z} \mapsto [-U_n, U_n]$, for a positive constant $U_n > 0$. Assume that $\mathbb{E}[|\Omega_i|^{3} | (R_i)_{i=1}^n, (X_i)_{i=1}^n, A] < B$, for some $B < \infty$, and $(\Omega_i)_{i=1}^n | (R_i)_{i=1}^n, (X_i)_{i=1}^n, A$ are independent but not necessarily identically distributed. Let $\sigma_1, \cdots, \sigma_n$ be $i.i.d.$ Rademacher random variables, independent of $\Big[\Big(X_i, R_i, \Omega_i\Big)_{i=1}^n, A\Big]$. Then for a universal constant $c_0 < \infty$, for any $n \ge 1$, $v = \mathrm{VC}(\Pi) = \mathrm{VC}(\Pi')$
\begin{equation} \label{eqn:rhs} 
\small 
\begin{aligned} 
\mathbb{E}_{\Omega, \sigma}\Big[\sup_{\pi_1 \in \Pi, \pi_2 \in \Pi'} \Big | \sum_{i=1}^n R_i f_i \Big(\sum_{k \in N_i} \pi_2(X_k) \Big) \pi_1(X_i)\sigma_i \Omega_i  \Big | \Big] \le  c_0 U_n \sqrt{v B \mathcal{N}_n \log(\mathcal{N}_n)  \sum_{i=1}^n R_i   }.
\end{aligned} 
\end{equation} 
\end{lem}  

\begin{proof}[Proof of Lemma \ref{lem:finallemma}]
First, note that since $R_i \in \{0,1\}$, and we take the expectation conditional on $(R_i)_{i=1}^n$, we can interpret the sum in Equation \eqref{eqn:rhs} as a sum over elements $\sum_{i=1}^n R_i$ many elements.   Also, note that from Lemma \ref{lem:lipweight}, we have that $f_i(t)$ is $2 U_n$-Lipschitz in $t$.
\vspace{-3mm} 
\paragraph{First decomposition} 
First, we add and subtract the value of the function $f_i(0)$ at zero. The left hand side in Equation \eqref{eqn:rhs} equals
\begin{equation} \label{eqn:lasterms}
\small
\begin{aligned} 
&\mathbb{E}_{\Omega,\sigma}\Big [ \sup_{\pi_1 \in \Pi, \pi_2 \in \Pi'}   \Big | \sum_{i = 1}^n R_i \sigma_i  \Big(f_i \Big(\sum_{k \in N_i} \pi_2(X_k)\Big) - f_i (0)  + f_i(0) \Big)   \Omega_i   \pi_1(X_i) \Big |\Big]
\\ &\le \underbrace{ \mathbb{E}_{\Omega,\sigma}\Big [ \sup_{\pi_1 \in \Pi, \pi_2 \in \Pi'}   \Big | \sum_{i = 1}^n R_i \sigma_i  \Big(f_i \Big(\sum_{k \in N_i} \pi_2(X_k) \Big) - f(0) \Big)   \Omega_i   \pi_1(X_i) \Big |\Big]}_{(1)} +\underbrace{ \mathbb{E}_{\Omega,\sigma}\Big [ \sup_{\pi_1 \in \Pi}  \Big | \sum_{i = 1}^n R_i \sigma_i  f_i(0)   \Omega_i   \pi_1(X_i) \Big |\Big]}_{(2)}. 
\end{aligned} 
\end{equation} 
First, I bound $(1)$. I write 
\begin{equation} \label{eqn:hh}
\small
\begin{aligned}
(1) & = \mathbb{E}_{\Omega,\sigma}\Big [ \sup_{\pi_1 \in \Pi, \pi_2 \in \Pi'}\Big |  \sum_{i = 1}^n R_i \sigma_i  \Big(f_i \Big(\sum_{k N_i} \pi_2(X_k) \Big) - f_i(0) \Big) |\Omega_i| \text{sign}(\Omega_i) \pi_1(X_i) \Big |\Big ] \\
&=  \mathbb{E}_{\Omega,\tilde{\sigma}}\Big [ \sup_{\pi_1 \in \Pi, \pi_2 \in \Pi'} \Big |  \sum_{i = 1}^n R_i \tilde{\sigma}_i  \Big(f_i \Big(\sum_{k \in N_i} \pi_2(X_k)\Big) - f_i(0) \Big) |\Omega_i|  \pi_1(X_i) \Big |\Big] 
\end{aligned} 
\end{equation} 
where $\tilde{\sigma}_i = \text{sign}(\Omega_i) \sigma_i$ which are $i.i.d.$ Rademacher random variables independent of $(\Omega_i, X_i, R_i)_{i=1}^n, A$, since $P(\tilde{\sigma}_i = 1 | \Omega) = P(\sigma_i \text{sign}(\Omega_i) = 1 |\Omega) = 1/2$.  Using the fact that $|\Omega_i| \ge 0$, I have
\begin{equation} \label{eqn:jjhgg1}
\small
\begin{aligned}
\eqref{eqn:hh}  &=\mathbb{E}_{\Omega,\tilde{\sigma}}\Big [ \sup_{\pi_1 \in \Pi, \pi_2 \in \Pi'}\Big | \sum_{i = 1}^n R_i \tilde{\sigma}_i  \Big(f_i \Big(\sum_{k \in N_i} \pi_2(X_k)\Big) - f(0) \Big) \int_0^{\infty} 1\{|\Omega_i| > \omega\} d\omega  \pi_1(X_i) \Big |\Big]  \\ &\le \mathbb{E}_{\Omega,\tilde{\sigma}}\Big [ \sup_{\pi_1 \in \Pi, \pi_2 \in \Pi'} \int_0^{\infty}  \Big |\sum_{i = 1}^n R_i \tilde{\sigma}_i  \Big(f_i \Big(\sum_{k \in N_i} \pi_2(X_k)\Big) - f_i(0) \Big)  1\{|\Omega_i| > \omega\}   \pi_1(X_i) \Big |d\omega\Big] \\
&\le  \int_0^{\infty} \mathbb{E}_{\Omega,\tilde{\sigma}}\Big [ \sup_{\pi_1 \in \Pi, \pi_2 \in \Pi'}  \Big |  \sum_{i = 1}^n R_i \tilde{\sigma}_i   \Big(f_i \Big(\sum_{k \in N_i} \pi_2(X_k)\Big) - f_i(0) \Big)  1\{|\Omega_i| > \omega\}   \pi_1(X_i) \Big |\Big] d\omega . 
\end{aligned}
\end{equation} 
Next, I use the law of iterated expectation to first take the expectation over $\tilde{\sigma}$ (conditional on $\Omega$) and then take the expectation over $\Omega$. I also divide and multiplied by $U_n$. I obtain 
\begin{equation} \label{eqn:jjhgg}
\small 
\begin{aligned} 
\eqref{eqn:jjhgg1} \le U_n \int _0^{\infty} \mathbb{E}_{\Omega}\Big[\mathbb{E}_{\tilde{\sigma}}\Big [ \sup_{\pi_1 \in \Pi, \pi_2 \in \Pi'}   \Big |  \sum_{i = 1}^n R_i \tilde{\sigma}_i \frac{1}{U_n} \Big(f_i \Big(\sum_{k \in N_i} \pi_2(X_k)\Big) - f_i(0) \Big)   1\{|\Omega_i| > \omega\}   \pi_1(X_i) \Big |\Big]\Big]d\omega.
\end{aligned}
\end{equation} 
\paragraph{Lipschitz property} Let $\phi_i(t) = \frac{1}{U_n}(f_i(t) - f_i(0))$. Here, $\phi_i$ is Lipschitz in $t$, with Lipschitz constant equal to $1$. In addition, $\phi_i(0) = 0$. By Lemma \ref{lem:ledoux}\footnote{ 
Conditional on $X, A, \Omega$, I invoke Lemma \ref{lem:ledoux} with $(\pi_1(X_i)1\{|\Omega_i| > \omega\})_{i=1}^n$ in lieu of $(\alpha_1, \cdots, \alpha_n) \in \mathcal{K}_n \subseteq \{0,1\}^n$ in the statement of Lemma \ref{lem:ledoux}, since $\pi_1(X_i) 1\{|\Omega_i| > \omega\}$ is binary. Here $(\sum_{k \in N_i} \pi_2(X_k))_{i=1}^n$ is in lieu of $(u_1, \cdots, u_n)\in \mathcal{U}_n$ in Lemma \ref{lem:ledoux}.  The spaces $\mathcal{K}_n, \mathcal{U}_n$ in Lemma \ref{lem:ledoux}, here are those defined (given $\Omega, X, A$), by $\pi_1(X_i) 1\{|\Omega_i| > \omega\}, \pi_1 \in \Pi$ and $(\sum_{k \in N_i} \pi_2(X_k))_{i=1}^n, \pi_2 \in \Pi'$, respectively.}, 
\begin{equation} \label{eqn:jjj}
\small
\begin{aligned} 
&\mathbb{E}_{\tilde{\sigma}}\Big [ \sup_{\pi_1 \in \Pi, \pi_2 \in \Pi'}   \Big |  \sum_{i = 1}^n R_i \tilde{\sigma}_i \frac{1}{U_n} \Big(f_i \Big(\sum_{k \in N_i} \pi_2(X_k)\Big) - f_i(0) \Big)   1\{|\Omega_i| > \omega\}   \pi_1(X_i) \Big |\Big]
\\ &\le 
2 \mathbb{E}_{\tilde{\sigma}}\Big [ \sup_{\pi_1 \in \Pi, \pi_2 \in \Pi'}   \Big | \sum_{i = 1}^n R_i \tilde{\sigma}_i  \Big(\sum_{k \in N_i} \pi_2(X_k)\Big)   1\{|\Omega_i| > \omega\}   \pi_1(X_i) \Big |\Big].
\end{aligned} 
\end{equation}
I can therefore write 
$$
\small 
\begin{aligned} 
\eqref{eqn:jjhgg} \le 2 U_n \int_0^{\infty} \mathbb{E}_{\Omega,\tilde{\sigma}}\Big [ \sup_{\pi_1 \in \Pi, \pi_2 \in \Pi'}   \Big | \sum_{i = 1}^n R_i \tilde{\sigma}_i  \Big(\sum_{k \in N_i} \pi_2(X_k)\Big)   1\{|\Omega_i| > \omega\}   \pi_1(X_i) \Big |\Big] d\omega.
\end{aligned} 
$$ 
\paragraph{Function reparametrization} I now consider a reparametrization of the function class. Define $\tilde{X}_{i} \in \mathcal{X}^{\mathcal{N}_n} = (X_i, X_{k \in N_i}, \emptyset, \cdots, \emptyset)$, where for the entries $h > |N_i| + 1$, $\tilde{X}_i^{(h)} = \emptyset$, denoting the $h^{th}$ entry of $\tilde{X}_i$. Without loss of generality, let $\pi(\emptyset) = 0$. Define $\pi_j \in \Pi_j$ a function class of the form $\pi_j(\tilde{X}_i) = \pi(\tilde{X}_i^{(j)}), \pi \in \Pi'$ for $j > 1$ and $\pi_1(\tilde{X}_i) = \pi(\tilde{X}_i^{(1)}), \pi \in \Pi$, i.e., equal to $\pi$ applied to the $j^{th}$ entry of the vector $\tilde{X}_i$. Since this is a trivial reparametrization, $\mathrm{VC}(\Pi_j) = \mathrm{VC}(\Pi)$ ($= \mathrm{VC}(\Pi')$ by assumption) for all $j \in \{1, \cdots, \mathcal{N}_n\}$.\footnote{See e.g., Theorem 29.4 in \cite{devroye2013probabilistic}.}  I can write 
$$
\small 
\begin{aligned} 
& U_n \int _0^{\infty} \mathbb{E}_{\Omega,\tilde{\sigma}}\Big [ \sup_{\pi_1 \in \Pi, \pi_2 \in \Pi'}   \Big |  \sum_{i = 1}^n R_i \tilde{\sigma}_i  \Big(\sum_{k \in N_i} \pi_2(X_k)\Big)   1\{|\Omega_i| > \omega\}   \pi_1(X_i) \Big |\Big]d\omega \\ &\le 
U_n \int _0^{\infty} \mathbb{E}_{\Omega,\tilde{\sigma}}\Big [ \sup_{\tilde{\pi}_{1} \in \Pi_{1}, \cdots, \tilde{\pi}_{\mathcal{N}_n} \in \Pi_{\mathcal{N}_n}}   \Big | \sum_{i = 1}^n R_i \tilde{\sigma}_i  \Big(\sum_{k = 1}^{\mathcal{N}_n - 1} \tilde{\pi}_{k+1}(\tilde{X}_i)\Big)   1\{|\Omega_i| > \omega\}   \tilde{\pi}_1(\tilde{X}_i) \Big |\Big]d\omega \\
&= U_n \int_0^{\infty} \mathbb{E}_{\Omega, \tilde{\sigma}} \Big[\sup_{\tilde{\pi} \in \tilde{\Pi}_n} \Big|  \sum_{i=1}^n R_i \tilde{\sigma}_i \tilde{\pi}(\tilde{X}_{i})1\{|\Omega_i| > \omega\}\Big|\Big] d\omega
\end{aligned} 
$$ 
where $\tilde{\Pi}_n = \Big\{\pi_1 \Big(\sum_{j = 2}^{\mathcal{N}_n - 1} \pi_{j+1} \Big), \pi_j \in \Pi_j, j = 1, \cdots, \mathcal{N}_n\Big\}$. I now apply Lemma \ref{lem:boundnumber}, using the fact that $\mathrm{VC}(\Pi_j) = \mathrm{VC}(\Pi) = \mathrm{VC}(\Pi')$, for any $j \in \{1, \cdots, \mathcal{N}_n\}$. By Lemma \ref{lem:boundnumber}, for any $n \ge 1$, the Dudley's integral of the function class $\tilde{\Pi}_n$  
is uniformly bounded by $C \sqrt{\mathcal{N}_n \log(\mathcal{N}_n)\mbox{VC}(\Pi)}$, for a finite universal constant $C$. By Lemma \ref{lem:kita}, since I am summing over $\sum_{i=1}^n R_i$ elements (conditional on $(R_1, \cdots, R_n)$), for a universal constant $\bar{C}' < \infty$
$$
\small
\begin{aligned} 
 &U_n \int_0^{\infty} \mathbb{E}_{\Omega, \tilde{\sigma}} \Big[\sup_{\tilde{\pi} \in \tilde{\Pi}_n} \Big| \sum_{i=1}^n R_i \tilde{\sigma}_i \tilde{\pi}(\tilde{X}_{i})1\{|\Omega_i| > \omega\}\Big|\Big] d\omega &\le \bar{C}' U_n \sqrt{B \mathcal{N}_n \mbox{VC}(\Pi)\log(\mathcal{N}_n) \sum_{i=1}^n R_i} . 
\end{aligned} 
$$

\paragraph{Term (2)} Next, I bound the term $(2)$ in Equation \eqref{eqn:lasterms}. Similar to $(1)$, 
$$
\small
\begin{aligned} 
\mathbb{E}_{\Omega,\sigma}\Big [ \sup_{\pi \in \Pi}   \Big | \sum_{i = 1}^n R_i \sigma_i  f_i(0)   \Omega_i   \pi(X_i) \Big |\Big] &\le 
 U_n \mathbb{E}_{\Omega,\tilde{\sigma}}\Big [ \sup_{\pi \in \Pi}   \Big | \sum_{i = 1}^n R_i \tilde{\sigma}_i  |\frac{f_i(0)}{U_n}   \Omega_i|   \pi(X_i) \Big |\Big] 
\\ &\le U_n \int_0^{\infty} \mathbb{E}_{\Omega, \tilde{\sigma}}\Big [ \sup_{\pi \in \Pi}   \Big |  \sum_{i = 1}^n R_i \tilde{\sigma}_i  1\{|f_i(0)   \Omega_i|/U_n > \omega\}   \pi(X_i) \Big |\Big] d\omega. 
\end{aligned} 
$$
Since $\Pi$ has finite VC dimension, by Theorem 2.6.7 of \cite{van1996weak} (the argument is the same as in Lemma \ref{lem:boundnumber}), 
$\int_0^{2} \sqrt{\mathcal{M}_1(\eta, \Pi(x_1^n))} d\eta < C \sqrt{\mbox{VC}(\Pi)}$ for a universal constant $C$, and for any $x_1^n \in \mathcal{X}^n$. 
Since $\mathbb{E}_{\Omega}[|f_i(0) \Omega_i/U_n|^3] \le B$ ($f_i(0)/U_n \in [-1, 1]$)
we can apply Lemma \ref{lem:kita}, with $|f_i(0)   \Omega_i|/U_n$ in lieu of $|\Omega_i|$ in Lemma \ref{lem:kita}, and obtain  
$$
\small
\begin{aligned} 
U_n \int_0^{\infty} \mathbb{E}_{\Omega, \sigma}\Big [ \sup_{\pi \in \Pi}   \Big | \sum_{i = 1}^n R_i \sigma_i  1\{|f_i(0)   \Omega_i|/U_n > \omega\}   \pi(X_i) \Big |\Big] d\omega \le C' U_n \sqrt{B \mathrm{VC}(\Pi) \sum_{i=1}^n R_i}
\end{aligned}
$$ 
for a universal constant $C' < \infty$. The proof completes. 
\end{proof} 

 The following lemma is a direct corollary of Lemma \ref{lem:finallemma}. 
 
 \begin{lem} \label{lem:final_lemmab} Let $\pi \in \Pi$, be a function class, with $\pi: \mathcal{X} \mapsto \{0,1\}$. 
For $i \in \{1, \cdots, n\}$, take arbitrary $(X_{k \in N_i}, X_i), X_i \in \mathcal{X}, \Omega_i \in \mathbb{R}, R_i \in \{0,1\}$, adjacency matrix $A$, and functions $g_i: \mathbb{Z} \times \{0,1\} \mapsto [-U_n, U_n]$, for a positive constant $U_n > 0$. Assume that $\mathbb{E}[|\Omega_i|^{3} | (R_i)_{i=1}^n, (X_i)_{i=1}^n, A] < B$, for some $B < \infty$, and $(\Omega_i)_{i=1}^n | (R_i)_{i=1}^n, (X_i)_{i=1}^n, A$ are independent but not necessarily identically distributed. Let $\sigma_1, \cdots, \sigma_n$ be $i.i.d.$ Rademacher random variables, independent of $\Big[\Big(X_i, R_i, \Omega_i\Big)_{i=1}^n, A\Big]$. Then for a universal constant $c_0 < \infty$, for any $n \ge 1$
\begin{equation} \label{eqn:rhs2} 
\small 
\mathbb{E}_{\Omega, \sigma}\Big[\sup_{\pi \in \Pi} \Big | \sum_{i=1}^n R_i g_i \Big(\sum_{k \in N_i} \pi(X_k), \pi(X_i) \Big)\sigma_i \Omega_i  \Big | \Big] \le  c_0 U_n \sqrt{ \mathrm{VC}(\Pi) B \mathcal{N}_n \log(\mathcal{N}_n)  \sum_{i=1}^n R_i   }.
\end{equation} 
 \end{lem} 
 
 \begin{proof}[Proof of Lemma \ref{lem:final_lemmab}] By Lemma \ref{lem:lipweight}, $g_i(t, 1), g_i(t, 0)$ are $2 U_n$-Lipschitz in $t$. It follows 
\begin{equation} \label{eqn:final_lemma} 
 \small 
 \begin{aligned} 
 & \mathbb{E}_{\Omega, \sigma}\Big[\sup_{\pi \in \Pi} \Big | \sum_{i=1}^n R_i g_i \Big(\sum_{k \in N_i} \pi(X_k), \pi(X_i) \Big)\sigma_i \Omega_i  \Big | \Big] \\ & \le \mathbb{E}_{\Omega, \sigma}\Big[\sup_{\pi \in \Pi} \Big | \sum_{i=1}^n R_i g_i \Big(\sum_{k \in N_i} \pi(X_k), 1 \Big) \pi(X_i) \sigma_i \Omega_i  \Big | \Big] +  \mathbb{E}_{\Omega, \sigma}\Big[\sup_{\pi \in \Pi} \Big | \sum_{i=1}^n R_i g_i \Big(\sum_{k \in N_i} \pi(X_k), 0 \Big) (1 - \pi(X_i)) \sigma_i \Omega_i  \Big | \Big] 
  \end{aligned} 
 \end{equation}  
 It follows 
 $$
 \small 
 \begin{aligned} 
 \eqref{eqn:final_lemma} & \le \mathbb{E}_{\Omega, \sigma}\Big[\sup_{\pi_1 \in \Pi, \pi_2 \in \Pi} \Big | \sum_{i=1}^n R_i g_i \Big(\sum_{k \in N_i} \pi_2(X_k), 1 \Big) \pi_1(X_i) \sigma_i \Omega_i  \Big | \Big]  \\ &\quad +  \mathbb{E}_{\Omega, \sigma}\Big[\sup_{\pi_1' \in \Pi, \pi_2' \in \Pi} \Big | \sum_{i=1}^n R_i g_i \Big(\sum_{k \in N_i} \pi_2'(X_k), 0 \Big) (1 - \pi_1'(X_i)) \sigma_i \Omega_i  \Big | \Big]. 
 \end{aligned}
 $$ 
 By Lemma 29.4 in \cite{devroye2013probabilistic}, the VC dimension of the function class $1 - \pi, \pi \in \Pi$ equals the VC$(\Pi)$. By Lemma \ref{lem:finallemma} each term in Equation \eqref{eqn:final_lemma} is bounded by $C U_n \sqrt{ \mathrm{VC}(\Pi) B \mathcal{N}_n \log(\mathcal{N}_n)  \sum_{i=1}^n R_i   }$, for a universal constant $C < \infty$. 
 \end{proof}

\begin{lem} \label{lem:K} Let $K^*$ be as in Algorithm \ref{alg:adaptive} (Equation \ref{eqn:coloring}). Then $K^* \le \chi(A^2)$ almost surely. 
\end{lem} 
\begin{proof}[Proof of Lemma \ref{lem:K}] To prove the claim it suffices to show that a partition such that the constraints in Equation \eqref{eqn:coloring} holds exists, and such a partition has size at most $\chi(A^2)$, for all possible realizations of $R = (R_1, \cdots, R_n)$. As a first step, observe that for fixed $K$, binary variables $G_{j,k} \in \{0,1\}, j \in \{1, \cdots, n\}, k \in \{1, \cdots, K\}$, with $\sum_{k=1}^K G_{j, k} = 1 \forall j \in \{1, \cdots, n\}$, 
$$
\small 
\begin{aligned} 
\sum_{k=1}^K \sum_{j = 1}^n 1\{j \in N_i \text{ or } N_i \cap N_j \neq \emptyset \} G_{j,k} G_{i,k} = 0 \text{ implies } \sum_{k=1}^K \sum_{j = 1}^n R_i R_j 1\{j \not \in \mathcal{I}_i\} G_{j,k} G_{i,k} = 0. 
\end{aligned} 
$$ 
Namely, $\sum_{k=1}^K \sum_{j = 1}^n 1\{j \in N_i \text{ or } N_i \cap N_j \neq \emptyset \} G_{j,k} G_{i,k} = 0$ is a stricter constraint than $\sum_{k=1}^K \sum_{j = 1}^n R_i R_j 1\{j \not \in \mathcal{I}_i\} G_{j,k} G_{i,k} = 0,$ in Equation \eqref{eqn:coloring}, for all $R_1, \cdots, R_n, R_i \in \{0,1\}$ (because $R_i$ is binary). I can therefore bound the solution to the optimization problem in Equation \eqref{eqn:coloring} as follows  
\begin{equation} \label{eqn:K_star}
\small 
\begin{aligned} 
K^* \le \mathrm{arg} \min_{K \in \mathbb{Z}} & \min_{G \in \{0,1\}^{n \times K}} K \\ &\text{ such that }  \sum_{k=1}^K \sum_{j = 1}^n 1\{j \in N_i \text{ or } N_i \cap N_j \neq \emptyset \} G_{j,k} G_{i,k} = 0 , \text{ and } \sum_{k =1}^K G_{i,k} = 1 \forall i.
\end{aligned} 
\end{equation} 
The right-hand side in Equation \eqref{eqn:K_star} equals $\chi(A^2)$ by definition of smallest proper cover. 
\end{proof} 

\vspace{-4mm} 

\subsubsection{Identification} \label{app:ident}

\vspace{-1mm}

\begin{proof}[Proof of Lemma \ref{prop:welfare}] 
Let $e \Big(\pi(X_i), T_i(\pi), Z_{k \in N_i},R_{k \in N_i},  Z_i, |N_i|\Big) = e_i(\pi), I_i(\pi) = 1 \{T_i(\pi) = T_i, \pi(X_i) = D_i \}$. 
Under Assumption \ref{ass:sutnva}, I can write 
\begin{equation} \label{eqn:ipweq}
\small
\begin{aligned} 
\mathbb{E}\Big[ R_i \frac{I_i(\pi)}{e_i(\pi)}Y_i \Big|  A, Z\Big]  =  \mathbb{E}\Big[ R_i \frac{I_i(\pi)}{e_i(\pi)}r\Big(\pi(X_i), T_i(\pi), Z_i, |N_i|, \varepsilon_i\Big) \Big|  A, Z \Big].
\end{aligned} 
\end{equation}
Under Assumption \ref{ass:quasi} (i,ii),
$$
\small
\begin{aligned} 
\eqref{eqn:ipweq} &=  \mathbb{E}\Big[\frac{R_i I_i(\pi)}{e_i(\pi)}| A, Z\Big] \times \mathbb{E}\Big[r\Big(\pi(X_i), T_i(\pi), Z_i, |N_i|,  \varepsilon_i\Big)\Big | A, Z\Big]. 
\end{aligned} 
$$
By Assumption \ref{ass:quasi} (i), 
$
\small 
\begin{aligned} 
\mathbb{E}\Big[\frac{R_i I_i(\pi)}{e_i(\pi)}| A, Z\Big] = \mathbb{E}\Big[R_i \mathbb{E}\Big[\frac{ I_i(\pi)}{e_i(\pi)}| A, Z, (R_i)_{j\neq i}, R_i = 1\Big]\Big]= \frac{n_e}{n}.
\end{aligned}
$
\end{proof} 

\begin{lem} \label{lem:doublerobust} Let Assumptions \ref{ass:sutnva}, \ref{ass:quasi} hold. Then 
$$ 
\small
\begin{aligned} 
 & \frac{1}{n_e} \sum_{i=1}^n \mathbb{E}\Big[R_i \frac{1 \{T_i(\pi) = T_i, d = D_i \}}{e^c \Big(\pi(X_i), T_i(\pi), Z_{k \in N_i}, R_{k \in N_i}, Z_i,  |N_i|\Big)}\Big(Y_i -m^c\Big(\pi(X_i), T_i(\pi), Z_i, |N_i|\Big)\Big) \Big| A, Z\Big] \\ & + \frac{1}{n_e} \sum_{i=1}^n \mathbb{E}\Big[R_i m^c\Big(\pi(X_i), T_i(\pi), Z_i, |N_i|\Big) \Big| A, Z\Big] = \frac{1}{n} \sum_{i=1}^n \mathbb{E}\Big[r\Big((\pi(X_i), T_i(\pi), Z_i, |N_i|, \varepsilon_i\Big) \Big| A, Z\Big]
 \end{aligned} 
$$
if either $e^c = e$ or (and) Assumption \ref{ass:ignorability} (A) holds with $m^c = m$.
\end{lem} 

\begin{proof}[Proof of Lemma \ref{lem:doublerobust}]
Define $e_i^c(\pi) = e^c \Big(\pi(X_i), T_i(\pi), Z_{k \in N_i}, R_{k \in N_i}, Z_i, |N_i|\Big), I_i(\pi) = 1 \{T_i(\pi) = T_i, \pi(X_i) = D_i \}, m_i^c = m^c(\pi(X_i), T_i(\pi), Z_i, |N_i|)$. 
Whenever $e^c = e$, the result directly follows from Lemma \ref{prop:welfare}. 
Let now $m^c = m$ and Assumption \ref{ass:ignorability} (A) hold. Then (since the indicators $R$ are independent of $\varepsilon$ by Assumption \ref{ass:ignorability})
$$ 
\small
\begin{aligned} 
 & \mathbb{E}\Big[\frac{R_i I_i(\pi)}{e_i^c(\pi)}\Big(Y_i - m_i^c(\pi)\Big)\Big | A, Z\Big]  = 
 \mathbb{E}\Big[R_i \frac{I_i(\pi)}{e_i^c(\pi)}\Big(r\Big(\pi(X_i), T_i(\pi), Z_i, |N_i|, \varepsilon_i\Big) - m_i(\pi)\Big)\Big | A, Z\Big] \\ &=  \mathbb{E}\Big[R_i \frac{I_i(\pi)}{e_i^c(\pi)} \Big | A, Z\Big] \times \mathbb{E}\Big[\Big(r\Big(\pi(X_i), T_i(\pi), Z_i, |N_i|, \varepsilon_i\Big) - m_i(\pi)\Big)\Big |A, Z \Big] = 0.
 \end{aligned} 
$$
By Assumption \ref{ass:quasi} (i), 
$
\frac{1}{n_e} \sum_{i=1}^n \mathbb{E}\Big[R_i m_i(\pi) \Big| A, Z\Big] = \frac{1}{n} \sum_{i=1}^n m(\pi(X_i), T_i(\pi), Z_i, |N_i|). 
$
\end{proof}

\vspace{-3mm}

\subsection{Proofs for ``Additional extensions"} \label{proof:2}

\vspace{-1mm}

\subsubsection{Proof of Proposition \ref{prop:rate_estimator}} \label{app:lasso}

Define $k(i)$ the partition $k \in \{1, \cdots, K^*\}$ associated with unit $i$ under Algorithm \ref{alg:adaptive} and $j(i)$ the fold $j$ within partition $k(i)$ associated with $i$ under Algorithm \ref{alg:adaptive}. 
Recall the definition of $\phi_{s}^m(i) = 1\{k(s) = k(i), j(s) \neq j(i)\}$ is Section \ref{sec:lasso}. 
Note that $\phi_s^m(i)$ are random variables since they depend on sampled indicators $R_1, \cdots, R_n$. By Lemma \ref{lem:K}, $K^* \le \chi(A^2)$. 

For each partition $k$, Algorithm \ref{alg:adaptive} creates $J$ folds with the same number of units. I can write 
$
 \sum_{s=1}^n R_s \phi_{s}^m(i) \ge \Big\lfloor \frac{J - 1}{J} \sum_{s=1}^n R_s 1\{k(s) = k(i)\} \Big\rfloor 
$
where I take the floor function for cases where $J$ is not a multiple of the number of sampled units in the partition $k(i)$. We have 
\begin{equation}  \label{eqn:helperfinalprop}
\small 
\begin{aligned} 
& \frac{1}{n} \sum_{i=1}^n \mathbb{E}\Big[\Big(1 + \sum_{s=1}^n R_s \phi_s^m(i)\Big)^{-2\zeta_m} | R_i = 1, A, Z\Big] \\ 
&\le \frac{1}{n} \sum_{i=1}^n \mathbb{E}\Big[\max\Big\{1, \Big(\frac{J - 1}{J} \sum_{s=1}^n R_s 1\{k(s) = k(i)\} \Big)^{-2\zeta_m} \Big\} | R_i = 1, A, Z\Big].
\end{aligned} 
\end{equation}  

\paragraph{Worst-case partition} Next, I replace the (random) partitions $k \in \{1, \cdots, K^*\}$ with \textit{worst-case} non-random partitions.
Denote $k^w(i) \in \{1, \cdots, \chi(A^2)\}$ the worst-case partition
\begin{equation} \label{eqn:find_feasible}
\small 
\begin{aligned} 
k^w(\cdot) \in \mathrm{arg} & \max_{\underline{k}(i) \in \{1, \cdots, \chi(A^2)\}, i \in \{1, \cdots, n\}} \frac{1}{n} \sum_{i=1}^n \mathbb{E}\Big[\max\Big\{1, \Big(\frac{J - 1}{J} \sum_{s=1}^n R_s 1\{\underline{k}(s) = \underline{k}(i)\}\Big)^{-2\zeta_m} \Big\} | R_i = 1, A, Z\Big]  \\
& \text{ such that } \underline{k}(i) \neq \underline{k}(j), \forall j \in N_i \text{ or } N_i \cap N_j \neq \emptyset, \quad \sum_{k = 1}^{\chi(A^2)}  1\{\underline{k}(i) = k\} = 1, \quad \forall i \in \{1, \cdots, n\}.
\end{aligned} 
\end{equation}
Here, $k^w(\cdot)$ always exists by definition of $\chi(A^2)$.\footnote{Existence is satisfied if a feasible solution to Equation \eqref{eqn:find_feasible} exists. One example is the smallest proper cover $\mathcal{C}_n(A^2)$ as in Definition \ref{defn:cover} for the adjacency matrix $A^2$. This satisfies the constraints in Equation \eqref{eqn:find_feasible} by definition. A proper cover always exists (e.g., if the network is fully connected, $\chi(A^2) = n$).} In addition, $k^w$ does not depend on the \textit{realized} $R$ by construction. I claim that 
\begin{equation} \label{eqn:to_bound1}
\small 
\begin{aligned} 
& \eqref{eqn:helperfinalprop} \le  
\frac{1}{n} \sum_{i=1}^n \underbrace{\mathbb{E}\Big[\max\Big\{1, \Big(\frac{J - 1}{J} \sum_{s=1}^n R_s 1\{k^w(s) = k^w(i)\}\Big)^{-2\zeta_m} \Big\} | R_i = 1, A, Z\Big]}_{(I)}
\end{aligned} 
\end{equation} 
Equation \eqref{eqn:to_bound1} holds for two reasons: (i)  $K^* \le \chi(A^2)$ by Lemma \ref{lem:K}; (ii) I can show that the constraint in Equation \eqref{eqn:find_feasible} is a stricter constraint than the constraint in Equation \eqref{eqn:coloring} for \textit{any} realization of $(R_1, \cdots, R_n)$ (see the proof of Lemma \ref{lem:K} for details). 

\paragraph{Upper bound on $(I)$} Take any $i \in \{1, \cdots, n\}$ such that $1\{k^w(s) = k^w(i)\} = 1$ for some $s \neq i$. It follows from \cite{cribari2000note} (equation at the bottom of Page 274)
\begin{equation} \label{eqn:helper11111}
\small 
\begin{aligned} 
& (I) \le (\frac{J - 1}{J})^{-2\zeta_m} \mathbb{E}\Big[\Big(1 + \sum_{s \neq i} R_s 1\{k^w(s) = k^w(i)\}\Big)^{-2\zeta_m} | R_i = 1, A, Z\Big] \quad (\because R_i = 1) \\ & \le \frac{(\frac{J - 1}{J})^{-2\zeta_m}}{\Big(\frac{n_e}{n} \sum_{s \neq i} 1\{k^w(s) = k^w(i)\}\Big)^{2\zeta_m}} + \mathcal{O}\Big(\frac{1}{(\sum_{s \neq i} 1\{k^w(s) = k^w(i)\})^{2\zeta_m + 1}}\Big) \\ &\quad  (\because n_e/n = \alpha \in (0,1), J = \mathcal{O}(1)) .  
\end{aligned} 
\end{equation} 
In the right-hand-side (first equation) we added one since $k^w(i) = k^w(s)$ for $s = i$. If instead there is no $s \neq i$, such that $1\{k^w(s) = k^w(i)\} = 1$, then trivially $(I) = \mathcal{O}(1)$. 

\paragraph{Sum over all partitions} Summing over all $\chi(A^2)$ partitions, we obtain 
$$
\small 
\begin{aligned} 
& \eqref{eqn:to_bound1} \le \sum_{k=1}^{\chi(A^2)} \frac{\sum_{i=1}^n 1\{k^w(i) = k\}}{n} \mathbb{E}\Big[\max\Big\{1, \Big(\frac{J - 1}{J} \sum_{s=1}^n R_s 1\{k^w(s) = k\}\Big)^{-2\zeta_m}\Big\}\Big] \le  \underbrace{\mathcal{O}(\chi(A^2)/n)}_{(A)} \\ & + \underbrace{\mathcal{O}\Big(\sum_{k=1}^{\chi(A^2)} \Big(\frac{\sum_{i=1}^n 1\{k^w(i) = k\}}{n}\Big)^{1 - 2\zeta_m} \Big(\frac{J}{(J - 1) n_e}\Big)^{2 \zeta_m}\Big) + \mathcal{O}\Big(\frac{1}{n} \sum_{k=1}^{\chi(A^2)} (1 + \sum_{s \neq i} 1\{k^w(i) = k\})^{- 2 \zeta_m} \Big)}_{(B)} 
\end{aligned} 
$$ 
where $(B)$ correspond to cases where partitions $k^w(i)$ contain at least two elements (and bounded as in Equation \eqref{eqn:helper11111})\footnote{For the first component in $(A)$ we sum over all $i \in \{1, \cdots, n\}$ instead of $n-1$ elements since the last term is absorbed in $\mathcal{O}(1)$.}, and $(A)$ corresponds to partitions with only one element, whose overall number is at most $\chi(A^2)$ (since there are at most $\chi(A^2)$ many partitions, and for such partitions $\frac{\sum_{i=1}^n 1\{k^w(i) = k\}}{n} = 1/n$). For $(B)$ we write 
$$ 
\small 
\begin{aligned} 
(B) &\le \mathcal{O}\Big(\chi(A^2) \Big(\frac{1}{\chi(A^2)} \sum_{k=1}^{\chi(A^2)} \frac{\sum_{i=1}^n 1\{k^w(i) = k\}}{n}\Big)^{1 - 2\zeta_m} \Big(\frac{J}{(J - 1) n_e}\Big)^{2 \zeta_m}\Big) \\ &+ \mathcal{O}\Big(\chi(A^2) \frac{1}{n}  (\frac{1}{\chi(A^2)} \sum_{k=1}^{\chi(A^2)} \sum_{i=1}^n 1\{k(i) = k\})^{1 - 2 \zeta_m} \Big) \quad (\because x^{-2\zeta_m} \le x^{1 - 2\zeta_m} \text{ for } x \ge 1, \text{ concave } x^{1 - 2 \zeta_m}). 
\end{aligned} 
$$ 
It follows that $(B) \le \chi(A^2) \Big(\frac{J}{(J - 1) n_e}\Big)^{2 \zeta_m} + \mathcal{O}(\chi(A^2) n^{-2\zeta_m}) \quad (\because  \sum_{k=1}^{\chi(A^2)} \sum_{i=1}^n 1\{k(i) = k\} = n)$. 
 From \ref{lem:degreebound}, $\chi(A^2) \le 2 \mathcal{N}_n^2$, which completes the proof for the conditional mean after simple rearrangement (since the bound for $(A)$ follows directly from Lemma \ref{lem:degreebound}). The argument follows verbatim for $\mathcal{B}_n(A,Z)$, taking into account $1/\delta_n^2$, and omitted for brevity.

\vspace{-2mm} 

\subsubsection{Proof of Proposition \ref{thm:selection} } \label{sec:ee}

Denote $\mathbb{E}_\pi[\cdot]$ the expectation conditional on $\Big\{D_i = \pi(X_i) \Big\}_{i=1}^n$, let $R = (R_i)_{i=1}^n$. We have
\begin{equation} \label{eqn:S_pi}
\small 
\begin{aligned} 
\mathbb{E}_\pi\Big[r\Big(S_i, \sum_{k \in N_i} S_k, Z_i, |N_i|, \varepsilon_i\Big) \Big| A, Z\Big] = \mathbb{E}\Big[r\Big(S_i(\pi), \sum_{k \in N_i} S_k(\pi), Z_i, |N_i|, \varepsilon_i \Big) \Big| A, Z, R\Big],  
\end{aligned} 
\end{equation} 
where $S_i(\pi) = h_\theta\Big(\pi(X_i), \sum_{k \in N_i} \pi(X_k), Z_i, |N_i|, \nu_i\Big)$. It follows that Equation \eqref{eqn:S_pi} equals
$$
\small 
\begin{aligned} 
\sum_{s \in \{0, \cdots, |N_i|\}}  \underbrace{\mathbb{E}\Big[r(d, s, Z_i, |N_i|, \varepsilon_i)\Big| S_i(\pi) = d, \sum_{k \in N_i} S_k(\pi) = s, Z, A \Big]}_{(i)}  \times 
\underbrace{P\Big(S_i(\pi) = d, \sum_{k \in N_i} S_k(\pi) = s \Big| A, Z, R\Big)}_{(ii)} . 
\end{aligned} 
$$ 
Since $(\varepsilon_j)_{j=1}^n \perp \Big(Z, A, (\varepsilon_{D_j}, \nu_j, R_j)_{j = 1}^n\Big)$, I can show
$
(i) =  \mathbb{E}\Big[r(d, s, Z_i, |N_i|, \varepsilon_i)\Big| S_i = d, \sum_{k \in N_i} S_k = s, Z, A, R \Big] 
$.
Consider now $(ii)$. Observe that by indepedence and exogeneity of $(\nu_j)_{j=1}^n$, 
$$
\small 
(ii) = P\Big(S_i(\pi) = d \Big| A, Z, R \Big) \times \sum_{u_1, \cdots, u_{l}: \sum_v u_v = s} \prod_{k = 1}^{|N_i|}  P\Big(S_{N_i^{(k)}}(\pi) = u_k \Big|  A, Z, R \Big).
$$ 
Using exogeneity of $\nu_i$, I have 
$$
\small 
P\Big(S_i(\pi) = d \Big| A, Z, R\Big) = 
P\Big(S_i = d \Big| Z_i, |N_i|, D_i = \pi(X_i), \sum_{k \in N_i} D_k = \sum_{k \in N_i} \pi(X_k), Z_{k \in N_i}, Z_i \Big). 
$$ 
Similar reasoning also applies to neighbors' selected treatments, omitted for brevity.

\vspace{-2mm}

\subsubsection{Proof of Proposition \ref{prop:different2}} \label{app:main_last}

First, we show that  
$
\mathbb{E}\Big[\tilde{W}_n(\pi, m^c, e)\Big| A, Z, A', Z'\Big] = W_{A', Z'}(\pi). 
$ 
Let $L_i = L(Z_i, Z_{k \in N_i}, |N_i|)$ and similarly $L_i' = L'(Z_i, Z_{k \in N_i}, |N_i|)$. Let $T_i', Z_i', |N_i|'$ be the neighbors' exposure, covariates and number of neighbors of $i$ in the target population.  Following Lemma \ref{lem:doublerobust} below, by exogeneity of $(R_1, \cdots, R_n)$ (Assumption \ref{ass:quasi} (i,ii))
$$
\small 
\begin{aligned} 
R_i \mathbb{E}\Big[\frac{I_i(\pi)}{e_i(\pi)}\Big(Y_i - m_i^c(\pi)\Big) + m_i^c(\pi)\Big| A, Z, R_1, \cdots, R_n\Big] & = R_i \mathbb{E}\Big[r\Big((\pi(X_i), T_i(\pi), Z_i, |N_i|, \varepsilon_i\Big) \Big| A, Z\Big] \\ &= R_i m\Big(\pi(X_i), T_i(\pi), Z_i, |N_i|\Big). 
\end{aligned} 
$$ 
Therefore, it follows that 
$$
\small 
\begin{aligned} 
\mathbb{E}\Big[\tilde{W}_n(\pi, m^c, e)\Big|A, Z\Big] = \frac{1}{n} \sum_{i=1}^n \frac{L_i'}{L_i} m\Big(\pi(X_i), T_i(\pi), Z_i, |N_i|\Big) = \frac{1}{n} \sum_{i=1}^n m\Big(\pi(X_i'), T_i'(\pi). Z_i', |N_i|'\Big), 
\end{aligned} 
$$ 
The last equality follows by construction of $L_i', L_i$. $\mathcal{S}_n(A', Z') \subseteq \mathcal{S}_n(A,Z)$ guarantees that there are no individuals in the target population outside the sample population's support. 

Because 
 $\mathbb{E}[\tilde{W}_n(\pi, m^c, e)| A, Z, A', Z'] = W_{A',Z'}(\pi)$, the same argument of the proof of Theorem \ref{thm:thm2} holds, with the difference that the Lipschitz constant in the proof of Theorem \ref{thm:thm2} multiplies by $\bar{L}_{A, Z, n}$.

\end{appendices}

 \end{document}